\setlist{parsep=0pt,itemindent=0pt}
\definecolor{webgreen}{rgb}{0,0.5,0}
\theoremstyle{definition}
\newtheorem{prop}{Proposition}
\newtheorem{thm}{Theorem}
\newtheorem{cor}{Corollary}
\newtheorem{rmk}{Remark}
\newtheorem{exa}{Example}
\numberwithin{equation}{section}
\numberwithin{thm}{section}
\numberwithin{lemma}{section}
\numberwithin{prop}{section}
\numberwithin{cor}{section}
\numberwithin{rmk}{section}
\numberwithin{defn}{section}
\numberwithin{exa}{section}
\newcommand{\gen}[1]{\partial_{#1}}
\newcommand{\pr}[1]{\rm pr^{(#1)}}
\newcommand{\curl}[1]{ \left\{#1\right\} }
\newcommand{\lie}{\mathfrak g}
\newcommand{\nil}{\mathsf {h}}
\newcommand{\schr}{\mathsf {sch}}
\newcommand{\Schr}{\mathsf {Sch}}
\newcommand{\dx}{\partial_x}
\newcommand{\dt}{\partial_t}
\newcommand{\eqg}{\mathsf{G_E}}
\newcommand{\LO}{\mathsf L}
\newcommand{\lr}[1]{\langle{#1}\rangle}
\definecolor{darkolivegreen}{rgb}{0.333333, 0.419608, 0.1843140}
\DeclareMathOperator{\Sl}{sl}
\DeclareMathOperator{\SL}{SL}
\DeclareMathOperator{\Heis}{H}
\DeclareMathOperator{\Or}{O}
\DeclareMathOperator{\Orr}{o}
\DeclareMathOperator{\So}{so}
\DeclareMathOperator{\SO}{SO}
\DeclareMathOperator{\sech}{sech}
\begin{document}
\pagenumbering{arabic}
\clearpage
\thispagestyle{empty}

\title{\Large Equivalence and Symmetries for Linear Parabolic Equations and Applications Revisited}

\author{
F.~G\"ung\"{o}r\\ \small
Department of Mathematics, Faculty of Science and Letters,\\ \small Istanbul Technical University, 34469 Istanbul, Turkey \thanks{e-mail: gungorf@itu.edu.tr}}

\date{}

\maketitle

\tableofcontents


\begin{abstract}
A systematic and unified approach to transformations and symmetries of general second order linear parabolic partial differential equations is presented. Equivalence group is used to derive the Appell type transformations, specifically  Mehler's kernel in any dimension.   The complete symmetry group classification is re-performed. A new criterion which is necessary and sufficient for reduction to the standard heat equation by point transformations is established. A similar criterion is also valid for the equations to have a four- or six-dimensional symmetry group (nontrivial symmetry groups). In this situation, the basis elements are listed in terms of coefficients.  A number of illustrative examples are given.  In particular, some applications from the recent literature are re-examined in our new approach.  Applications include a comparative discussion of heat kernels based on group-invariant solutions and the idea of connecting Lie symmetries and classical integral transforms introduced by Craddock and his coworkers. Multidimensional parabolic PDEs of heat and Schr\"{o}dinger type are also considered.
\end{abstract}

\section{Introduction}
The purpose of this paper is to present a systematic and unified approach to linear parabolic equations of the form
\begin{equation}\label{LPE}
u_t=a(x,t)u_{xx}+b(x,t)u_x+c(x,t)u, \quad a\ne 0,\quad x\in \mathbb{R}, \quad t>0,
\end{equation}
where $a$, $b$, $c$ are arbitrary smooth functions, from the point of view of local equivalence and symmetry properties. The coefficients $a$ and $b$ are called diffusion and drift functions. This class arises as a fundamental model in many different areas of mathematics and physics such as diffusion processes, stochastic (Markov) processes, Brownian motion,  probability theory, financial mathematics, population genetics, quantum chaos and others. The celebrated Kolmogorov ($c=0$) and Fokker-Planck (FP) ($a_{xx}-b_x+c=0$, namely in divergence or conservative form) equations are special cases.

A large body of literature exists on applications of Lie symmetry methods to construct analytical solutions and solve initial, terminal and boundary value problems for Eqs. \eqref{LPE} with different coefficients. What happens with these works is to compute symmetries from scratch for each coefficient.  A more general approach to computation of symmetries can be found in   \cite{Bluman1980, SastriDunn1985, CicognaVitali1989, ShtelenStogny1989, CicognaVitali1990,  Hill1992, Gaeta1994}. A criterion that guarantees existence of 4- and 6-dimensional symmetry groups for the Fokker-Planck equations was given in \cite{ShtelenStogny1989}.

On the other hand, a related question is to ask when an equation in the class \eqref{LPE} is equivalent to the standard heat  equation under the invertible point transformations. There have been some attempts \cite{Ricciardi1976, Bluman1980, Ibragimov2002, JohnpillaiMahomed2001} to answer this question. The last two papers rely on the notion of differential invariants or semi-invariants under the action of the full equivalence group or subgroups using infinitesimal techniques. In Ref. \cite{Ibragimov2002}, a second order semi-invariant (with $a,b,c$ defined by a change of sign)
$$K=\frac{1}{2}b^2a_x+(a_t+a a_{xx}-a_x^2)b+(aa_x-ab)b_x-ab_t-a^2b_{xx}+2a^2c_x$$
was derived and  commented that the condition $K=0$ could possibly guarantee the equivalence.
Later, it was shown in \cite{JohnpillaiMahomed2001} that fulfilment of a complicated singular invariant equation expressed in terms of $K, a$ and their derivatives up to order six, (which we don't reproduce here)   is both necessary and sufficient for the existence of a point transformation mapping \eqref{LPE} to the heat equation. A work extending point transformations to nonlocal ones was carried out in \cite{BlumanShtelen2004}.

The main motivation of the present paper is to lend a fresh perspective to these two fundamental issues: computation of symmetries for the general coefficients and also the identification of group structure of the symmetry  group  when nontrivial symmetries (4- and 6-dimensional) are allowed and establishing a new criterion of local equivalence to the heat equation using only global approach rather than infinitesimal one. The criterion is expressed in terms of a second order differential semi-invariant in case of spatially varying coefficients (admitting at least one additional symmetry other than homogeneity and linear superposition).

The complete symmetry classification of \eqref{LPE} was given by Lie himself in \cite{Lie1881} as part of a classification of all second order linear PDEs in two dimensions. The classification was redone  by Ovsiannikov  \cite{Ovsiannikov1982}. Here we do the same from a slightly different point of view.  Conservation laws and potential symmetries of \eqref{LPE} was investigated in \cite{PopovychKunzingerIvanova2008a}. A local equivalence problem for the class \eqref{LPE} under a contact transformation pseudo-group was addressed in \cite{Morozov2007}.

We organise this paper as follows. In Section \ref{Section2} we discuss equivalence groups for linear parabolic equations in the general form \eqref{LPE} on the real line and for those in $n$-dimensions in the potential form \eqref{Heatn}. We then use them to derive a necessary and sufficient criterion for the reducibility to the standard heat equations (zero potential). We provide a general transformation formula and illustrate with a number of examples. Some of them have already appeared in the literature and we re-examine them in our framework.  We also re-construct Appell type transformations and heat kernels for equations with three different potentials using equivalence transformations. Section \ref{Section3} is devoted to computation of the nontrivial symmetry algebras  and identification of their Lie-algebraic structure. For a comparison we apply our  results to  some special cases chosen from the literature. Specifically, we show how our approach can be effectively used to deal with any equations with at least a nontrivial symmetry in a unified way.  In Section \ref{Section4} we study group-invariant and fundamental solutions and give a discussion of initial and boundary-value problems.  Finally, Section \ref{Section5} gives a brief summary of the results discussed throughout.

\section{Equivalence group and transformation to the heat equation}\label{Section2}

\subsection{One-dimensional case}
As we are interested in transformations preserving only the differential form of the equation we present the following proposition as our main ingredient.

\begin{prop}\label{pro1}
The equivalence group of $\eqg$ of Eq.  \eqref{LPE} is given by
\begin{equation}\label{EqTr}
\tilde{t}=T(t),  \quad \tilde{x}=X(x,t), \quad \tilde{u}=\theta(x,t)u+\psi(x,t),
\end{equation}
where $T$, $X$ and $\theta$ are arbitrary smooth functions of their arguments and satisfy $\dot{T}, X_x, \theta \ne 0$. The new coefficients transform by
\begin{equation}\label{newcoeff}
\begin{split}
\tilde{a}&=\frac{X_x^2}{\dot{T}}a, \\
\tilde{b}&=\frac{X_x}{\dot{T}}\left[b-2a \frac{\theta_x}{\theta}+a\frac{X_{xx}}{X_x}-\frac{X_t}{X_x}\right],\\
\tilde{c}&=\frac{1}{\dot{T}}\left[c-b \frac{\theta_x}{\theta}+2a \left(\frac{\theta_x}{\theta}\right)^2-a\frac{\theta_{xx}}{\theta}+\frac{\theta_t}{\theta}\right],\\
\end{split}
\end{equation}
and $\psi$ satisfies the homogeneity preserving condition $\LO{(\psi/\theta)}=0$, where $\LO$ is the linear operator $\LO=\dt-a\dx^2-b\dx-c$. The  explicit form of this condition is
\begin{equation}
c \psi -b \psi \frac{\theta_x}{\theta}+2a \psi \left(\frac{\theta_x}{\theta}\right)^2+b \psi_x-2a \psi_x \frac{\theta_x}{\theta}-a \psi \frac{\theta_{xx}}{\theta}+a \psi_{xx}+\psi \frac{\theta_t}{\theta}-\psi_t=0.
\end{equation}
This condition identically holds for $\psi=0$.
\end{prop}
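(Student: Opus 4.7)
The plan is to derive the equivalence group $\eqg$ by starting from the most general invertible point change of variables and reading off both its admissible form and the induced action on the coefficients. I would begin with an arbitrary local transformation $\tilde t=T(x,t)$, $\tilde x=X(x,t)$, $\tilde u=U(x,t,u)$ of nonvanishing Jacobian. Since \eqref{LPE} is linear and the transformation must carry its solution space bijectively onto the solution space of the transformed equation, $U$ is forced to be affine in $u$, i.e.\ $\tilde u=\theta(x,t)u+\psi(x,t)$ with $\theta\ne 0$ so that the $u$-map is invertible.

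The key reduction is to show that $T_x=0$. I would express $u_x$, $u_t$, $u_{xx}$ by the chain rule in terms of the derivatives $\tilde u_{\tilde x}$, $\tilde u_{\tilde t}$, $\tilde u_{\tilde x\tilde x}$, $\tilde u_{\tilde x\tilde t}$, $\tilde u_{\tilde t\tilde t}$ together with the partials of $T$, $X$, $\theta$, $\psi$, substitute into \eqref{LPE}, and require the resulting PDE in $\tilde u$ to again be of the form \eqref{LPE}. Because the target equation contains no $\tilde u_{\tilde t\tilde t}$ or $\tilde u_{\tilde x\tilde t}$ terms, their coefficients in the transformed equation must vanish identically; a short computation shows these are nonzero multiples of $aT_x^2$ and $aT_xX_x$ respectively. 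Since $a\ne 0$ and $X_x\ne 0$, this forces $T_x=0$, hence $T=T(t)$, and $\dot T\ne 0$ then follows from invertibility of the time map.

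With $T=T(t)$ in place, the chain rule collapses and matching the coefficients of $\tilde u_{\tilde x\tilde x}$, $\tilde u_{\tilde x}$, and $\tilde u$ on both sides of the transformed equation yields precisely the three formulas \eqref{newcoeff} for $\tilde a$, $\tilde b$, $\tilde c$. The remaining $u$-independent part of the equation collects only contributions from $\psi$ and $\theta$; demanding that the transformed equation be homogeneous produces a linear PDE in $\psi$ which, after dividing through by $\theta$, reorganizes exactly as $\dt(\psi/\theta)-a\dx^2(\psi/\theta)-b\dx(\psi/\theta)-c(\psi/\theta)=0$, i.e.\ $\LO(\psi/\theta)=0$, and $\psi\equiv 0$ trivially satisfies this. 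The converse direction, that every transformation of the stated form does carry \eqref{LPE} into a parabolic equation with coefficients \eqref{newcoeff}, is then a direct verification based on the same identities.

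The main obstacle is not conceptual but computational: the bookkeeping in the chain-rule expansion and the regrouping of terms according to the derivatives of $\tilde u$ is lengthy. However, once the reduction $T_x=0$ has been carried out, the identification of each coefficient and of the residual $\psi$-equation is entirely algorithmic, so the heart of the argument lies in the forced decoupling of the time variable.
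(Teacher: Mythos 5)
Your derivation is correct and is essentially the argument the paper relies on: the paper states Proposition \ref{pro1} without proof, but carries out exactly this chain-rule computation explicitly for the $n$-dimensional potential case in Section \ref{Sectionheatn}, where the vanishing of the coefficients of $\tilde u_{\tilde t\tilde t}$ and $\tilde u_{\tilde x\tilde t}$ forces $X_{0,\rho}=0$ (your $T_x=0$) and linearity forces $U=\theta u+\psi$; I have checked that matching the coefficients of $u_{xx}$, $u_x$, $u$ and the $u$-free part reproduces \eqref{newcoeff} and that the residual $\psi$-equation is precisely $\theta\,\LO{(\psi/\theta)}=0$. The one small completeness point is that you assume from the outset that $T$ and $X$ do not depend on $u$; to claim you have found \emph{the} equivalence group you should start from $T(x,t,u)$, $X(x,t,u)$ and derive $T_u=X_u=0$ from the same second-order coefficient conditions (as the paper does via $X_{\mu u}=0$ in the $n$-dimensional derivation), which is a one-line addition to your scheme.
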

Throughout this article we shall assume that the class \eqref{LPE} has at least one   symmetry other than homogeneity in $u$ (the multiplication of $u$ by a nonzero constant) and superposition of solutions. Later we shall see that the presence of such a symmetry transforms our equation into one with coefficients depending on space variable only
\begin{equation}\label{mainLPE}
u_t=a(x)u_{xx}+b(x)u_x+c(x)u, \quad a\ne 0.
\end{equation}
Such an equation is sometimes called time-invariant or time autonomous.

For  convenience we introduce two functions (following the notation adopted in \cite{Hill1992})
\begin{equation}\label{IJ}
I(x)=\int^{x}\frac{d\hat{x}}{\sqrt{a(\hat{x})}},  \quad J(x)=\frac{1}{\sqrt{a}}[\frac{a'(x)}{2}-b(x)]=\frac{d}{dx}(\sqrt{a(x)})-\frac{b(x)}{\sqrt{a(x)}}.
\end{equation}
We can set $a$ to 1 and $b$ to zero by choosing $X$ and $\theta$ as
$$X=\varepsilon \sqrt{\dot{T}}\int\frac{dx}{\sqrt{a}}+\delta(t)=\varepsilon \sqrt{\dot{T}}I(x)+\delta(t),\quad \dot{T}>0$$
$$\frac{\theta_x}{\theta}=\frac{b}{2a}-\frac{1}{4}\frac{a'}{a}-\frac{1}{8}\frac{\ddot{T}}{\dot{T}}(I^2)'-
\frac{\dot{\delta}}{2\sqrt{\dot{T}}}I'=-\frac{1}{2}\frac{J}{\sqrt{a}}-\frac{1}{8}\frac{\ddot{T}}{\dot{T}}(I^2)'-
\frac{\dot{\delta}}{2\sqrt{\dot{T}}}I',$$
where $\varepsilon=\pm 1$, $\delta$ is an arbitrary function. Integrating the second equation above, $\theta$ has the form
\begin{equation}\label{b2zero}
\theta(x,t)=\nu(t)\exp\left[-\frac{1}{2}\int\frac{J}{\sqrt{a}}dx-\frac{1}{8}\frac{\ddot{T}}{\dot{T}}I^2-
\frac{\dot{\delta}}{2\sqrt{\dot{T}}}I\right],
\end{equation}
where $\nu(t)\ne 0$ is another arbitrary function. Under the special case $T=t$, $\delta=0$ the initial equation \eqref{mainLPE} will reduce to the canonical (or potential) form
\begin{equation}\label{potentialc}
\tilde{u}_{\tilde{t}}-\tilde{u}_{\tilde{x}\tilde{x}}
=\left(\frac{1}{2}\sqrt{a}J'(x)-\frac{1}{4}J^2(x)+c\right)\tilde{u}=\tilde{c}(\tilde{x})\tilde{u}.
\end{equation}
If $b=b(x,t)$ the transformed coefficient $\tilde{c}$ is given by
$$\tilde{c}=\frac{1}{2}\sqrt{a}J_x-\frac{1}{4}J^2-\frac{1}{2}\int\frac{b_t}{a}dx+c,$$
where $J$ is redefined by
\begin{equation}\label{jxt}
  J(x,t)=\frac{1}{\sqrt{a}}[\frac{a'(x)}{2}-b(x,t)].
\end{equation}
We call the function
\begin{equation}\label{inv}
K(x)=\frac{1}{2}\sqrt{a}J'(x)-\frac{1}{4}J^2(x)+c
\end{equation}
an invariant (more precisely a semi-invariant) of Eq. \eqref{mainLPE}. In case $a=a(x)$, $b=b(x,t)$, $c=c(x,t)$ we replace it by
$$K(x,t)=\frac{1}{2}\sqrt{a}J_x-\frac{1}{4}J^2+c,$$ where $J$ is given by \eqref{jxt}.

$K$ is not changed under the change of dependent variable only (leaving $x$ and $t$ unaltered). $a$ is also a semi-invariant. Indeed, if $\theta$ is eliminated between the relations
$$\tilde{a}=a, \quad \tilde{b}=b-2a \frac{\theta_x}{\theta}, \quad \tilde{c}=c-b\frac{\theta_x}{\theta}+2a\left(\frac{\theta_x}{\theta}\right)^2 -a \frac{\theta_{xx}}{\theta},$$
it turns out $\tilde{K}=K$. The semi-invariant $K$ will play a role in determining locally equivalent equations. One can see that two equations from the initial class can be transformed into each other by a change of the dependent variable $\tilde{u}=\theta(x,t)u$, $\theta\ne 0$ if and only if the function $K$ is the same for both equations. We note that  our semi-invariant $K$ differentiated once with respect to $x$ actually coincides with that of Ref. \cite{Ibragimov2002} modulo trivial invariance of $a$ and its derivatives when pure spatial dependence on $x$ of $a$ is present while $b$ and $c$ can depend on both $x$ and $t$.

We now turn to our canonical equation in a form with a more general potential
\begin{equation}\label{potential}
u_t-u_{xx}+V(x,t)u=0.
\end{equation}
We restrict the equivalence transformations to find the subgroup of $\eqg$ preserving the relations $\tilde{a}=a=1$, $\tilde{b}=b=0$ and obtain the following proposition.
\begin{prop}\label{pro2}
The equivalence group $\mathsf{\tilde{G}_E}$ of Eq. \eqref{potential} consists of the transformations
\begin{equation}\label{normalEQT}
\begin{split}
\tilde{t}=T(t),  \quad \tilde{x}=\sqrt{\dot{T}}x+\delta(t), \quad \tilde{u}=\nu(t)\exp\left[-\frac{1}{8}\frac{\ddot{T}}{\dot{T}}x^2-
\frac{\dot{\delta}}{2\sqrt{\dot{T}}}x\right]u,\\
\tilde{V}=\frac{1}{\dot{T}}\left[V+\frac{1}{8}\curl{T;t}x^2+\frac{1}{2}\left(\frac{\ddot{\delta}}{\sqrt{\dot{T}}}-
\frac{\ddot{T}\dot{\delta}}{\dot{T}^{3/2}}\right)x-\frac{\dot{\delta}^2}{4\dot{T}}-\frac{\ddot{T}}{4\dot{T}}-\frac{\dot{\nu}}{\nu}\right],
\end{split}
\end{equation}
where $\curl{T;t}$ is the Schwarzian derivative of $T$ with respect to $t$ defined by
$$\curl{T;t}=\frac{\dddot{T}}{\dot{T}}-\frac{3}{2}\Bigl(\frac{\ddot{T}}{\dot{T}}\Bigr)^2.$$
\end{prop}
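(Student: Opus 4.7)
The plan is to obtain Proposition 2.2 directly as a restriction of Proposition 2.1: impose the two conditions $\tilde a=a=1$ and $\tilde b=b=0$, solve for the free functions $X,\theta$ compatible with these constraints, and then read off the induced transformation law for $c$ (and hence for $V=-c$).

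First I would use the first line of \eqref{newcoeff}. With $a=\tilde a=1$ this forces $X_x^2=\dot T$, so $X_x=\varepsilon\sqrt{\dot T}$ with $\varepsilon=\pm 1$ and $\dot T>0$. In particular $X_x$ depends on $t$ only, hence $X_{xx}=0$ and
\[
X(x,t)=\varepsilon\sqrt{\dot T}\,x+\delta(t),\qquad X_t=\varepsilon\,\frac{\ddot T}{2\sqrt{\dot T}}\,x+\dot\delta.
\]
Taking $\varepsilon=+1$ (the sign merely reflects $x$), I would then plug into the second line of \eqref{newcoeff} with $a=1$, $b=\tilde b=0$, $X_{xx}=0$, to get
\[
\frac{\theta_x}{\theta}=-\frac{1}{2}\frac{X_t}{X_x}=-\frac{\ddot T}{4\dot T}x-\frac{\dot\delta}{2\sqrt{\dot T}}.
\]
Integrating in $x$ produces $\theta$ in \eqref{normalEQT} up to an arbitrary $\nu(t)\ne 0$.

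Next I would substitute these choices of $X$ and $\theta$ into the third line of \eqref{newcoeff} to compute $\tilde c$, and then set $V=-c$, $\tilde V=-\tilde c$. I would use the identity $\theta_{xx}/\theta=(\ln\theta)_{xx}+(\theta_x/\theta)^2$ to rewrite
\[
2\Bigl(\tfrac{\theta_x}{\theta}\Bigr)^2-\tfrac{\theta_{xx}}{\theta}+\tfrac{\theta_t}{\theta}=\Bigl(\tfrac{\theta_x}{\theta}\Bigr)^2-(\ln\theta)_{xx}+(\ln\theta)_t,
\]
so that I only need to square a linear-in-$x$ expression and differentiate a quadratic-in-$x$ expression in $t$. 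I would then collect the result by powers of $x$.

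The one step that needs care is the $x^2$ coefficient: from $(\theta_x/\theta)^2$ I get $\ddot T^2/(16\dot T^2)$, and from $(\ln\theta)_t$ I get $-\tfrac18\tfrac{d}{dt}(\ddot T/\dot T)=-\tfrac18\dddot T/\dot T+\tfrac18\ddot T^2/\dot T^2$. Adding these,
\[
\frac{3\ddot T^2}{16\dot T^2}-\frac{\dddot T}{8\dot T}=-\frac{1}{8}\Bigl(\frac{\dddot T}{\dot T}-\frac{3}{2}\Bigl(\frac{\ddot T}{\dot T}\Bigr)^2\Bigr)=-\frac{1}{8}\{T;t\},
\]
which is where the Schwarzian derivative enters. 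The linear-in-$x$ coefficient combines similarly into $-\tfrac12(\ddot\delta/\sqrt{\dot T}-\ddot T\dot\delta/\dot T^{3/2})$, while the $x$-independent part collects $\dot\delta^2/(4\dot T)$, $\dot\nu/\nu$ and the surviving $\ddot T/(4\dot T)$ from $-(\ln\theta)_{xx}$. Multiplying by $1/\dot T$ and flipping sign yields exactly the $\tilde V$ formula in \eqref{normalEQT}. The only real obstacle is this bookkeeping for the $x^2$ term and the recognition of $\{T;t\}$; everything else is routine substitution, and the condition on $\psi$ is vacuous here since we keep the homogeneous equation.
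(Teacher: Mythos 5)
Your proposal is correct and follows exactly the route the paper intends: Proposition 2.2 is obtained by restricting the equivalence group of Proposition 2.1 to the subgroup preserving $\tilde a=a=1$, $\tilde b=b=0$, and your bookkeeping of the $x^2$, $x$ and constant terms (including the emergence of the Schwarzian and the sign flip from $c=-V$) reproduces the stated $\tilde V$ formula. The paper omits this computation entirely, so your write-up simply supplies the verification it leaves implicit.
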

Choosing $\tilde{V}=0$ we have the following.
\begin{cor}
The most general equation of the form \eqref{potential} which can be transformed to the standard heat equation should have the form
$${V}=q_2(t)x^2+q_1(t)x+q_0(t).$$
\end{cor}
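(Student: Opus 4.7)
The plan is to invoke Proposition~\ref{pro2} directly. Setting $\tilde{V}=0$ in the transformation formula \eqref{normalEQT} and solving for $V$ yields
\begin{equation*}
V(x,t) = -\tfrac{1}{8}\curl{T;t}\,x^{2} - \tfrac{1}{2}\Bigl(\tfrac{\ddot{\delta}}{\sqrt{\dot{T}}}-\tfrac{\ddot{T}\dot{\delta}}{\dot{T}^{3/2}}\Bigr)x + \Bigl(\tfrac{\dot{\delta}^{2}}{4\dot{T}}+\tfrac{\ddot{T}}{4\dot{T}}+\tfrac{\dot{\nu}}{\nu}\Bigr),
\end{equation*}
which is manifestly a polynomial of degree at most two in $x$ whose coefficients depend only on $t$ through the group parameters $T,\delta,\nu$. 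This already settles necessity: if $V$ is carried to the zero potential by some element of $\mathsf{\tilde{G}_E}$, reading off the three coefficients presents $V$ in the form $q_{2}(t)x^{2}+q_{1}(t)x+q_{0}(t)$, and we may set
\begin{equation*}
q_{2}=-\tfrac{1}{8}\curl{T;t},\qquad q_{1}=-\tfrac{1}{2}\Bigl(\tfrac{\ddot{\delta}}{\sqrt{\dot{T}}}-\tfrac{\ddot{T}\dot{\delta}}{\dot{T}^{3/2}}\Bigr),\qquad q_{0}=\tfrac{\dot{\delta}^{2}}{4\dot{T}}+\tfrac{\ddot{T}}{4\dot{T}}+\tfrac{\dot{\nu}}{\nu}.
\end{equation*}

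For the converse, given any $V=q_{2}(t)x^{2}+q_{1}(t)x+q_{0}(t)$, one must exhibit parameters $(T,\delta,\nu)$ realizing these three coefficients. I would proceed in triangular fashion. First, solve $\curl{T;t}=-8q_{2}(t)$ for $T(t)$; this third order Schwarzian ODE is linearized by the classical substitution $T=\phi_{1}/\phi_{2}$ to a linear second order equation for $\phi$, hence is locally solvable with $\dot{T}>0$. With $T$ in hand, the relation prescribing $q_{1}$ becomes a linear second order ODE for $\delta(t)$, solvable by two quadratures. Finally, the relation prescribing $q_{0}$ determines $\dot{\nu}/\nu$ as an explicit function of $t$, so $\nu$ is recovered by a single exponential quadrature.

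No real obstacle is anticipated. Because Proposition~\ref{pro2} has already encoded the full action of $\mathsf{\tilde{G}_E}$ on the potential $V$, the polynomial form of $V$ drops out by inspection, and the three relations above are genuinely decoupled in the sense that $q_{2}$ fixes $T$, then $q_{1}$ fixes $\delta$, and finally $q_{0}$ fixes $\nu$, with no compatibility condition arising at any stage. The only mildly delicate point is the local solvability of the Schwarzian equation, which is standard.
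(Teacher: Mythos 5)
Your argument is correct and follows essentially the same route as the paper: the corollary itself is obtained there by setting $\tilde{V}=0$ in \eqref{normalEQT} and reading off the quadratic form of $V$, and your converse direction (Schwarzian equation for $T$, then a linear second-order ODE for $\delta$, then a quadrature for $\nu$) is precisely the triangular construction the paper carries out immediately afterwards in \eqref{coeffconds}--\eqref{nu}.
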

This common fact will be our starting point  in search of a practical test for transformability to the heat equation.
In order to transform an equation of the form
\begin{equation}\label{HeatplusQuad}
u_t-u_{xx}+(q_2(t)x^2+q_1(t)x+q_0(t))u=0
\end{equation}
to $u_t-u_{xx}=0$
we simply use the freedom left in the transformations \eqref{normalEQT}. We fix the coefficients of the quadratic potential as
\begin{equation}\label{coeffconds}
-\frac{1}{8}\curl{T;t}=q_2(t),  \quad -\frac{1}{2}\left(\frac{\ddot{\delta}}{\sqrt{\dot{T}}}-
\frac{\ddot{T}\dot{\delta}}{\dot{T}^{3/2}}\right)=q_1(t),  \quad \frac{\dot{\delta}^2}{4\dot{T}}+\frac{\ddot{T}}{4\dot{T}}+\frac{\dot{\nu}}{\nu}=q_0(t).
\end{equation}
and solve for $T$, $\delta$ and $\nu$.
The first equation
\begin{equation}\label{Schwarz}
\curl{T;t}=-8q_2(t)
\end{equation}
is a Schwarzian equation for $T(t)$ which is reduced to the linear second order equation by the ratio transformation
$$T(t)=\frac{k_1 \Omega_1+ k_2 \Omega_2}{k_3 \Omega_1+ k_4 \Omega_2}, \quad k_1k_4-k_2k_3\ne 0,$$
where $\Omega_1, \Omega_2$ are two independent solutions of the linear equation
\begin{equation}
\ddot{\Omega}-4q_2 \Omega=0.
\end{equation}
Note that the same Schwarzian equation can be transformed to a Riccati equation of the form
$$\dot{\varrho}-\frac{1}{2}\varrho^2=-8q_2(t).$$
by the transformation $\varrho=\ddot{T}/\dot{T}$. The remaining equations can be expressed in terms of a new function $\omega$ defined by $\delta(t)=\sqrt{\dot{T}}\omega(t)$ in a more compact form. This implies  $\tilde{x}=\sqrt{\dot{T}}(x+\omega(t))$ in \eqref{pro2}. The second equation is transformed to
$$\ddot{\omega}+\frac{1}{2}\curl{T;t} \omega=-2 q_1(t),$$ which, by virtue of the first Eq. in \eqref{potential}, becomes \begin{equation}\label{omegaEq}
\ddot{\omega}-4q_2(t)\omega=-2 q_1(t).
\end{equation}
Using the relation $\delta(t)=\sqrt{\dot{T}}\omega(t)$ and integrating the third equation of \eqref{coeffconds} we find a formula for $\nu(t)$ in terms of $T$ and $\omega$ in the form
\begin{equation}\label{nu}
\nu(t)=\nu_0^{-1}\dot{T}^{-1/4}\exp\curl{-\frac{\ddot{T}}{8\dot{T}}\omega^2-\frac{1}{4}\int[4q_2 \omega^2+\dot{\omega}^2-4q_0]dt},
\end{equation}
where $\nu_0\ne 0$ is arbitrary integration constant.
We conclude that the functions $T$, $\omega$, thereby $\delta$ and $\nu$ are completely determined by fixing the coefficients of the quadratic potential. We can state the point transformations achieving the reduction to the heat equation $\tilde{u}_{\tilde{t}}-\tilde{u}_{\tilde{x}\tilde{x}}=0$ in the following form
\begin{equation}\label{ToHeatTr}
\begin{split}
&\tilde{t}=T(t),  \quad \tilde{x}=\sqrt{\dot{T}}(x+\omega(t)), \\
&u=\nu_0 \dot{T}^{1/4}\exp \curl{\frac{\ddot{T}}{8\dot{T}}x^2+\frac{1}{2}(\dot{\omega}+\frac{\ddot{T}}{2\dot{T}}\omega)x+\frac{\ddot{T}}{8\dot{T}}\omega^2
+\frac{1}{4}\int[4q_2 \omega^2+\dot{\omega}^2-4q_0]dt}\tilde{u}.
\end{split}
\end{equation}
In other words, if $\tilde{u}$ solves the heat equation then so  $u$ will solve Eq.  \eqref{HeatplusQuad}.
We note that this transformation depends on six arbitrary constants, three of which comes form the solution of the Schwarzian equation, two from   \eqref{omegaEq} and one from $\nu_0$.

In summary we have established a practical test which will ensure  local equivalence of \eqref{mainLPE} to the heat equation.
\begin{prop}
Any equation from the class \eqref{mainLPE} can be transformed to the heat equation if and only if the semi-invariant equals a quadratic polynomial in $I$
\begin{equation}\label{Riccati}
K(x)=\frac{1}{2}\sqrt{a}J'-\frac{1}{4}J^2+c=q_2 I^2+q_1 I+q_0,
\end{equation}
where $q_2$, $q_1$ and $q_0$ are some constants. Furthermore,  symmetry group of the initial equation can be conjugated to the one of the heat equation producing isomorphic symmetry groups.
\end{prop}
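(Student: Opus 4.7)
The plan is to exploit the two-step reduction suggested by the preceding development: first use Proposition \ref{pro1} to bring \eqref{mainLPE} into potential form with potential equal to the semi-invariant $K$, then use Proposition \ref{pro2} and its Corollary to decide when that potential form is equivalent to the heat equation. Specifically, setting $T=t$, $\delta=0$ in the formulas \eqref{b2zero}–\eqref{potentialc} produces an equivalence transformation mapping \eqref{mainLPE} to $\tilde u_{\tilde t}-\tilde u_{\tilde x\tilde x}+K(x)\tilde u=0$ with $\tilde x=I(x)$. So, up to a change of dependent variable and the fixed diffeomorphism $\tilde x=I(x)$, our equation is the potential equation with $V(\tilde x)=K(x(\tilde x))$.

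For the ``if'' direction I would assume $K(x)=q_2 I^2+q_1 I+q_0$, so that in the potential form the potential equals the quadratic polynomial $q_2\tilde x^2+q_1\tilde x+q_0$. Then I would invoke the explicit transformation \eqref{ToHeatTr}, where $T(t)$ is determined from the Schwarzian equation \eqref{Schwarz}, $\omega(t)$ from the linear equation \eqref{omegaEq}, and $\nu(t)$ by the quadrature \eqref{nu}. Composing this with the first reduction yields a concrete equivalence sending \eqref{mainLPE} to the heat equation.

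For the ``only if'' direction I would argue by pulling back. Suppose some element of $\eqg$ sends \eqref{mainLPE} to the heat equation. Composing with the first reduction gives a map from the potential equation $\tilde u_{\tilde t}-\tilde u_{\tilde x\tilde x}+K(I^{-1}(\tilde x))\tilde u=0$ to the heat equation. This composed map preserves the normalisation $\tilde a=1$, $\tilde b=0$, hence lies in the subgroup $\mathsf{\tilde G_E}$ described in Proposition \ref{pro2}. Reading off $\tilde V=0$ in \eqref{normalEQT} and inverting gives $V$ as a quadratic polynomial in $\tilde x$ with time-dependent coefficients; but since $V=K(I^{-1}(\tilde x))$ is independent of $t$, the coefficients must be constants. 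The subtle point to address here is to make sure the intermediate transformation we prepended is truly in $\eqg$ (so composition stays in $\eqg$) and that the image really lies in the potential-preserving subgroup; this amounts to tracking how $a$ and $b$ evolve through the composition, which is routine from \eqref{newcoeff}. This is the one step I would write out carefully, as the rest is substitution.

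For the final assertion about symmetries, I would note that any equivalence transformation $\Phi\in\eqg$ acts on the infinitesimal symmetry algebra by pushforward $v\mapsto \Phi_*v$, which is a Lie algebra isomorphism since $\Phi$ is a diffeomorphism on the $(x,t,u)$-space (fibered and linear in $u$, so respecting the class of admissible symmetry vector fields). Hence if the constructed $\Phi$ sends \eqref{mainLPE} to the heat equation, conjugation by $\Phi$ identifies their symmetry groups, which are therefore isomorphic as Lie groups. No additional work beyond citing this general fact is needed, so this last paragraph of the proof is short.
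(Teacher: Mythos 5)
Your proposal is correct and follows essentially the same route as the paper: the paper's own justification is precisely the preceding development (reduction to potential form via Proposition \ref{pro1} so that the potential becomes $K$ as a function of $\tilde{x}=I(x)$, the Corollary to Proposition \ref{pro2} forcing a quadratic potential, and the explicit transformations \eqref{Schwarz}--\eqref{ToHeatTr} for the converse). Your added care in the ``only if'' direction --- checking that the composed map lands in the potential-preserving subgroup and that time-independence of $K$ forces the coefficients $q_i$ to be constants --- makes explicit what the paper leaves implicit, but it is the same argument.
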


\begin{rmk}
Condition \eqref{Riccati} is a  Riccati equation  for $J(x)$ and can be integrated  in some special cases. Changing the independent variable to $I$, the Riccati equation can be expressed as
\begin{equation}\label{Riccati2}
\frac{1}{2}\frac{dJ}{dI}-\frac{1}{4}J^2+c=q_2 I^2+q_1 I+q_0.
\end{equation}
The transformation $J=-2\pi'(I)/\pi(I)$ reduces \eqref{Riccati2} to
a second order linear ODE for $\pi$, which is actually a parabolic cylinder equation.
Given $a(x)$ and $c(x)$, this allows us to find classes that can be transformed to the heat equation. Hence,   $b(x)$ is extracted from $J$ so that the obtained equation is reducible to the heat equation. In particular,  for the Fokker-Planck equation we have $a=p(x)$, $b=p'(x)+q(x)$, $c(x)=q'(x)$, $J(x)=-[(\sqrt{p})'+q/\sqrt{p}]$. Furthermore, for $p=1$ the above condition becomes
$$q'-\frac{1}{2}q^2=c_2x^2+c_1x+c_0,$$ where $c_2$, $c_1$, $c_0$ are constants. The class of Fokker-Planck equations with $p=1$, $q(x)$ arbitrary was investigated in \cite{BlumanCole1974}.
\end{rmk}

For an equation in the class \eqref{mainLPE} with the coefficients given, a simple strategy to decide about reducibility to the heat equation consists of computing $K(x)$ and inspecting if it can be written as a quadratic polynomial in $I$ (in general, time dependent  coefficients can be allowed). When this is the case the corresponding transformation is given by
\begin{equation}\label{ToHeatTr2}
\begin{split}
&\tilde{t}=T(t),  \quad \tilde{x}=\sqrt{\dot{T}}(I+\omega(t)), \\
&\tilde{u}=\nu(t)\exp\left[-\frac{1}{2}\int\frac{J}{\sqrt{a}}dx-\frac{1}{8}\frac{\ddot{T}}{\dot{T}}I^2-
\frac{1}{2}(\dot{\omega}+\frac{\ddot{T}}{2\dot{T}}\omega)I\right]u,
\end{split}
\end{equation}
where $T$, $\omega$ are solutions to \eqref{Schwarz}, \eqref{omegaEq} and $\nu$ is as before (formula \eqref{nu}).

As an illustration we  analyse the subclass where only the diffusion coefficient is present.
\begin{exa}
$$u_t=a(x)u_{xx}.$$
\end{exa}
In this case $b=c=0$, $J=a'/(2\sqrt{a})=(\sqrt{a})'$,
$$K=\frac{1}{2}A A''-\frac{1}{4}A'^2,  \quad A=\sqrt{a}.$$ We suppose that $K$ is equal to a constant so that
$$A A''-\frac{1}{2}A'^2=m.$$ We differentiate it to get
$A'''=0$ which has the general solution $A(x)=a_2 x^2+a_1 x+a_0$. We have found a polynomial diffusion $a(x)=(a_2 x^2+a_1 x+a_0)^2$ which is transformable to the heat equation.  It includes the powers $a(x)=\alpha x^4$, $a(x)=\beta x^2$.

The special case $a(x)=(1+k^2x^2)^2$, $k\ne 0$ arises in a study of Brownian motion. We have
$$I(x)=\frac{1}{k}\arctan (kx), \quad J(x)=2k^2x,  \quad K(x)=k^2.$$ From \eqref{Riccati} we have $q_2=q_1=0$, $q_0=-k^2$ so from \eqref{ToHeatTr2} the corresponding transformation is found to be
$$\tilde{t}=t,  \quad \tilde{x}=\frac{1}{k}\arctan (kx), \quad \tilde{u}=e^{-k^2t}(1+k^2x^2)^{-1/2}u.$$
This case was discussed in \cite{Winternitz1989} as an example of identifying isomorphic symmetry algebras of PDEs.

For the power diffusion $a(x)=\sigma x^{2\gamma}$, $\gamma\ne 0$ we have
$$I=\frac{x^{1-\gamma}}{\sqrt{\sigma}(1-\gamma)}, \quad J=\gamma  \sqrt{\sigma }x^{-(1-\gamma) }, \quad K=\frac{\gamma(\gamma-2)}{4(1-\gamma)^2}I^{-2}, \quad \gamma\ne 1.$$ From $K$ we see that reduction to the heat equation can be possible only for $\gamma=2$, otherwise it would be reduced to the second canonical form of the heat equation with a four-dimensional symmetry group (see \eqref{4-dim}). For $\gamma=1$ we have $I=\ln x/\sqrt{\sigma}$, $J=\sqrt{\sigma}$, $K=-\sigma/4$, which implies that reduction to the heat equation is possible. The relevant transformations can directly be constructed from \eqref{ToHeatTr2} by the choice of $q_2=q_1=0$, $q_0=\sigma/4$. For $a=x^2$ it has the form
$$\tilde{t}=t,  \quad \tilde{x}=\ln x, \quad u=\sqrt{x}e^{-t/4}\tilde{u}.$$
In summary, there are only two powers where the equations are equivalent to the heat equation.

It is straightforward to check that  for the FP equation
$$u_t=\frac{\partial}{\partial x}[(Ax+B)^{2\gamma}u_x],\quad A\ne 0,$$ it follows $K=0, -A^2/4$ for $\gamma\in\curl{2/3,1}$, respectively. Both is equivalent to the heat equation. The transformation for the first case is
$$\tilde{t}=t,  \quad \tilde{x}=\frac{3}{A}(Ax+B)^{1/3},  \quad u=(Ax+B)^{-1/3}\tilde{u}.$$ For the other case we put $q_2=q_1=0$, $q_0=A^2/4$ and find
$$\tilde{t}=t,  \quad \tilde{x}=\frac{1}{A}\ln (Ax+B),  \quad u=e^{-A^2/(4t)}(Ax+B)^{-1/2}\tilde{u}.$$

\begin{exa}[Ref. \cite{SpichakStognii1999}]
$$u_t=\frac{\partial^2}{\partial x^2}[(1-x^2)^2u].$$
We have $a=(1-x^2)^2$, $b=-8x(1-x^2)$, $c=4(3x^2-1)$ and $J=6x$, $K=-1$, which indicates that the equation is reducible to the heat equation. Using formula \eqref{ToHeatTr2} for $q_2=q_1=0$, $q_0=1$ we find the transformation to be
$$\tilde{x}=\tanh^{-1} x=\frac{1}{2}\ln\frac{1+x}{1-x},  \quad \tilde{t}=t,  \quad u=(1-x^2)^{-3/2}e^{-t}\tilde{u}.$$
\end{exa}

\begin{exa}\label{ex-arb-drift}
$$u_t=u_{xx}+f(x)u_x.$$
We have $a=1$, $b=f(x)$, $c=0$, $I=x$, $J=-f(x)$, $K=-\frac{1}{2}f'-\frac{1}{4}f^2.$ Let us check the condition (Riccati equation)
\begin{equation}\label{Riccati3}
K=-\frac{1}{2}(f'+\frac{1}{2}f^2)=-(q_2 x^2+q_1 x+q_0)
\end{equation}
for some possible drift function $f$. For $f=2/x$, we have $K=0$ which indicates equivalence to the heat equation via $u=\tilde{u}/x$ (from \eqref{b2zero}). For $f=k/x$, $k\ne 0, 2$, we have $K=-[{k(k-2)}/{4}]x^{-2}$ so the equation would be in the other canonical class via the transformation $u=x^{-k/2}\tilde{u}$. Of course, other special solutions of the Riccati equation can produce examples of equations transformable to the canonical forms. For example, if $f(x)=bx$, $b\ne 0$ (see also \cite{BoykoShapoval2011}) then \eqref{Riccati3} is satisfied for $q_2=b^2/4$, $q_1=0$, $q_0=b/2$ so that reduction to heat equation can be achieved by a transformation (formula \eqref{ToHeatTr2})
$$\tilde{t}=\frac{1}{2b}e^{2bt},  \quad \tilde{x}=e^{bt}x, \quad \tilde{u}=u.$$ Note that this transformation is not unique.
Also, for $f(x)=\tanh(x/2+c)$ with $c$ a constant the left hand side of \eqref{Riccati3} is equal to a constant
$$f'+\frac{1}{2}f^2=\frac{1}{2},$$ from which it follows that $q_2=q_1=0$, $q_0=-1/4$. Again the corresponding equation is equivalent to the heat equation via the map
$$\tilde{t}=t,  \quad \tilde{x}=x,  \quad u=e^{-t/4}\sech \left(\frac{x}{2}+c\right)\;\tilde{u}.$$

In general, the  transformation $f=2w'/w$ takes \eqref{Riccati3} to the second order linear equation
$$w''-h(x)w=0, \quad h(x)=q_2 x^2+q_1 x+q_0,$$
known as parabolic cylinder equation which belongs to the class of generalized hypergeometric type equations \footnote{A generalized hypergeometric equation \cite{NikiforovUvarov1988} is one of the form $$y''+\frac{p(x)}{q(x)}y'+\frac{r(x)}{q^2(x)}y=0,$$ where $p$ is a linear polynomial, $q$ and $r$ are quadratic polynomials at most. A linear change of $y$ can be used to transform it the usual hypergeometric equation where $r=\lambda q$, $\lambda$ a constant.}   which can be reduced to a Hermite equation by a  change of dependent variable or to a confluent hypergeometric equation by a further change of independent variable. It is always possible to eliminate $q_1$ by a translation in $x$. In the special cases  $w''-q_2 x^2 w=0$ and $w''-(q_1 x+q_0) w=0$  solutions are expressed in terms of Bessel functions of index $1/4$ and $1/3$, respectively. The latter equation is called Airy equation.

Recall that the radial heat equation in $n$ dimensions satisfies
\begin{equation}\label{radial-heat}
  u_t=u_{xx}+\frac{k}{x}u_x, \quad k=n-1,
\end{equation}
where $x$ is the radial variable $x=|x|$, $0<x<\infty$.
\end{exa}

The transformations \eqref{ToHeatTr} can also be used to derive the famous Appell transformations (discrete symmetries), heat kernels (fundamental solutions) for the heat equation, the harmonic oscillator or Hermite heat equation and the heat equation with linear potential.
\begin{exa}
The heat equation $u_t-u_{xx}=0$.

We set $q_2=q_1=q_0=0$ and pick functions $T(t)$, $\omega(t)$  as solutions of the equations
$$\curl{T;t}=0,  \quad \ddot{\omega}=0.$$
The general solution is
$$T(t)=\frac{at+b}{ct+d},\quad \Delta=ad-bc>0, \quad \omega(t)=\omega_1 t+\omega_0,$$
where $a$, $b$, $c$, $d$, $\omega_1$ and $\omega_0$ are constants. Note that $\dot{T}=\Delta (ct+d)^{-2}$, $\ddot{T}/\dot{T}=-2c/(ct+d)$. We find a formula involving 6 arbitrary constants  taking one solution of the heat equation to another one (a discrete symmetry). For the special choice $\tilde{t}=T=-1/t$ ($\dot{T}=t^{-2}$), $\omega=-y=\text{const.}$ we find the translated Appell transformation
\begin{equation}\label{Appell-heat}
  u=c_0 t^{-1/2}\exp\left[-\frac{(x-y)^2}{4t}\right]\tilde{u}(\tilde{t},\tilde{x}), \quad T=-\frac{1}{t}, \quad \tilde{x}=\frac{x-y}{t}.
\end{equation}
This result was originally derived in \cite{Appell1982}.
Choosing the constant solution $\tilde{u}=1$ and $c_0=(4\pi)^{-1/2}$ we recover the fundamental solution
\begin{equation}\label{fundsol}
  K(x,t,y)=(4\pi t)^{-1/2}\exp\left[-\frac{(x-y)^2}{4t}\right],  \quad t>0
\end{equation}
with singularity at $(y,0)$ for the heat equation with the initial condition
$$\lim_{t\to 0^{+}}K(x,t,y)=\delta(x-y),$$ where $\delta$ is the Dirac distribution and the limit is to be taken in the  distributional sense. Since the equation is invariant under $x$-translations we can translate $K(x,t,0)$ to get $K(x,t,y)=K(x-y,t,0)$. Later on, this simple idea of using translation group will be applied to variable coefficient equations with nontrivial symmetries.  The choice of the normalizing constant $c_0$ is dictated by the  condition
$$\lim_{t\to 0^+}\int_{\mathbb{R}}K(x,t,y)dx=1.$$
Observe that
$$\lim_{t\to 0}[-4t \ln K(x,t,0)]=|x|^2,$$ which is the squared euclidean distance. Such asymptotic behaviors of heat kernels for variable coefficient heat equations was undertaken by Varadhan \cite{Varadhan1967}.
\end{exa}

\begin{exa}
The heat equation with linear potential: $$u_t-u_{xx}-x u=0.$$
We have $q_2=q_0=0$, $q_0=1$, $\ddot{\omega}=-2$. This implies that $T(t)$ is a fractional linear (or M\"{o}bius) transformation of $t$ and $\omega(t)=\omega_1 t+\omega_0-t^2$. We take $T(t)=-1/t$, $\omega(t)=-t^2-y$ and find
the transformation rule
$$u=c_0t^{-1/2}\exp\left[-\frac{(x-y)^2}{4t}+\frac{t^3}{12}-\frac{t}{2}(x+y)\right]\tilde{u}(\tilde{t},\tilde{x}), \quad T=-\frac{1}{t}, \quad \tilde{x}=\frac{1}{t}(x-t^2-y).$$
We can now choose $\tilde{u}=1$ and $c_0=(4\pi)^{-1/2}$ and obtain the fundamental solution
$$K(x,t,y)=\frac{1}{\sqrt{4\pi t}}\exp\left[-\frac{(x-y)^2}{4t}+\frac{t^3}{12}-\frac{t}{2}(x+y)\right].$$
\end{exa}

\begin{exa}
The harmonic oscillator equation:
\begin{equation}\label{harmonic-oscillator-epsilon}
  u_t-u_{xx}+\varepsilon x^2 u=0, \quad \varepsilon=\pm 1.
\end{equation}
For $\varepsilon=1$, we put $q_2=1$, $q_1=q_0=0$. $T$ and $\omega$ satisfy
$$\curl{T,t}=-8,  \quad \ddot{\omega}-4\omega=0.$$ We choose the special solutions $T(t)=-1/2\coth(2t)$,  $\omega(t)=-y\cosh(2t)$ and obtain
\begin{equation}
\begin{split}
&u=c_0(\sinh 2t)^{-1/2}\exp\left[-\frac{\cosh 2t(x^2+y^2)-2xy}{2\sinh 2t}\right] \tilde{u}(\tilde{t},\tilde{x}),\\
&\tilde{t}=-\frac{1}{2}\coth 2t, \quad \tilde{x}=(\sinh 2t)^{-1}(x-y \cosh 2t).
\end{split}
\end{equation}
For the special choice $c_0=1/\sqrt{2\pi}$, $\tilde{u}=1$ we have obtained the Mehler's formula for the heat kernel of the harmonic oscillator equation without using Mehler's Hermite polynomial formula
$$K(x,t,y)= (2\pi\sinh 2t)^{-1/2}\exp\left[-\frac{\cosh 2t(x^2+y^2)-2xy}{2\sinh 2t}\right].$$

For $\varepsilon=-1$ we replace the hyperbolic functions figuring in the above formula by the trigonometric functions to find the corresponding Mehler's formula (or Mehler kernel)
$$K(x,t,y)= (2\pi\sin 2t)^{-1/2}\exp\left[-\frac{\cos 2t(x^2+y^2)-2xy}{2\sin 2t}\right].$$

We note that a slightly different form of the equation
$$u_t-u_{xx}+\lambda^2 x^2 u=0, \quad \lambda\ne 0$$ can be scaled to \eqref{harmonic-oscillator-epsilon} by a scaling of the independent variables: $(t,x)\to (\lambda t, \sqrt{\lambda}x)$.
\end{exa}
The construction of heat kernels by Lie symmetry groups as an alternative to other different methods in the literature will be presented in Section \eqref{Section4}.

\subsection{Heat equation with arbitrary potential in $n$-dimensions}\label{Sectionheatn}
Consider
\begin{equation}\label{Heatn}
u_t=\Delta u+V(x,t)u,  \quad (x,t)\in \mathbb{R}^n\times (0,\infty),
\end{equation}
where $\Delta$ is the usual Laplacian on $\mathbb{R}^n$. We want to construct the equivalence group of this equation. We look for invertible transformations $\tilde{x}_{\mu}=X_{\mu}(x, u)$, $ \tilde{u}=U(x,u)$ which preserves the form of the equation \eqref{Heatn}. For ease of notational simplicity we put $x=(t,x)=(x_0,x_1,\ldots, x_n)$.
We need first and second order derivational relations between old and new coordinates
\begin{equation}\label{first-order-der}
(U_u-X_{\sigma u}\tilde{u}_{\tilde{\sigma}})u_{\mu}
=\tilde{u}_{\tilde{\sigma}}X_{\sigma, \mu}-U_{\mu}
\end{equation}
and
\begin{equation}\label{second-order-der}
[U_u-\tilde{u}_{\tilde{\sigma}}X_{\sigma u}]u_{\mu\nu}=\tilde{u}_{\tilde{\sigma}\tilde{\rho}}(D_{\mu}X_{\rho})(D_{\nu}X_{\sigma}) + \tilde{u}_{\tilde{\sigma}}V_{\mu\nu}\cdot X_{\sigma}-V_{\mu\nu}\cdot U,
\end{equation}
where $D$ is the total differentiation operator, $\displaystyle \tilde{u}_{\tilde{\sigma}}=\partial \tilde{u}/\partial \tilde{x}^{\sigma}$, and we have defined
$$V_{\mu\nu}\cdot U=U_{\mu\nu} + u_{\mu}U_{\nu u} + u_{\nu}U_{\mu u} + u_{\mu}u_{\nu}U_{uu}$$ and a similar expression for $V_{\mu\nu}\cdot X_{\sigma}$ and we sum over repeated indices. We form $u_0-\delta_{\mu\nu}u_{\mu\nu}=u_0-u_{\mu\mu}=u_0-\Delta u$  from \eqref{first-order-der} and \eqref{second-order-der} and obtain
\begin{equation}\label{transformedeqn}
\begin{split}
   & [U_u-\tilde{u}_{\tilde{\sigma}}X_{\sigma u}](u_0-\Delta u)= [X_{0,0}\tilde{u}_0-(D_{\mu}X_{\rho})(D_{\mu}X_{\rho} )\Delta \tilde{u}] \\
    & -\tilde{u}_{\tilde{\sigma}\tilde{\rho}}(D_{\mu}X_{\rho})(D_{\mu}X_{\sigma}) + \tilde{u}_{\tilde{\sigma}}(X_{\sigma,0}-V\cdot X_{\sigma})+V\cdot U-U_0, \quad \rho\ne \sigma,
\end{split}
\end{equation}
where $V\cdot U=\Delta U + 2u_{\nu}U_{\nu u} + u_{\nu}u_{\nu}U_{uu}$ and a similar expression for $V\cdot X_{\sigma}$. From the coefficients of $\tilde{u}_{0\rho}$, $\tilde{u}_{\sigma \rho}$, $\sigma\ne \rho$ we have
$$X_{0,\rho}=0, \quad X_{\mu u}=0, \quad  (D_{\mu}X_{\rho})(D_{\mu}X_{\sigma})=0, \quad \mu=0,1,\ldots n, \quad \sigma,\rho=1,\ldots n.$$
We also require the first term on the right of \eqref{transformedeqn} to be proportional to the heat operator so that we can write
\begin{equation}\label{mixed-der}
  X_{\sigma,\mu}X_{\rho,\mu}=\lambda^2(x_0)\delta_{\sigma \rho}, \quad \lambda^2(x_0)=X_{0,0}> 0, \quad \sigma,\rho=1,2,\ldots, n.
\end{equation}
This relation indicates that $X_a$ should be linear in $x_k$:
$$X_k=\lambda(x_0)A_{kl}x_l+\beta_k(x_0), \quad k,l=1,2,\ldots n, \quad k\ne l,$$
where $A_{kl}\in \Or(n)$, $\Or(n)$ being the group of $n\times n$ orthogonal matrices and $\beta_k(x_0)$ is arbitrary. Eq. \eqref{transformedeqn} has the form
$$U_u\mathcal{H} u=\lambda(t)\tilde{\mathcal{H}}\tilde{u} + \tilde{u}_{\tilde{\sigma}}(X_{\sigma,0}-\Delta X_{\sigma}) + (\Delta U -U_0 + 2u_{\nu}U_{\nu u} + u_{\nu}u_{\nu}U_{uu}), \quad U_u\ne 0,$$ where $\mathcal{H}=\partial_0-\Delta $ is the heat operator. $\Delta X_{\sigma}=0$ because $X_k$ is linear in $x_k$. The terms linear and quadratic in the first  derivatives must vanish.
So, after substituting $\displaystyle u_{\mu}= \frac{\tilde{u}_{\tilde{\sigma}}X_{\sigma,\mu}}{U_u}- \frac{U_{\mu}}{U_u}$  we find
$$2X_{\sigma,\nu}\frac{R_{\nu}}{R}+X_{\sigma,0}=0,\quad U=R(x)u+S(x), \quad R\ne 0.$$ For convenience we put $R_{\nu}/R=-F_{\nu}$ (or $R=e^{-F}$) in the first relation
$$F_{\nu}=\frac{1}{2\lambda}(\dot{\lambda}x_{\nu}+\dot{\beta}_{\mu}A_{\mu\nu}),$$ which integrates to
\begin{equation}\label{F}
  F(t,x)=\frac{\dot{\lambda}}{\lambda}\frac{|x|^2}{4}+ \frac{\dot{\beta}_k}{2\lambda} A_{kl}x_l+f(t),
\end{equation}
where $|x|$ is the euclidean norm of $x$ and $f(t)$ is an arbitrary function of integration.
From now on we set $x_0=t$, $x=(t,x)=(t,x_1,\ldots x_n)$ and $X_0=T(t)$ ($\lambda(t)=\sqrt{\dot{T}}$).
In order to find the transformed potential, we replace $\mathcal{H} u$ and $\tilde{\mathcal{H}} \tilde{u}$ by $V(t,x)u$ and $\tilde{V}(\tilde{t},\tilde{x})\tilde{u}$. This gives, using the relation $-U_0+\Delta U=U(F_t-\Delta F+|\nabla F|^2)$ and  simplifying further,
\begin{equation}\label{transformed-pot-n}
  \dot{T}\tilde{V}=V+\Delta F-|\nabla F|^2-F_t.
\end{equation}
or in terms of $R$
\begin{equation}\label{transformed-pot-n-2}
  (\dot{T}\tilde{V}-V)R=R_t-\Delta R.
\end{equation}
On  using the following computations
$$\Delta F=\frac{n}{2}\frac{\dot{\lambda}}{\lambda},  \quad |\nabla F|^2=\left(\frac{\dot{\lambda}}{\lambda}\right)^2\frac{|x|^2}{4}+\frac{\dot{\lambda}}{2\lambda^2} \dot{\beta}_k A_{kl} x_l+\frac{1}{4\lambda^2}\sum_{k=1}^n {\dot{\beta}_k}^2,$$
$$ F_t=\frac{d}{dt}\left(\frac{\dot{\lambda}}{\lambda}\right)\frac{|x|^2}{4}+ \frac{d}{dt}\left(\frac{\dot{\beta}_k}{2\lambda}\right)A_{kl} x_l+\dot{f}(t),$$
we can express $\tilde{V}$ in the form
\begin{subequations}\label{potential-rel-n}
\begin{equation}
\tilde{V}=\frac{1}{\dot{T}}\left[V-A(t)|x|^2-\sum_{k=1}^n B_k(t)A_{kl} x_l-\sum_{k=1}^n C_k(t)\right],
\end{equation}
where
\begin{equation}
  A(t)=\frac{1}{8}\curl{T;t},  \quad B_k(t)=-\frac{\dot{\lambda}\dot{\beta}_k}{2\lambda^2}+\frac{d}{dt}\frac{\dot{\beta}_k}{2\lambda}, \quad C_k(t)=\dot{f}-\frac{n}{4}\frac{\ddot{T}}{\dot{T}}-\frac{1}{4}\left(\frac{\dot{\beta}_k}{\lambda}\right)^2.
\end{equation}
\end{subequations}
Note that $S$ has to be a solution of the original equation.
So we have the following result.
\begin{prop}\label{pro3}
The equivalence group $\mathsf{\tilde{G}_E}$ of \eqref{Heatn} consists of fiber-preserving transformations
\begin{equation}\label{Equivn}
\tilde{t}=T(t), \quad \tilde{x}_k=X_k=\sqrt{\dot{T}} A_{kl}x_l+\beta_k(t), \quad u(x,t)=\exp[F(x,t)] \tilde{u}(\tilde{x},\tilde{t}),
\end{equation}
where $A_{kl}A_{km}=\delta_{lm}$ and $F$ is given by \eqref{F}.
 The new and old potentials are related by \eqref{potential-rel-n}.
\end{prop}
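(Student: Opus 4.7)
The plan is to start from the most general fiber-preserving invertible change of variables $\tilde{x}_\mu = X_\mu(x,u)$, $\tilde{u} = U(x,u)$ and impose that it map the class \eqref{Heatn} into itself. First I would use the chain rule to express $u_\mu$ and $u_{\mu\nu}$ in terms of $\tilde u_{\tilde\sigma}$, $\tilde u_{\tilde\sigma\tilde\rho}$ and derivatives of $X_\mu, U$, arriving at the identity \eqref{transformedeqn}. Since the image equation $\tilde u_0-\tilde\Delta\tilde u=\tilde V\tilde u$ cannot contain mixed second derivatives $\tilde u_{\tilde 0\tilde\rho}$ or $\tilde u_{\tilde\sigma\tilde\rho}$ with $\sigma\ne\rho$, matching their coefficients against zero will yield $X_{0,\rho}=0$ (so $X_0=T(t)$), $X_{\mu u}=0$ (strict fiber-preservation), and the orthogonality-type relation $X_{\sigma,\mu}X_{\rho,\mu}=\lambda(t)\delta_{\sigma\rho}$. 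Matching the remaining second-order coefficient against the target Laplacian identifies $\lambda(t)=\dot T(t)$, which also forces $\dot T>0$.

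The key structural step will be to deduce that each $X_k$ is affine in the spatial variables. Differentiating $X_{k,l}X_{j,l}=\dot T\,\delta_{kj}$ with respect to $x_m$ and symmetrising the resulting identities over cyclic permutations of the indices $(k,j,m)$ I expect to obtain $X_{k,lm}=0$, so that
\begin{equation*}
X_k = M_{kl}(t)\,x_l + \beta_k(t),\qquad M^{\top}M = \dot T\,\mathbf{1}_n.
\end{equation*}
The factorisation $M=\sqrt{\dot T}\,A$ with $A\in\Or(n)$ then delivers the spatial part of \eqref{Equivn}.

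Next I would extract the form of $U$. Substituting $u_\mu=(\tilde u_{\tilde\sigma}X_{\sigma,\mu}-U_\mu)/U_u$ back into the surviving part of the identity produces terms that are linear and quadratic in $\tilde u_{\tilde\sigma}$; since the target equation is linear in $\tilde u$, the quadratic terms must vanish identically, which forces $U$ to be affine in $u$, i.e.\ $U=R(x)u+S(x)$ with $R\ne 0$. Setting $R=e^{-F}$, cancellation of the terms linear in $\tilde u_{\tilde\sigma}$ gives the gradient equation $F_\nu=\frac{1}{2\lambda}(\dot\lambda x_\nu+\dot\beta_\nu)$, which integrates to the expression \eqref{F}, with the arbitrary function $f(t)$ appearing as the constant of integration in the time variable. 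The residual inhomogeneous terms involving $S$ force $S$ to be itself a solution of the original equation, so it is absorbed by linear superposition and suppressed from the statement.

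Finally, the leftover zeroth-order terms collapse — using $\Delta X_\sigma=0$ which is automatic from linearity of $X_k$ in the spatial variables — to the identity $\dot T\,\tilde V = V+\Delta F+|\nabla F|^2-F_t$; substituting the explicit expressions for $\Delta F$, $|\nabla F|^2$ and $F_t$ in terms of $T$, $\beta_k$ and $f$ reproduces \eqref{potential-rel-n}. The main obstacle I expect is the rigidity argument forcing linearity of $X_k$ in the $x_l$: once this is in hand, the remainder is essentially bookkeeping, separating coefficients of the independent monomials in the derivatives of $\tilde u$ and integrating the two first-order PDEs that result for $R$ and $T$.
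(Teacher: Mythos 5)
Your proposal follows essentially the same route as the paper: form the transformed operator, kill the mixed second-derivative coefficients to get $X_{0,\rho}=0$, $X_{\mu u}=0$ and the orthogonality relation $X_{\sigma,\mu}X_{\rho,\mu}=\lambda\delta_{\sigma\rho}$ with $\lambda=\dot T$, deduce affine $X_k$, force $U=R(x)u+S(x)$ from linearity, integrate $F_\nu=\tfrac{1}{2\lambda}(\dot\lambda x_\nu+\dot\beta_\nu)$, and read off \eqref{transformed-pot-n}. The only difference is that you explicitly justify the rigidity step $X_{k,lm}=0$ by differentiating $X_{k,l}X_{j,l}=\dot T\,\delta_{kj}$ and cyclically symmetrising (which is correct, since the invertibility of the Jacobian then kills the second derivatives), a step the paper merely asserts.
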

Here we have ignored the term $S(t,x)$ which gives rise to the linear superposition principle.
As a by-product we have demonstrated that the most general potential $V$ that can be  transformed to one with zero potential (the standard heat equation) should be of the form
\begin{equation}\label{potentialV}
  V(x,t)=a(t)|x|^2+\sum_{k=1}^n b_k(t) x_k+\sum_{k=1}^n c_k(t)
\end{equation}
for some functions $a$, $b_k$ and $c_k$ of $t$.

Just as in one dimensional case, we can use the equivalence group
\begin{equation}\label{Equivn2}
\begin{split}
\tilde{t}&=T(t), \quad \tilde{x}_k= \sqrt{\dot{T}(t)}A_{kl}x_l+\beta_k(t), \quad u(x,t)=\exp[F(x,t)] \tilde{u}(\tilde{x},\tilde{t}),\\
F&=\frac{{\ddot{T}}}{\dot{T}}\frac{|x|^2}{8}+ \frac{\dot{\beta}_k}{2\sqrt{\dot{T}}} A_{kl} x_l+f(t)
\end{split}
\end{equation}
to find the Appell type transformations or fundamental solutions. Again we find it convenient to change $\beta_k$ to $\omega_k$ via $\beta_k=\sqrt{\dot{T}} \omega_k$. The functions $T$, $\omega_k$ and $f$ are found from solutions of
$$\curl{T;t}=8a(t),  \quad \ddot{\omega}_k+4a\omega_k=2b_k(t),$$
$$\dot{f}=\frac{n}{4}\frac{\ddot{T}}{\dot{T}}+\sum_{k=1}^n\left[\frac{1}{4}\frac{\ddot{T}}{\dot{T}}\omega_k\dot{\omega}_k+\frac{1}{4}\dot{\omega}_k^2+
\frac{1}{16}\left(\frac{\ddot{T}}{\dot{T}}\right)^2\omega_k^2+c_k(t)\right].$$

\begin{exa}
Linear potential case:
$$u_t=\Delta u+\left(\sum_{k=1}^n b_k x_k\right)u,  \quad b_k=\rm{const.}$$

We put $a(t)=0$, $b_k(t)=b_k$, $c_k(t)=0$ and $T(t)=-1/t$, $\omega_k(t)=-b_k t^2-y_k$. From the last equation above we find
$$f(t)=f_0 -\frac{n}{2}\ln t+\sum_{k=1}^n\left[b_k^2\frac{t^3}{12}-\frac{t}{2}b_k y_k-\frac{y_k^2}{4t}\right],$$
which leads to the Mehler's formula \cite{CraddockLennox2012} up to a multiplicative constant
$$K(x,t,y)=c_0 t^{-n/2}\exp\left[-\frac{|x-y|^2}{4t}+\frac{t^3}{12}\sum_{k=1}^n b_k^2 -\frac{t}{2}\sum_{k=1}^n b_k (x_k+y_k)\right].$$
\end{exa}
In particular, by choosing $b_k=0$, $k=1,2,\ldots ,n$ in the above formula we obtain the fundamental solution on the $n$-dimensional euclidean space $\mathbb{R}^n$ for the standard heat equation
$$K(x,t,y)=c_0 t^{-n/2}\exp\left[-\frac{|x-y|^2}{4t}\right],$$ up to a multiplicative constant. $c_0$ can be specified from the limit
\begin{equation}\label{distrib-limit}
  \lim_{\varepsilon\to 0^{^{+}}}\frac{e^{-\frac{|x|^2}{\varepsilon}}}{(\pi \varepsilon)^{n/2}}=\delta(x), \quad x\in \mathbb{R}^{{n}}
\end{equation}
to be $c_0=(4\pi)^{-n/2}$.

The derivation of the fundamental solution
$$K(x,t,y)=c_0 (\sin 2t)^{-n/2}\exp\left[-\frac{\cos 2t (|x|^2+|y|^2)-2x\cdot y}{2\sin 2t}\right]$$
for the harmonic heat equation
$$u_t=\Delta u+|x|^2u$$ is left to the reader. Of course, it is also possible to derive the corresponding Appel transformation, (which first appeared in \cite{Goff1927} in $n$-dimensions) taking solutions among themselves.

The situation for the most general variable coefficient case becomes much more complicated in higher dimensions $n\geq 2$. We treat the following somewhat general  case  in two space-dimensions ($n=2$) where only a spatial dependence on the coefficients (assumed to be smooth) is present
\begin{equation}\label{genvcparabolic}
  u_t=a(x,y)u_{xx}+b(x,y)u_{xy}+c(x,y)u_{yy}+d(x,y)u_x+e(x,y)u_y+f(x,y,t)u,
\end{equation}
where $b^2-4ac<0$ in some subset of $\mathbb{R}^2$.
First of all, we perform a standard canonical (locally invertible) transformation $(x,y)\to (X(x,y),Y(x,y))$ on the spatial variables only in such a way that the new coefficients satisfy $\tilde{a}=\tilde{c}$ and $\tilde{b}=0$ or more explicitly
$$\sigma(X_x,X_y)=\sigma(Y_x,Y_y), \qquad aX_xY_x+\frac{b}{2}(X_xY_y+X_yY_x)+cX_yY_y=0,$$ where
$\sigma(r_1,r_2)=ar_1^2+br_1r_2+cr_2^2$. If it happens that $\sigma(X_x,X_y)=K=\text{const.}$, the principal part $L_p u=a(x,y)u_{xx}+b(x,y)u_{xy}+c(x,y)u_{yy}$ is then reduced to a $K$ multiple of the Laplacian $\Delta U=U_{XX}+U_{YY}$, while the remaining terms of \eqref{genvcparabolic} remain form invariant. The factor $K$ can be scaled to 1 by scaling time.   This is of course always true if the coefficients $a$, $b$, $c$ are all constants. For example, the nonconstant triple $(a,b,c)=(x^2,0,y^2)$ in $Q=\mathbb{R}^{+}\times \mathbb{R}^{+}$ can be transformed to $(1,0,1)$ by the transformation $X=\log x$, $Y=\log y$ with  the remaining coefficients changed (the lower order derivative term $-(U_X+U_Y)$ added). In general,  $L_p u=A x^2u_{xx}+B x yu_{xy}+Cy^2u_{yy}$, $B^2-4AC<0$, $(x,y)\in Q$ is transformable to the Laplacian in an analogues way by some coordinate transformation. We shall suppose that equation \eqref{genvcparabolic} has  already been reduced to
$$u_t=u_{xx}+u_{yy}+p(x,y)u_x+q(x,y)u_y+V(x,y,t)u.$$
Under the equivalence transformation
$$\tilde{t}=\tau(t),  \quad \tilde{x}=\sqrt{\dot{\tau}}(x+\delta(t)), \quad \tilde{y}=\sqrt{\dot{\tau}}(y+\rho(t)), \quad u=e^{F(x,y,t)}U(\tilde{x},\tilde{y},\tilde{t}),$$ the above equation is transformed to
$$U_{\tilde{t}}=U_{\tilde{x}\tilde{x}}+U_{\tilde{y}\tilde{y}}+\tilde{p}U_{\tilde{x}}
+\tilde{q}U_{\tilde{y}}+\tilde{V}U,$$ where
$$\tilde{p}=2F_x+p-\frac{\ddot{\tau}}{2\dot{\tau}}(x+\delta)-\dot{\delta},\quad \tilde{q}=2F_y+q-\frac{\ddot{\tau}}{2\dot{\tau}}(y+\rho)-\dot{\rho}, \quad \tilde{V}=\frac{1}{\dot{\tau}}[V-W]$$ with
$$W=F_t-(F_{xx}+F_{yy})-(F_x^2+F_y^2)-pF_x-qF_y.$$ We can make $\tilde{p}=\tilde{q}=0$ by a suitable choice of $F$ iff $p_y=q_x$. Under this assumption $F$ is constructed as
$$F=\frac{\ddot{\tau}}{\dot{\tau}}\left[\frac{1}{8}(x^2+y^2)+\frac{1}{4}(\delta x+\rho y)\right]+\frac{1}{2}(\dot{\delta}x+\dot{\rho}y)-\frac{1}{2}\phi(x,y)+\log \nu(t),$$ where $\nu$ is arbitrary integration function and
$d\phi(x,y)=p(x,y)dx+q(x,y)dy$. With this choice of $F$  we can express $W$ in the form
\begin{equation}\label{W}
  \begin{split}
     W(x,y,t)&=\frac{1}{8}\curl{\tau;t}(x^2+y^2)+\Omega(\tau,\delta,p)x+\Omega(\tau,\rho,q)y+\frac{1}{2}(p_x+q_y)-\frac{1}{4}(p^2+q^2)\\ &+\frac{1}{2}(p \dot{\delta}+q \dot{\rho})+\frac{\dot{\nu}}{\nu}-\frac{1}{4}(\dot{\delta}^2+\dot{\rho}^2)-\frac{\ddot{\tau}}{2\dot{\tau}}
     \left[1-\frac{1}{2}(p \delta+q \rho)+\frac{1}{4}\frac{d}{dt}(\delta^2+\rho^2)\right]\\
     & -\frac{1}{16}\left(\frac{\ddot{\tau}}{\dot{\tau}}\right)^2(\delta^2+\rho^2),\\
     \Omega(\tau,s_1,s_2)&\equiv\frac{s_1}{4}\curl{\tau;t}+\frac{\ddot{s}_1}{2}+\frac{\ddot{\tau}}{4\dot{\tau}}s_2.
   \end{split}
\end{equation}
Putting $p=q=0$ in $W$ and choosing $\tilde{V}=0$ we deduce that an equation of the form  \eqref{genvcparabolic} can be mapped to the usual heat equation  if and only if
$$V(x,y,t)=A(t)(x^2+y^2)+B_1(t)x+B_2(t)y+C(t)$$  for some given functions $A$, $B_1$, $B_2$ and $C$ which can annulled by an appropriate choice of the free functions $\tau$, $\delta$, $\rho$ and $\nu$ (Compare with \eqref{potentialV}).

As an example, one can map the PDE $u_t=u_{xx}+u_{yy}-u_x-u_y$ to its potential form with $\tilde{V}=-1/2$, which is then reduced to the heat equation in $U$  with zero potential in $2+1$ dimensions by the transformation $u=e^{-(t-x-y)/2}U(x,y,t)$. Hence, $u_t=x^2u_{xx}+y^2u_{yy}$ is reduced to the same canonical form by the transformation $$u=e^{-\frac{t}{2}}\sqrt{xy}\,U(X,Y,t), \quad X=\log x, \quad Y=\log y.$$

\section{Symmetry classification of the general parabolic equation}\label{Section3}
We know that \eqref{LPE} is invariant under scalings generated by $M=u\gen u$ and the infinite dimensional symmetry group generated by $\mathbf{v}_{f}=f(x,t)\gen u$, where $f$ solves the equation. They characterize  linearity of the equation. All other infinitesimal symmetries will be generated by vector fields of the form
\begin{equation}\label{VF}
X=\tau(t)\gen t+\xi(x,t)\gen x+\phi(x,t)u\gen u,
\end{equation}
where the coefficients $\tau, \xi, \phi$ will be determined from the Lie's symmetry condition (determining equations obtained from the requirement that the second prolongation $\pr 2 X$ annihilates Eq. \eqref{LPE} on the solution set)
\begin{equation}\label{deteqs}
\begin{split}
&8 a^2 \phi _x + \left(a_x - 2 b\right) \left(\tau  a_t+\xi  a_x-a \dot{\tau}\right) - a [2 \xi  a_{xx}+2 \tau(a_{xt}-2 b_t) -4 \xi _t-4 \xi  b_x ]=0,\\
&c \tau _t+\tau  c_t-\phi _t+\xi  c_x+b \phi _x+a \phi _{xx}=0,
\end{split}
\end{equation}
where $\xi$ satisfies the first order PDE
\begin{equation}\label{xieq}
\xi_x -\frac{a_x}{2a}\xi=\frac{1}{2a}(a\tau)_t.
\end{equation}
The $u$-coefficient $\phi$ will be determined up to a constant. We already know the equivalence group of the equation. We wish to extend the trivial symmetries by vector fields of the form \eqref{VF}. $X$ and $M$ commute, $[X, M]=0$. This implies that $\curl{X ,M}$ forms a two-dimensional abelian algebra. The operator $M$ is invariant under the equivalence transformations \eqref{EqTr}, whereas $X$ gets transformed to
$$\tilde{X}=\dot{T}\tau\gen {\tilde{t}}+(\xi X_x+\tau X_t)\gen {\tilde{x}}+[(\tau \theta_t+\xi \theta_x+\psi\theta )u+\tau \psi_t+\xi \psi_x]\gen {\tilde{u}}.$$
\begin{enumerate}
\item Let $\tau. \xi\ne 0$. Restricting \eqref{EqTr} to the  solutions (at least local) of the system
$$\dot{T}\tau=1,  \quad  \xi X_x+\tau X_t=0,  \quad \tau \theta_t+\xi \theta_x+\psi \theta =0,  \quad \tau \psi_t+\xi \psi_x=0$$ reduces $X$ to $\tilde{X}=\gen {\tilde{t}}$.

\item Let $\tau\ne 0$, $\xi=0$. We choose $\dot{T}\tau=1$, $\psi=\psi(x)$ and $\theta$ a solution of
$$ \tau \theta_t+\psi \theta =0, \quad \theta\ne 0$$ so that $X$ has been reduced to $\tilde{X}=\gen {\tilde{t}}$ again.

\item If $\tau=0$, $\xi\ne 0$, We choose $X$ to be $\xi X_x=1$, $\psi=\psi(t)$ and $\theta$ as a solution of
$$\xi \theta_x+\psi \theta =0, \quad \theta\ne 0.$$ This reduces $X$ to the canonical form $\tilde{X}=\gen {\tilde{x}}$.

If $\tau. \xi= 0$, then we have $\phi(x,t)\ne 0$.
\end{enumerate}
So we have found three different realizations of the two dimensional abelian symmetry algebra
$$  \lie_1=\langle\gen t, u\gen u\rangle, \quad \lie_2=\langle\gen x, u\gen u\rangle,  \quad \lie_3=\langle \phi(x,t)u\gen t, u\gen u\rangle,  \quad \phi(x,t)\ne \text{const.}$$
$\lie_3$ is not admissible as a symmetry algebra. We look at $\lie_2$ which leads to the invariant equation
$$u_t=a(t) u_{xx}+b(t)u_x+c(t) u.$$ We can reparametrise time to scale out $a$. Then we use the point transformation
$$\tilde{t}=t, \quad \tilde{x}=x+\int b(t)dt, \quad \tilde{u}=\exp[-\int c(t)dt]u$$
to transform $b$ and $c$ away simultaneously which results in the standard heat equation for which the symmetry group is well-known.

Now we turn to $\lie_1$. The corresponding invariant equation becomes \eqref{mainLPE}. The following transformations leave $\lie_1$ invariant:
$$\tilde{t}=t+t_0,  \quad \tilde{x}=X(x),  \quad \tilde{u}=\theta(x)u+\psi(x).$$
$X$, $\theta$, $\psi$ can be chosen appropriately so that \eqref{mainLPE} can be transformed to the canonical form
\begin{equation}
u_t=u_{xx}+V(x)u.
\end{equation}
Putting $a=1$, $b=0$ and $c=V(x)$ in \eqref{deteqs} and solving  we find
$$\xi(x,t)=\frac{1}{2}\dot{\tau}x+\eta(t),  \quad \phi(x,t)=-\frac{1}{8}\ddot{\tau}x^2-\frac{1}{2}\dot{\eta}x+h(t)$$
and $V(x)$ satisfies the classifying equation
\begin{equation}\label{ODE}
4(\dot{\tau}x+2\eta(t))V'(x)+8\dot{\tau}V(x)=-\dddot{\tau}x^2-4\ddot{\eta}x+8\dot{h}.
\end{equation}
Differentiating this equation three times gives
\begin{equation}\label{ODE2}
2\eta V^{(4)}+\dot{\tau}(5 V^{(3)}+x V^{(4)})=0.
\end{equation}
There are three cases:
\begin{enumerate}
\item $\eta= 0$, $\dot{\tau}=0$.  From \eqref{ODE} $h(t)=h_0=\textrm{const.}$ There is now new symmetry.

\item $\eta\ne 0$, $\dot{\tau}=0$. Up to an equivalence transformation $\eta\to 0$. We find again $\lie_1$ algebra.

\item $\eta= 0$, $\dot{\tau}\ne 0$. Then we obtain $V(x)$ from \eqref{ODE2} as
\begin{equation}\label{PE}
V(x)=\mu x^{-2}+c_2 x^2+c_1 x+c_0,
\end{equation}
where $\mu, c_2, c_1, c_0$ are constants.  If $\mu=0$ we can set $c_2=c_1=c_0=0$ ($V(x)=0$) by the equivalence transformations \eqref{ToHeatTr}. This gives us the heat equation $u_t=u_{xx}$. From \eqref{ODE} we have
$$\dddot{\tau}=0,  \quad \ddot{\eta}=0,  \quad \dot{h}=0$$ which leads to the six-dimensional symmetry algebra of the heat equation
\begin{equation}\label{Heat-Symm}
  \lie^{(6)}=\lr{ T, D, C, P, B, M   },
\end{equation}
where
\begin{equation}\label{Heat-Symm-comm}
  \begin{split}
     &T=\gen t, \quad D= t\gen t+\frac{x}{2}\gen x, \quad C= t^2\gen t+xt\gen x-\frac{1}{4}(x^2+2t)u\gen u, \\
      & P=\gen x, \quad B=t\gen x-\frac{xu}{2}\gen u, \quad M= u\gen u.
   \end{split}
\end{equation}
The commutation relations \eqref{comm-6-dim-c20} show that $\lr{T, D, C}$ is the Lie algebra $\Sl(2,\mathbb{R})$ and $\lr{P, B, M}$ the Heisenberg algebra $\nil(3)$. The Lie symmetry algebra has the semi-direct sum structure
$$\lie^{(6)}=\Sl(2,\mathbb{R})\vartriangleright \nil(3),$$
where $\nil(3)$ is the radical (maximal solvable ideal) of the algebra. We recall that the heat equation like any linear equation is also invariant under the infinite-dimensional abelian ideal $\lie_{\infty}=\lr{f(x,t)\gen u}$, where $f$ is a solution of the heat equation. The maximal symmetry algebra has the structure
$$\lie =\lie^{(6)}\vartriangleright \lie_{\infty}.$$

Now we let $\mu\ne 0$. From Eq. \eqref{ODE} it follows that $\eta=0$ and $c_1 \dot{\tau}=0$. If $\dot{\tau}=0$ then $\dot{h}=0$ which gives trivial symmetry algebra. For $c_1=0$ there are nontrivial symmetries which form a four-dimensional symmetry group. Using the equivalence transformations
$$\tilde{t}=T(t),  \quad \tilde{x}=\sqrt{\dot T}x, \quad u={\dot T}^{1/4}\exp[\frac{1}{8}x^2+c_0 t]\tilde{u},$$ where $T$ is a solution of $\curl{T;t}=8c_2$, for example, for $c_2>0$ one can choose $T(t)=\tan (2\sqrt{c_2}t)$,
we can put $c_2=c_0=0$ and obtain the canonical equation
\begin{equation}\label{canonical2}
  u_t-u_{xx}=\frac{\mu}{x^2}u,  \quad \mu\ne 0.
\end{equation}
A basis for the symmetry algebra is obtained by solving
$$\dddot{\tau}=0,  \quad \ddot{\tau}+4 \dot{h}=0$$ in the form
\begin{equation}\label{canonical-2-Lie}
  \lie^{(4)}=\lr{T, D, C, M}\sim \Sl(2,\mathbb{R})\vartriangleright\lr{M},
\end{equation}
where
\begin{equation}\label{canonical-2-basis}
   T=\gen t, \quad D=t\gen t+\frac{x}{2}\gen x, \quad C=t^2\gen t+xt\gen x-\frac{1}{4}(x^2+2t)u\gen u, \quad M=u\gen u.
\end{equation}

\item $\eta\ne 0$, $\dot{\tau}\ne 0$. This case is equivalent to the case 3. up to equivalence. We can write Eq. \eqref{ODE2} in the form
\begin{equation}\label{V(x)-eq}
  \frac{V^{(4)}}{V^{(3)}}=-\frac{5}{x+k},  \quad k=\frac{2\eta}{\dot{\tau}}.
\end{equation}
As $k$ should be  a constant it can be set to zero by translations of $x$.   The general solution of \eqref{V(x)-eq} with $k=0$ is again \eqref{PE}.
\end{enumerate}
In summary we have found two canonical forms
\begin{equation}\label{canonical}
u_t-u_{xx}=0,  \quad u_t-u_{xx}=\frac{\mu}{x^2}u, \quad \mu\ne 0,
\end{equation}
the first of which has a symmetry group
$G=\SL(2,\mathbb{R})\ltimes \Heis(1)$, and the second one $G=\SL(2,\mathbb{R})\ltimes \mathbb{R}$, where $\ltimes$ denotes the semi-direct product, $\Heis(1)$ is the 3-dimensional Hesisenberg algebra.

\subsection{Lie symmetries of \eqref{mainLPE}}
In what follows we assume that the equation under study is forward-propagating evolution $a>0$ for $t>t_0$ with the initial condition $u(x,t_0)$ given. From symmetry point of view, forward and backward ($t<t_0$)  type evolutions of diffusion processes differ only by the transformation $t\to t_0-t$.
This fact implies that they are irreversible in the sense that forward time (future) is distinguishable from backward time (past).

We now would like to solve the determining equations \eqref{deteqs} in the case $a=a(x)$, $b=b(x)$, $c=c(x)$ where the corresponding equation has at least nontrivial symmetries other than $\lie_1$. From \eqref{xieq} integration gives
\begin{equation}\label{xi}
  \xi(x,t)=\sqrt{a}\left(\frac{1}{2}\dot{\tau}I(x)+\rho(t)\right),
\end{equation}
where $\rho$ is arbitrary and $I(x)$ as in \eqref{IJ}. Substitution of $\xi$ into \eqref{IJ} and integration gives in terms of $I$ and $J$ (see \eqref{IJ})
\begin{equation}\label{phi}
  \phi(x,t)=-\frac{1}{8}\ddot{\tau}I^2-\frac{1}{2}\dot{\rho}I+\frac{1}{4}\dot{\tau} IJ+\frac{1}{2}\rho J+\sigma(t),
\end{equation}
where $\sigma(t)$ is a function of integration. Finally, the last determining equation of \eqref{deteqs} provides an equation for $\tau$, $\rho$, and $\sigma$
\begin{equation}\label{comp}
\frac{1}{8}\dddot{\tau}I^2-\frac{1}{4}\ddot{\tau}+\dot{\tau}\left(K(x)+\frac{1}{2}\sqrt{a}K'(x)I(x)\right)+\frac{1}{2}\ddot{\rho} I + \rho \sqrt{a} K'(x)  -\dot{\sigma}=0,
\end{equation}
or taking $I$ as the independent variable
\begin{equation}\label{comp-I}
  \frac{1}{8}\dddot{\tau}I^2-\frac{1}{4}\ddot{\tau}+\dot{\tau}\left(K+\frac{I}{2}\frac{dK}{dI}\right)+\frac{1}{2}\ddot{\rho} I + \rho \frac{dK}{dI}  -\dot{\sigma}=0,
\end{equation}
where $K$ is the semi-invariant defined in \eqref{inv}. It is remarkable that $K$ has reappeared in the computation of symmetries.  Using this invariant we are able to give a simple criterion for Eq. \eqref{mainLPE} to admit nontrivial symmetries  which are isomorphic to the Lie algebra of $G=\SL(2,\mathbb{R})\times \mathbb{R}$ or $G=\SL(2,\mathbb{R})\ltimes \Heis(3)$. Upon differentiating \eqref{comp-I} twice with respect to $I$ we obtain the classifying ODE
\begin{equation}\label{class-ODE}
   \frac{1}{4}\dddot{\tau}+\dot{\tau}\frac{d^2}{dI^2}\left(K+\frac{I}{2}\frac{dK}{dI}\right)+\rho\frac{d^3K}{dI^2}=0.
\end{equation}
We recall that for the Fokker-Planck equations we have
$$I(x)=\int\frac{dx}{\sqrt{p}},  \quad J(x)=-[(\sqrt{p})'+\frac{q}{\sqrt{p}}].$$ For this special class the symmetry analysis was presented in \cite{Hill1992}. Our $J$ differs from that of \cite{Hill1992} by a sign.

As we shall see below, the analysis of \eqref{class-ODE} will indicate that the necessary and sufficient conditions for Eq. \eqref{mainLPE} to possess $G=\SL(2,\mathbb{R})\times \mathbb{R}$ or $G=\SL(2,\mathbb{R})\ltimes \Heis(3)$ as symmetry groups is that the semi-invariant $K(I)$ be equal to
\begin{equation}\label{4-dim}
F_4(I)=\frac{\mu}{I^2}+c_2 I^2+c_0,\quad \mu\ne 0
\end{equation}
or
\begin{equation}\label{6-dim}
F_6(I)=c_2 I^2+c_1 I+c_0,
\end{equation}
respectively.

For the potential heat equation
\begin{equation}\label{poten}
 u_t=u_{xx}+V(x,t)u,
\end{equation}
we put $a=1$, $b=0$ and $c=V(x,t)$ in \eqref{deteqs} and find
$$\xi(x,t)=\frac{1}{2}\dot{\tau}x+\rho(t), \quad \phi(x,t)=-\frac{1}{8}\ddot{\tau}x^2-\frac{1}{2}\dot{\rho}x+\varsigma(t)-\frac{1}{4}\dot{\tau},$$
where the potential satisfies
\begin{equation}\label{potential-heat}
  \tau V_t+\xi V_x+\dot{\tau}V+\frac{1}{8}\dddot{\tau}x^2+\frac{1}{2}\ddot{\rho}x-\dot{\varsigma}=0.
\end{equation}
We note that we have defined $\varsigma=\sigma+\dot{\tau}/4$.
For given $V$, this equation provides a relation for the determination of the functions $\tau(t)$, $\rho(t)$ and $\varsigma(t)$. In the special case $V=V(x)$ we have $I=x$, $J=0$ and $K=V(x)$ and by  \eqref{6-dim} the symmetry algebra becomes six-dimensional when $V(x)$ is a quadratic function. For $V=K=0$ from either \eqref{comp-I} or \eqref{potential-heat} we reobtain the symmetry algebra of the heat equation. We know from the previous discussions that Eq. \eqref{poten} with potential $V(x,t)=q_2(t)x^2+q_1(t)x+q_0(t)$ is locally equivalent to the one with zero potential. In this case, \eqref{potential-heat} is split into the following set of linear ODEs for $\tau$, $\rho$ and $\varsigma$
\begin{equation}\label{tau-rho-sig}
  \begin{split}
      & \dddot{\tau}+16q_2 \dot{\tau}+8\dot{q}_2\tau=0, \\
       & \ddot{\rho}+4q_2\rho=-[3q_1 \dot{\tau}+2\dot{q}_1\tau],  \quad \dot{\varsigma}=\frac{d}{dt}(q_0\tau)+q_1\rho.
   \end{split}
\end{equation}
Since the general solutions of these equations  depend on 6 arbitrary parameters, the corresponding 6-dimensional symmetry algebra will be isomorphic to that of the  heat equation with $V=0$.

We shall present a detailed study of the infinitesimal symmetries for Eq. \eqref{mainLPE} in full generality. The cases $\rho=0$ and $\rho\ne 0$ will be considered separately.
\subsection{4-dimensional symmetry algebra}\label{Subsection4-dim}
$\rho=0$: From \eqref{class-ODE} we have
$$-\frac{\dddot{\tau}}{4\dot{\tau}}=\frac{d^2}{dI^2}\left(K+\frac{I}{2}\frac{dK}{dI}\right)=\text{const.}\equiv 4c_2.$$ Integration of  the second equation above gives
$$K=c_2 I^2+c_1I+c_0+\frac{\mu}{I^2}.$$ The first equation is $\dddot{\tau}+16c_2\dot{\tau}=0$.
Substitution of $K(I)$ into Eq. \eqref{comp-I} gives
\begin{equation}\label{splitcomp4}
  c_1=0,  \quad \dot{\sigma}=-\frac{1}{4}\ddot{\tau}+c_0\dot{\tau}.
\end{equation}
So the form of $K$ in \eqref{4-dim} is obtained.
The general solution of $\tau$ and $\sigma$ depends on four arbitrary constants and hence the symmetry algebra is 4-dimensional when $\mu\ne 0$.

Depending on the sign of $c_2$ we have three possible different solutions for $\tau$. We intend to present all symmetry vector fields in some basis.
\begin{enumerate}
\item $c_2=0$.
\begin{equation}\label{4-dim-c20}
\begin{split}
&v_1=T=\gen t,\\
&v_2=t\gen t+\frac{1}{2}\sqrt{a}I\gen x+(c_0t+\frac{1}{4}IJ)u\gen u,\\
&v_3=t^2\gen t+t\sqrt{a}I\gen x+\frac{1}{4}[2(2c_0t-1)t-I^2+2tIJ]u\gen u,\\
&v_4=M=u\gen u.
\end{split}
\end{equation}
The non-zero commutation relations are
\begin{equation}\label{comm-4-dim-c20}
\begin{split}
&[v_1, v_2]=v_1+c_0v_4,\\
&[v_1, v_3]=2v_2-\frac{1}{2}v_4, \quad [v_2, v_3]=v_3.
\end{split}
\end{equation}

\item $c_2=-\kappa^2$, $\kappa>0$.
\begin{equation}\label{4-dim-c2n}
\begin{split}
&v_1=T=\gen t,\\
&v_2=e^{4\kappa t}\gen t+2\kappa e^{4\kappa t}\sqrt{a}I\gen x-e^{4\kappa t}(-c_0+\kappa+2\kappa^2I^2-\kappa IJ)u\gen u,\\
&v_3=e^{-4\kappa t}\gen t-2\kappa e^{4\kappa t}\sqrt{a}I\gen x+e^{4\kappa t}(c_0+\kappa-2\kappa^2I^2-\kappa IJ)u\gen u,\\
&v_4=M=u\gen u.
\end{split}
\end{equation}
The non-zero commutation relations are
\begin{equation}\label{comm-4-dim-c2n}
\begin{split}
&[v_1, v_2]=4\kappa v_2,\\
&[v_1, v_3]=-4\kappa v_3, \quad [v_2, v_3]=-8\kappa v_1-8c_0 \kappa v_4.
\end{split}
\end{equation}

\item $c_2=\kappa^2$, $\kappa>0$.
\begin{equation}\label{4-dim-c2p}
\begin{split}
&v_1=T=\gen t,\\
&v_2=\cos 4\kappa t\gen t-2\kappa \sin 4\kappa t \sqrt{a}I\gen x+[c_0 \cos 4\kappa t+2 \kappa^2 \cos 4\kappa tI^2\\
& \quad +\kappa \sin 4\kappa t(1- IJ)]u\gen u,\\
&v_3=\sin 4\kappa t\gen t+2\kappa \cos 4\kappa t \sqrt{a}I\gen x+[c_0 \sin 4\kappa t+2 \kappa^2 \sin 4\kappa tI^2\\
& \quad -\kappa \cos 4\kappa t(1- IJ)]u\gen u,\\
&v_4=M=u\gen u.
\end{split}
\end{equation}
The non-zero commutation relations are
\begin{equation}\label{comm-4-dim-c2p}
\begin{split}
&[v_1, v_2]=-4\kappa v_2,\\
&[v_1, v_3]=4\kappa v_3, \quad [v_2, v_3]=4\kappa v_1+4c_0 \kappa v_4.
\end{split}
\end{equation}

\end{enumerate}

\begin{rmk}\label{rmk-lie4}
The algebra with basis \eqref{comm-4-dim-c20} is isomorphic to the direct sum $\lie_4=\Sl(2,\mathbb{R})\oplus \lr{M}$ which is easily seen by a change of basis
$$v_1\to v_1+c_0v_4,  \quad v_2\to 2v_2-\frac{1}{2}v_4, \quad v_3\to v_3.$$
The same isomorphism is also true for the algebras spanned by \eqref{4-dim-c2n} and \eqref{4-dim-c2p} which is achieved by
$$v_1\to (4\kappa)^{-2}(-8\kappa v_1-8c_0 \kappa v_4), \quad v_2\to (4\kappa)^{-1}v_2, \quad v_3\to (4\kappa)^{-1}v_3.$$
\end{rmk}

\subsection{6-dimensional symmetry algebra}\label{Subsection6-dim}
$\rho\ne 0$: From \eqref{class-ODE} we find that the following equations must be compatible
$$(1+\frac{I}{2})\dddot{K}+\ddot{K}=k_1,  \quad \dddot{K}=k_0,$$ where $k_0$, $k_1$ are some constants and the dot denotes derivative with respect to the argument $I$. This is possible if $k_0=0$ and $k_1$ arbitrary which we choose $k_1=-4c_2$. This means that $K$ should be quadratic in $I$ as in \eqref{6-dim}.

Splitting \eqref{comp-I} for this choice of $K$  provides the following equations (Compare with \eqref{tau-rho-sig})
\begin{equation}\label{splitcomp6}
\dddot{\tau}+16c_2\dot{\tau}=0,  \quad \ddot{\rho}+4c_2 \rho=-3 c_1\dot{\tau},  \quad \dot{\sigma}=-\frac{1}{4}\ddot{\tau}+c_0\dot{\tau}+c_1 \rho, \quad \rho\ne 0.
\end{equation}
The general solution of this system will depend on 6 arbitrary independent constants which lead to the following bases
for the corresponding algebras.

\begin{enumerate}
\item $c_2=0$.
\begin{equation}\label{6-dim-c20}
\begin{split}
&v_1=T=\gen t,\\
&v_2=t\gen t+\frac{1}{2}\sqrt{a}(I-3c_1t^2)\gen x+\frac{1}{4}[I(6c_1 t+J)+t(4c_0-2c_1^2t^2-3c_1tJ)]u\gen u,\\
&v_3=t^2\gen t+t\sqrt{a}(I-c_1 t^2)\gen x-\frac{1}{4}[I^2-2tI(3c_0 t+J)+t(2-4c_0 t+c_1^2 t^3+2c_1 t^2 J)]u\gen u,\\
&v_4=t\sqrt{a}\gen x+\frac{1}{2}[-I+t(c_1 t+J)]u\gen u,\\
&v_5=\sqrt{a}\gen x+\frac{1}{2}[2c_1t+J]u\gen u,\\
&v_6=M=u\gen u.
\end{split}
\end{equation}
The non-zero commutation relations are
\begin{equation}\label{comm-6-dim-c20}
\begin{split}
&[v_1, v_2]=v_1-3c_1 v_4+c_0v_6, \quad [v_1, v_3]=2v_2-\frac{v_6}{2}, \quad [v_1, v_4]=v_5,\\
&[v_1, v_5]=2c_1 v_6, \quad [v_2, v_3]=v_3, \quad [v_2, v_4]=\frac{1}{2}v_4, \quad [v_2, v_5]=-\frac{1}{2}v_5,\\
&[v_3, v_5]=-v_4,  \quad [v_4, v_5]=\frac{v_6}{2}.
\end{split}
\end{equation}

\item $c_2=-\kappa^2$, $\kappa>0$.
\begin{equation}\label{6-dim-c2n}
\begin{split}
&v_1=T=\gen t,\\
&v_2=e^{4\kappa t}\gen t+\frac{1}{\kappa} e^{4\kappa t}\sqrt{a}(-c_1+2\kappa^2 I)\gen x-\\
&\quad \frac{1}{4\kappa^2}e^{4\kappa t}[c_1^2-4c_0\kappa^2+4\kappa^3+8\kappa^4I^2+2c_1 \kappa J-4\kappa^2I(2c_1+\kappa J)]u\gen u,\\
&v_3=e^{-4\kappa t}\gen t+\frac{1}{\kappa} e^{-4\kappa t}\sqrt{a}(c_1-2\kappa^2 I)\gen x-\\
&\quad \frac{1}{4\kappa^2}e^{-4\kappa t}[c_1^2-4c_0\kappa^2-4\kappa^3+8\kappa^4I^2-2c_1 \kappa J+4\kappa^2I(-2c_1+\kappa J)]u\gen u,\\
&v_4=e^{2\kappa t}\sqrt{a}\gen x+\frac{1}{2\kappa}e^{2\kappa t}[(c_1-2\kappa^2 I)+\kappa J]u\gen u,\\
&v_5=e^{-2\kappa t}\sqrt{a}\gen x+\frac{1}{2\kappa}e^{-2\kappa t}[-(c_1-2\kappa^2 I)+\kappa J]u\gen u,\\
&v_6=M=u\gen u.
\end{split}
\end{equation}
The non-zero commutation relations are
\begin{equation}\label{comm-6-dim-c2n}
\begin{split}
&[v_1, v_2]=4\kappa v_2, \quad [v_1, v_3]=-4\kappa v_3, \quad [v_1, v_4]=2\kappa v_4,\\
&[v_1, v_5]=-2\kappa v_5, \quad [v_2, v_3]=-8\kappa v_1-\frac{2}{\kappa}p v_6, \quad [v_2, v_5]=-4\kappa v_4, \\
&[v_3, v_4]=4\kappa v_5,  \quad [v_4, v_5]=2 \kappa v_6,
\end{split}
\end{equation}
where $p=c_1^2+4c_0 \kappa^2$.

\item $c_2=\kappa^2$, $\kappa>0$.
\begin{equation}\label{6-dim-c2p}
\begin{split}
&v_1=T=\gen t,\\
&v_2=e^{4\kappa t}\gen t+\frac{1}{\kappa} e^{4\kappa t}\sqrt{a}(-c_1+2\kappa^2 I)\gen x-\\
&\quad \frac{1}{4\kappa^2}e^{4\kappa t}[c_1^2-4c_0\kappa^2+4\kappa^3+8\kappa^4I^2+2c_1 \kappa J-4\kappa^2I(2c_1+\kappa J)]u\gen u,\\
&v_3=e^{-4\kappa t}\gen t+\frac{1}{\kappa} e^{-4\kappa t}\sqrt{a}(c_1-2\kappa^2 I)\gen x-\\
&\quad \frac{1}{4\kappa^2}e^{-4\kappa t}[c_1^2-4c_0\kappa^2-4\kappa^3+8\kappa^4I^2-2c_1 \kappa J+4\kappa^2I(-2c_1+\kappa J)]u\gen u,\\
&v_4=e^{2\kappa t}\sqrt{a}\gen x+\frac{1}{2\kappa}e^{2\kappa t}[(c_1-2\kappa^2 I)+\kappa J]u\gen u,\\
&v_5=e^{-2\kappa t}\sqrt{a}\gen x+\frac{1}{2\kappa}e^{-2\kappa t}[-(c_1-2\kappa^2 I)+\kappa J]u\gen u,\\
&v_6=M=u\gen u.
\end{split}
\end{equation}
The non-zero commutation relations are
\begin{equation}\label{comm-6-dim-c2p}
\begin{split}
&[v_1, v_2]=-4\kappa v_3, \quad [v_1, v_3]=4\kappa v_2, \quad [v_1, v_4]=-2\kappa v_5,\\
&[v_1, v_5]=2\kappa v_4, \quad [v_2, v_3]=4\kappa v_1+r v_6, \quad [v_2, v_4]=2\kappa v_5, \quad [v_2, v_5]=2\kappa v_4, \\
&[v_3, v_4]=-2\kappa v_4, \quad [v_3, v_5]=2\kappa v_5,  \quad [v_4, v_5]=- \kappa v_6,
\end{split}
\end{equation}
where $r=4c_0 \kappa-\frac{c_1^2}{\kappa}$.

\end{enumerate}

\subsection{Lie-algebraic structure of the symmetry algebras}\label{structure}
The Lie algebras obtained in \ref{Subsection4-dim} and \ref{Subsection6-dim} appear in some nonstandard basis. We can transform them to known algebras. For example, the Lie algebra with the basis \eqref{comm-4-dim-c20} is isomorphic to the direct sum $\lie_4=\Sl(2,\mathbb{R})\oplus \lr{M}$ which is easily seen by a change of basis
$$v_1\to v_1+c_0v_4,  \quad v_2\to 2v_2-\frac{1}{2}v_4, \quad v_3\to v_3.$$
The same isomorphism is also true for the algebras spanned by \eqref{4-dim-c2n} and \eqref{4-dim-c2p} which is achieved by changing the basis
$$v_1\to (4\kappa)^{-2}(-8\kappa v_1-8c_0 \kappa v_4), \quad v_2\to (4\kappa)^{-1}v_2, \quad v_3\to (4\kappa)^{-1}v_3.$$

One can also see that the symmetry algebras in cases \eqref{6-dim-c20}, \eqref{6-dim-c2n} and  \eqref{6-dim-c2p} can be written as  a Levi-decomposition of the form $\lie_6=\Sl(2,\mathbb{R})\vartriangleright \nil(3)$, where $\nil(3)$ is the nilradical (Heisenberg algebra) with $M=v_6$ being the center element. This is clearly seen from the commutation relations in the first two cases. In the last case, the commutation relations between $v_1$, $v_2$, $v_3$ show that $\lr{v_1, v_2, v_3}$ is a pseudo-orthogonal Lie algebra $\Orr(2,1)$. To see this, we first transform $v_1\to 4\kappa v_1+r v_6$ followed by a scaling of the elements. The simple algebra $\Orr(2,1)$ is  isomorphic to $\Sl(2,\mathbb{R})$ which is realized by the change of basis
$$v_1\to v_1+v_2, \quad v_2\to v_3, \quad v_3\to v_1-v_2.$$
We conclude that the maximal finite-dimensional symmetry algebra $\lie_6$ is isomorphic to that of the heat equation $u_t=u_{xx}$ (see  \eqref{Heat-Symm}) and there should exist a point transformation transforming $\lie_6$ to \eqref{Heat-Symm}.
To construct such a transformation one starts with two commuting elements of the full algebra and transform to $\lr{\gen {\tilde{t}}, \gen {\tilde{x}}}$ by the equivalence group and then  the freedom left is used to appropriately transform the remaining basis elements.

We sum up our results as a theorem.
\begin{thm}
The dimension of the nontrivial symmetry algebra of Eq. \eqref{mainLPE}  is either four or six. A four-dimensional symmetry algebra occurs if and only if
\begin{equation}\label{L4}
  K(x)=\frac{1}{2}\sqrt{a}J'(x)-\frac{1}{4}J^2(x)+c(x)=\frac{\mu}{I^2}+c_2 I^2+c_0,\quad \mu\ne 0,
\end{equation}
a six-dimensional one (maximal) if and only if
\begin{equation}\label{L6}
  K(x)=c_2 I^2+c_1 I+c_0
\end{equation}
for some constants $c_2$, $c_1$, $c_0$, $\mu$ and the functions $I$, $J$ as defined in \eqref{IJ}. The first algebra  generated by \eqref{4-dim-c20},  \eqref{4-dim-c2n}, \eqref{4-dim-c2p} is isomorphic to $\lie_4=\Sl(2,\mathbb{R})\oplus \lr{M}$, the second one generated by \eqref{6-dim-c20},  \eqref{6-dim-c2n}, \eqref{6-dim-c2p} isomorphic to the Schr\"odinger (or heat) algebra.
\end{thm}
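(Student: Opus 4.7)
The plan is to assemble the proof from the case analysis carried out in Subsections \ref{Subsection4-dim} and \ref{Subsection6-dim}, organizing it around the single classifying equation \eqref{comp}. The structural claim (four or six dimensional) and the $K$-criteria \eqref{L4}, \eqref{L6} come out of one polynomial-splitting argument; the isomorphism statements follow from a combination of explicit basis changes and an appeal to Proposition \eqref{pro2}.

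First I would treat \eqref{comp} as an identity in $x$ (equivalently in $I$, after passing to $I$ as independent variable via $dI/dx=1/\sqrt{a}$) with $t$-dependent coefficients $\tau(t), \rho(t), \sigma(t)$. The left-hand side is a sum of a term polynomial in $I$ with $t$-coefficients and the pieces $\dot\tau\,\sqrt{a}K'(x)I(x)/2 + \rho\,\sqrt{a}K'(x)$ and $\dot\tau K(x)$. Differentiating \eqref{comp} with respect to $x$ enough times to eliminate $\sigma(t), \rho(t), \tau(t), \dot\tau, \ddot\tau, \dddot\tau$, one obtains an ODE for $K$ alone whose general solution is $K(I)=\mu I^{-2}+c_2 I^2+c_1 I+c_0$. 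Re-substituting this $K$ into \eqref{comp} and matching powers of $I$ term by term then forces $c_1\,\mu=0$: the coefficient containing $\rho\sqrt{a}K'(x)$ contributes a multiple of $\mu/I^3$ that can only be cancelled if either $\rho\equiv 0$ or $\mu=0$. This split is exactly the dichotomy \eqref{L4}/\eqref{L6}.

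Second, substituting each form of $K$ back into \eqref{comp} and splitting on powers of $I$ yields the linear ODE systems \eqref{splitcomp4} and \eqref{splitcomp6}, whose solution spaces are $4$- and $6$-dimensional respectively. Reading off $\xi$ and $\phi$ from \eqref{xi} and \eqref{phi} reproduces exactly the vector-field bases listed in \eqref{4-dim-c20}--\eqref{4-dim-c2p} and \eqref{6-dim-c20}--\eqref{6-dim-c2p}, giving the three sign-of-$c_2$ subcases in each family. That no intermediate dimensions ($3$ or $5$) arise is forced by the linearity of the ODE systems: once the characteristic equations $\dddot\tau+16c_2\dot\tau=0$ and (in the six-dimensional case) $\ddot\rho+4c_2\rho=-3c_1\dot\tau$ are written down, their solution spaces have the stated dimensions with no further degeneracies.

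Third, for the algebraic identification: the $4$-dimensional case is handled by the explicit change of basis in Remark \ref{rmk-lie4}, which directly exhibits $\mathfrak{g}_4\cong\Sl(2,\mathbb{R})\oplus\langle M\rangle$ in all three sign cases. For $\mathfrak{g}_6$ I would bypass the case-by-case basis juggling by invoking Proposition \eqref{pro2}: condition \eqref{L6} is precisely the reducibility criterion \eqref{Riccati}, so any equation in this class is point-equivalent to the standard heat equation $u_t=u_{xx}$; since point transformations conjugate symmetry algebras isomorphically, $\mathfrak{g}_6$ must be isomorphic to the heat (Schr\"odinger) algebra $\Sl(2,\mathbb{R})\vartriangleright\heis(3)$. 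As a sanity check one can verify in each sign case that $\langle v_4,v_5,v_6\rangle$ with $v_6=M$ central is a Heisenberg subalgebra and that $\langle v_1,v_2,v_3\rangle$ modulo $v_6$ is $\Sl(2,\mathbb{R})$, matching the semi-direct structure of \eqref{Heat-Symm}.

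The main obstacle is the first step: carefully justifying that no exotic $K$ beyond the two stated families produces extra symmetries, and in particular that the $c_1\mu=0$ split is unavoidable. The bookkeeping of which powers of $I$ are absorbed by $\tau,\rho,\sigma$ versus which must annihilate $K$ is delicate, because $\sqrt{a}K'(x)$ and $K(x)$ enter \eqref{comp} multiplied by different $t$-functions and one must ensure no cancellations are missed. Once this polynomial-splitting step is executed cleanly, the remainder is either routine ODE solving or a one-line appeal to Proposition \eqref{pro2}.
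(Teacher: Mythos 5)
Your proposal is correct and, for the core classification, follows the same route as the paper: derive the single classifying relation \eqref{comp}, pass to $I$ as independent variable, differentiate to eliminate the $t$-dependent functions so that $K$ is forced into the family $\mu I^{-2}+c_2I^2+c_1I+c_0$, and then re-substitute and split on powers of $I$ to obtain \eqref{splitcomp4} and \eqref{splitcomp6}, whose $4$- and $6$-dimensional solution spaces give the bases \eqref{4-dim-c20}--\eqref{4-dim-c2p} and \eqref{6-dim-c20}--\eqref{6-dim-c2p}. (The paper reaches the same family of $K$'s slightly differently, by first canonicalizing to $u_t=u_{xx}+V(x)u$ and solving \eqref{ODE}--\eqref{ODE2}; in the canonical variables $I=x$ and $K=V$, so the two derivations are the same computation in different coordinates.) One small elision in your write-up: matching the $I^{-3}$ term gives $\mu\rho=0$, not directly $c_1\mu=0$; you then need the $I^{1}$ coefficient with $\rho=0$, namely $c_1\dot\tau=0$, together with $\dot\tau\ne 0$ (required for any extension beyond $\langle\partial_t,M\rangle$) to conclude $c_1=0$ when $\mu\ne 0$ --- this is exactly the step the paper records as ``$\eta=0$ and $c_1\dot\tau=0$''. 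Where you genuinely diverge is the identification of $\lie_6$: the paper exhibits the isomorphism with $\Sl(2,\mathbb{R})\vartriangleright \nil(3)$ by inspecting the commutation relations \eqref{comm-6-dim-c20}--\eqref{comm-6-dim-c2p} and performing explicit basis changes in each sign case, whereas you observe that \eqref{L6} coincides with the reducibility criterion \eqref{Riccati} (note: this is the unnumbered proposition containing \eqref{Riccati}, not Proposition \ref{pro2}), so the algebra is conjugate to \eqref{Heat-Symm} by the explicit point transformation \eqref{ToHeatTr2}. This is not circular, since that proposition is proved constructively in Section \ref{Section2} independently of the symmetry classification, and it buys you the three sign cases at once; the paper's basis-change route is more laborious but produces the explicit intertwining of generators, which is what the later applications actually use.
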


\subsection{Lie symmetries in $n$ dimensions}
For the sake of completeness we also present the two canonical forms of Eq. \eqref{Heatn} and their symmetry algebras.
We know from Subsection \eqref{Sectionheatn} that
$$u_t=\Delta u+\left[a(t)|x|^2+\sum_{k=1}^n b_k(t) x_k+\sum_{k=1}^n c_k(t)\right]u$$ is equivalent to the standard heat equation $u_t=\Delta u$ under the equivalence transformations. They have isomorphic symmetry groups having the structure $G=\Schr(n)=(\SL(2,\mathbb{R})\times \SO(n)) \ltimes  \Heis(n)$, where $\Heis(n)$ is the $(2n+1)$-dimensional Heisenberg group and $\dim G=n(n-1)/2+4+2n=(n^2+3n+8)/2$.  The Lie algebra of $G$  in the standard basis (see for example \cite{FushchichShtelenSerov1993}) is spanned by
\begin{equation}\label{Lie-n}
\begin{split}
& T=\gen t,  \quad D=2t \gen t+\sum_{k=1}^{n}x_k \gen{x_{k}}, \quad C=t^2\gen t+t\sum_{k=1}^{n}x_k \gen{x_{k}}-\frac{1}{4}(|x|^2+2nt)u\gen u,\\
& J_{kl}=x_k\gen{x_{l}}-x_l\gen{x_{k}},  \quad P_k=\gen{x_{k}}, \quad B_k=t\gen{x_{k}}-\frac{x_k u}{2}\gen u,  \quad M=u\gen u,  \quad k,l=1,2,\ldots,  n.
\end{split}
\end{equation}
These symmetries were originally obtained by Goff \cite{Goff1927} in 1927.

We turn again to the $n$-dimensional heat equation with potential $V(x,t)$ of \eqref{Heatn}.
\begin{thm}

The Lie point symmetries of Eq. \eqref{Heatn} are generated by  vector fields
\begin{subequations}\label{Heat-Symm-n}
\begin{equation}
  \mathbf{v}=\sum_{\mu=0}^{n}\xi_{\mu}(x)\gen \mu+\eta(x,u)\gen u=\tau(t)\gen t+\sum_{k=1}^{n}\xi_k(x,t)\gen{x_{k}}+\phi(x,t)u\gen u,
\end{equation}
where
\begin{equation}
  \xi_k=\frac{1}{2}\dot{\tau}a_{kl}x_l+\rho_k(t), \quad
\phi(x,t)=-\frac{1}{8}\ddot{\tau}|x|^2-\frac{1}{2}\dot{\rho}_k(t)a_{kl}x_l+
\sigma(t)-\frac{n}{4}\dot{\tau}.
\end{equation}
\end{subequations}
Here $a_{kl}=-a_{lk}$, i.e. $a_{kl}\in \So(n)$ and the functions $\tau(t)$, $\rho_k(t)$, $\sigma(t)$ and the constants $a_{kl}$ depend on the potential and satisfy
\begin{equation}\label{potential-eq-n}
  \tau V_t+(\mathbf{\xi}\cdot{\mathbf{\nabla}_x})V+\dot{\tau}V+\frac{1}{8}\dddot{\tau}|x|^2
+\frac{1}{2}\ddot{\rho}_k(t)a_{kl}x_l-\dot{\sigma}=0.
\end{equation}
\end{thm}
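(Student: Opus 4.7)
The plan is to apply the standard infinitesimal invariance criterion directly to \eqref{Heatn}. Start with a general vector field
$$\mathbf{v}=\tau(x,t,u)\gen t+\sum_{k=1}^{n}\xi_k(x,t,u)\gen{x_k}+\eta(x,t,u)\gen u,$$
compute its second prolongation $\pr 2 \mathbf{v}$, and require $\pr 2 \mathbf{v}\,(u_t-\Delta u-Vu)\big|_{u_t=\Delta u+Vu}=0$. Because the equation is linear and homogeneous in $u$, a routine splitting in $u$ and its derivatives forces $\tau=\tau(t)$ and $\eta=\phi(x,t)u+S(x,t)$ with $S$ a solution of the original equation (the superposition piece, suppressed in the statement).

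Next I would derive the constraints on $\xi_k$. Equating coefficients of the mixed second derivatives $u_{x_k x_l}$, $k\ne l$, and of $u_{t x_k}$ in the prolonged expression leads exactly to the system
$$\xi_{k,t}=0\text{ for time in spatial slot},\quad \xi_{k,x_l}+\xi_{l,x_k}=0\ (k\ne l),\quad 2\xi_{k,x_k}=\dot\tau\ \text{(no sum)}.$$
These are the same algebraic relations that appeared in the derivation of Proposition \ref{pro3} via \eqref{mixed-der}; solving them gives the linear-in-$x$ ansatz
$$\xi_k=\frac{1}{2}\dot\tau\,x_k+\sum_{l=1}^{n}a_{kl}x_l+\rho_k(t),\qquad a_{kl}+a_{lk}=0,$$
with the $a_{kl}$ necessarily constant (the $t$-dependence is absorbed into $\rho_k$). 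At this stage the infinitesimal orthogonal rotation piece $\sum_l a_{kl}x_l$ emerges in exactly the way the finite $\Or(n)$ piece does for the equivalence group.

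With $\tau$ and the $\xi_k$ fixed, the remaining part of the determining system splits into two blocks. The coefficients of $u_{x_k}$ yield a first-order PDE for $\phi$ whose integration (against the quadratic structure of the $\xi_k$) forces
$$\phi(x,t)=-\frac{1}{8}\ddot\tau\,|x|^2-\frac{1}{2}\sum_{k=1}^{n}\dot\rho_k(t)x_k+\sigma(t)-\frac{n}{4}\dot\tau,$$
with $\sigma$ an arbitrary function of $t$; the constant $-n\dot\tau/4$ is the $n$-dimensional analogue of the $-\dot\tau/4$ found in \eqref{potential-c20} of the one-dimensional classification and traces back to the Laplacian acting on the quadratic term. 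Finally, the coefficient of $u^0$ (the ``potential equation'') gives
$$\tau V_t+(\mathbf{\xi}\cdot\nabla_x)V+\dot\tau V+\frac{1}{8}\dddot\tau|x|^2+\frac{1}{2}\sum_{k=1}^{n}\ddot\rho_k(t)x_k-\dot\sigma=0,$$
which is precisely \eqref{potential-eq-n}.

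The main technical obstacle I anticipate is the careful bookkeeping of the prolongation coefficients and the splitting: in $n$ dimensions there are $O(n^2)$ determining equations arising from the second derivatives and many cross terms, and one must be attentive to the off-diagonal relations $\xi_{k,x_l}+\xi_{l,x_k}=0$ to isolate the antisymmetric constants $a_{kl}$. Once that linear-algebraic step is done cleanly, the rest reduces to integrations that mirror the one-dimensional argument around \eqref{phi}--\eqref{comp}, and the theorem follows by collecting terms.
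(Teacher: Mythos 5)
Your proof is correct in substance but takes a genuinely different route from the paper. You run Lie's standard infinitesimal algorithm from scratch: prolong a general vector field, split the determining equations to force $\tau=\tau(t)$ and $\eta$ linear in $u$, solve the system $\xi_{k,x_l}+\xi_{l,x_k}=\dot{\tau}\,\delta_{kl}$ for the $\xi_k$, integrate for $\phi$, and read off the classifying equation for $V$. The paper instead observes that a point symmetry of \eqref{Heatn} is exactly an equivalence transformation carrying the equation to itself with the same potential, embeds the finite transformations \eqref{Equivn} in a one-parameter family, and obtains the infinitesimals by differentiating at $\varepsilon=0$: the form of $\xi_k$ falls out of differentiating \eqref{mixed-der}, and \eqref{potential-eq-n} out of differentiating \eqref{transformed-pot-n}. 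The paper explicitly names your route as the alternative and chooses the other because the prolongation bookkeeping was already done once, at the finite level, in Proposition \ref{pro3}; your version buys self-containedness at the price of the $O(n^2)$ splitting you acknowledge. Two details in your write-up should be tightened. First, the $a_{kl}$ are constant not because their $t$-dependence can be absorbed into $\rho_k$ (a time-dependent rotation is not a translation); the correct reason is that the $u_{x_k}$-coefficient equations give $\phi_{x_k}=-\tfrac{1}{2}\xi_{k,t}$ (the Laplacian of the linear $\xi_k$ vanishes), and equality of the mixed partials $\phi_{x_kx_l}=\phi_{x_lx_k}$ forces $\dot{a}_{kl}=\dot{a}_{lk}$, which together with $a_{kl}=-a_{lk}$ yields $\dot{a}_{kl}=0$. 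Second, the label \eqref{potential-c20} you cite does not exist in the paper; the one-dimensional analogue of the $-\tfrac{n}{4}\dot{\tau}$ term is the $-\tfrac{1}{4}\dot{\tau}$ appearing in the expression for $\phi$ in the potential-form computation of Section \ref{Section3}, and in any case that term is only a normalization of the arbitrary function $\sigma(t)$ chosen so that \eqref{potential-eq-n} takes the stated form.
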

\begin{proof}
One way is to apply Lie's standard algorithm. However, it is easier by Proposition \eqref{pro3}. Indeed, we shall use the fact that a point symmetry of \eqref{Heatn} is an equivalence transformation with the property $\tilde{u}_{\tilde{t}}=\Delta \tilde{u}+V(\tilde{x},\tilde{t})\tilde{u}$ whenever $u_t=\Delta u+V(x,t)u$. So the symmetry vector field $\mathbf{v}$ is the infinitesimal generator of a local one-parameter group of equivalence transformations. If we allow $X_{\mu}$, $\mu=0,1,\ldots n$ and $U$ to depend on a parameter $\varepsilon$
$$
\tilde{x}_{\mu}(x; \varepsilon)=X_{\mu}(x; \varepsilon), \quad
\tilde{u}(x,u; \varepsilon)=U(x,u; \varepsilon),
$$
then the generators are easily found by differentiation  with respect to the parameter $\varepsilon$ at $\varepsilon=0$ as
$$
\xi_{\mu}(x)=\frac{d X_{\mu}}{d\varepsilon}(x; \varepsilon)\Bigr|_{\varepsilon=0},\qquad \eta(x,u)=\frac{d U}{d\varepsilon}(x, u; \varepsilon)\Bigr|_{\varepsilon=0}, \quad \mu=0,1,2,\ldots,n.
$$
Here we have
$\tau(t)=T'(t;\varepsilon)\bigr|_{\varepsilon=0}$, $\xi_{\sigma}=X'_{\sigma}(x;\varepsilon)\bigr|_{\varepsilon=0},$ where the prime denotes derivative with respect to $\varepsilon$ and from differentiation of  \eqref{mixed-der}, it follows
$$\xi_{\sigma, \rho}+\xi_{\rho, \sigma}=\dot{\tau}\delta_{\sigma \rho}, \quad \sigma, \rho=1,2,\ldots, n$$ and a similar expression for $\phi(x,t)=R'(x;\varepsilon)\bigr|_{\varepsilon=0}$ with $R(x;0)=1$ from \eqref{F}.
The symmetry condition \eqref{potential-eq-n} is obtained from differentiating equation \eqref{transformed-pot-n-2} with respect to $\varepsilon$ at $\varepsilon=0$.
\end{proof}
For a time-dependent rotationally invariant potential $V=V(|x|,t)=V(r,t)$,  rotations in addition to the trivial symmetries due to the linearity of the heat equation are always symmetries. Let us investigate all possible extensions of these symmetries. Eq. \eqref{potential-eq-n} can be written as follows:
\begin{equation}\label{potential-heat-n}
  \tau V_t+\frac{\dot{\tau}}{2}(rV_r+2 V)+\frac{\dddot{\tau}}{8}r^2+\sum_{k=1}^n\left(\frac{1}{2}\ddot{\rho}_k+\rho_k\frac{V_r}{r}\right)x_k-\dot{\sigma}=0.
\end{equation}
For a rotationally invariant potential $V(r)$, a consistent solution of  equation \eqref{potential-heat-n} is possible if $rV'+2 V=4Ar^2+2C$ for some constants $A$ and $C$. This implies $V(r)=A r^2+C+D/r^2$, $D$ being another constant. If $D=0$ we find
$$\dddot{\tau}+16A\dot{\tau}=0,  \quad \ddot{\rho_k}+4A\rho_k=0,  \quad \sigma=C \tau+\sigma_0.$$

For the more general time dependent (not rotationally invariant unless $B_k=0$)  potential
$$V(x,t)=A(t)|x|^2+\sum_{k=1}^{n}B_k(t)x_k+C(t),$$  splitting of the determining equation \eqref{potential-eq-n} gives
$$\dddot{\tau}+16A\dot{\tau}+8\dot{A}\tau=0, \quad \ddot{\rho_k}+4A\rho_k=-[3B_k\dot{\tau}+2\dot{B}_k\tau],  \quad \sigma=C \tau+\sigma_0. $$
$\tau$ satisfies is a self-adjoint third order linear equation with maximal symmetry and has the general solution $\tau=c_1\psi_1^2+c_2\psi_1\psi_2+c_3\psi_2^2$ in terms of the two independent solutions $\psi_1$ and $\psi_2$ of the linear oscillator equation $\ddot{\psi}+4A\psi=0$.
In this case  Lie symmetry algebra is isomorphic to $\Schr(n)$ and there is a point transformation taking $V$ to zero. The special choice $V=0$ of course leads to \eqref{Lie-n}.

On the other hand, if $D\ne 0$, then we should have $\rho_k=0$. Consequently,
$$u_t=\Delta u+\left(A |x|^2 +C+\frac{D}{|x|^2}\right)u, \quad D\ne 0$$ will be invariant under the group isomorphic to $G=\SL(2,\mathbb{R})\times \Or(n) \times \mathbb{R}$, where $\mathbb{R}$ is the abelian group of reals generated by time translations (trivial symmetries are excluded). It is worth remarking that in the absence of rotational symmetry, invariance under the symmetry $G=\SL(2,\mathbb{R})\times \mathbb{R}$ generated by the vector fields $\curl{T, D, C}$ of \eqref{Lie-n} results in a homogenous potential $V(x)$ of degree $-2$:
$$V(x)=V(x_1,\ldots,x_n)=x_1^{-2}H\left(\frac{x_2}{x_1},\ldots,\frac{x_n}{x_1}\right).$$

We refer to \cite{Thieullen2002} for the Lie point symmetries of the most general multidimensional parabolic equations
$$u_t+\sum_{i,j=1}^n a^{ij}(x,t)u_{x_ix_j}+\sum_{i=1}^n b^{i}(x,t)u_{x_i}+c(x,t)u=0$$
on a Riemannian manifold $\mathbb{R}^n$ equipped with the metric defined by the matrix $a^{-1}$ and also to \cite{0305-4470-32-48-310}.
For 2+1-dimensional FP equations  in the special case where the diffusion matrix is constant and the drift vector is irrotational, a complete symmetry classification was given in \cite{0305-4470-32-14-008}.

\subsection{Applications}
We have seen in Section \eqref{Section2} that any equation of the form
$$u_t=u_{xx}+(c_2 x^2+c_1 x+c_0)u$$
are locally equivalent to the heat equation under the transformation \eqref{ToHeatTr}. This means they should have symmetry algebra isomorphic to the heat algebra $\lie=\Sl(2,\mathbb{R})\vartriangleright \nil(3)$. As a reference we will list vector fields spanning $\lie$ for the heat equation with linear potential and the quadratic potential (harmonic oscillator equation).
\begin{exa}
$$u_t=u_{xx}-x u.$$

We have $I=x$, $J=0$, $K=c(x)=-x$ so that we put $c_2=c_0=0$, $c_1=-1$  in formula \eqref{6-dim-c20} and find the basis
\begin{equation}
\begin{split}
& v_1= \gen t,\\
& v_2=t \gen t+\frac{1}{2}(x+3t^2)\gen x-\frac{1}{2}(3xt+t^3)M,\\
& v_3=t^2 \gen t+(xt+t^3)\gen x-\left[\frac{x^2}{4}+\frac{3x^2t}{2}+\frac{t^4}{4}+\frac{t}{2}\right]M,\\
& v_4=\gen t +\gen x-tM,\\
& v_5=t\gen x-\frac{1}{2}(x+t^2)M,\\
& v_6=u\gen u=M.
\end{split}
\end{equation}
\end{exa}

\begin{exa}
\begin{equation}\label{Harmonic-Oscillator}
  u_t=u_{xx}+x^2 u.
\end{equation}

We have $I=x$, $J=0$, $K=c(x)=x^2$ so that we put $c_1=c_0=0$, $c_2=1$ in formula \eqref{6-dim-c20} and find the basis
\begin{equation}
\begin{split}
& v_1= \gen t,\\
& v_2= \cos 4t\gen t-2\sin 4t x\gen x+(2\cos 4t x^2+\sin 4t)M,\\
& v_3= \sin 4t\gen t+2\cos 4t x\gen x+(2\sin 4t x^2-\cos 4t)M,\\
& v_4=\sin 2t \gen x-\cos 2t x M,\\
& v_5=\cos 2t \gen x+\sin 2t x M,\\
& v_6=u\gen u=M.
\end{split}
\end{equation}
A variant of the harmonic oscillator equation \eqref{Harmonic-Oscillator}
$$u_t=u_{xx}-(x^2-1) u$$
has been used to construct Mehler's kernel by transforming the time independent solution $u_0(x)=e^{-x^2/2}$ by a symmetry group (generated by $v_3$ of \eqref{Harmonic-Oscillator-variant-basis}) followed by a  group (generated by $v_4-v_5$ of \eqref{Harmonic-Oscillator-variant-basis}) equivalent to the space translation
(the details are found in \cite{CraddockDooley2001}). Observe that there are no elementary solutions of the ODE (stationary oscillator equation) $u_{xx}\pm x^2u=0$. Its general solution is expressible in terms of parabolic cylinder functions. We simply put $c_2=-c_0=-1$, $c_1
=0$ in \eqref{6-dim-c2n} to find a basis for the symmetry algebra
\begin{equation}\label{Harmonic-Oscillator-variant-basis}
\begin{split}
& v_1= \gen t,\\
& v_2=  e^{4t}\gen t+2e^{4t} x\gen x-2e^{4t} x^2 M,\\
& v_3= e^{-4t}\gen t-2e^{-4t} x\gen x-2e^{4t} (x^2-1)M,\\
& v_4=e^{2t} \gen x-e^{2t} x M,\\
& v_5=e^{-2t} \gen x+e^{-2t} x M,\\
& v_6=u\gen u=M.
\end{split}
\end{equation}
\end{exa}

\begin{exa}[Ref. \cite{GazizovIbragimov1998}]\label{ex-BBS}
The back-propagating Black-Scholes equation with volatility $A$:
\begin{equation}\label{B-Black-Scholes}
  u_t+\frac{1}{2}A^2x^2u_{xx}+Bxu_x-Cu=0,  \quad A\ne 0.
\end{equation}
We have (after time reversal transformation $t\to -t$)
$$I=\frac{\sqrt{2}}{A} \ln x, \quad J=-\frac{\sqrt{2}}{A}\mathcal{D},  \quad K=-\frac{1}{2A^2}\mathcal{D}^2-C,\quad \mathcal{D}=B-\frac{A^2}{2}.$$
Since $K$ is a nonzero constant, our equation should have a 6-dimensional symmetry algebra.  We can use formula \eqref{6-dim-c20} with  $c_2=c_1=0$, $c_0=K$ to find a basis spanning the symmetry algebra. It can also be mapped to the heat equation by the general point transformation (we put $q_2=q_1=0$ and $q_0=-K$)
$$\tilde{t}=T(t), \quad \tilde{x}=\sqrt{\dot{T}}\left(\frac{\sqrt{2}}{A}\ln x+\omega_1 t+\omega_0\right),$$ $$u=\nu(t)x^{-\mathcal{D}/A^2+\dot{\delta}/(A\sqrt{2\dot{T}})}\exp[\frac{1}{2A^2}\frac{\ddot{T}}{\dot{T}}(\ln x)^2]\tilde{u},$$ where $\nu(t)$ is obtained from \eqref{nu} and $\delta(t)=\sqrt{\dot{T}}\omega(t)=\sqrt{\dot{T}}(\omega_1 t+\omega_0)$, $T(t)=(kt+m)/(lt+n)$, $kn-lm\ne 0$. Here $\omega_1, \omega_0, k, l, m, n$ are all arbitrary constants. The Black-Scholes transformation corresponds to the simplest choice $T=t$ (up to translation, the second transformation in \cite{GazizovIbragimov1998}):
\begin{equation}\label{BS-transform}
 \tilde{t}=t, \quad  \tilde{x}=\frac{\sqrt{2}}{A}\ln x+\omega_1 t+\omega_0, \quad u=\nu_0\exp[(K+\frac{\omega_1^2}{4})t]x^{-\frac{\mathcal{D}}{A^2}+\frac{\omega_1}{\sqrt{2} A}}\tilde{u}.
\end{equation}
We recover the first transformation of \cite{GazizovIbragimov1998} (which apparently contains a misprint) by choosing an inversional transformation $T=-1/t$ (up to a translation in $t$).

\end{exa}

\begin{exa}[Ref. \cite{CraddockLennox2007, CraddockLennox2009}]
\begin{equation}\label{drift-PDE}
  u_t=\sigma x u_{xx}+f(x)u_x-\mu x^{r}u.
\end{equation}
We have $a=\sigma x$, $b=f(x)$, $c=-\mu x^{r}$ and
$$I=\frac{2}{\sqrt{\sigma}}\sqrt{x},  \quad J=\frac{1}{\sqrt{\sigma x}}[-f+\frac{\sigma}{2}],  \quad K=-\frac{1}{2\sigma x}[\sigma x f'-\sigma f+\frac{1}{2}f^2+2\sigma \mu x^{r+1}].$$
We require to obtain at least $\lie_4$ or $\lie_6$ symmetry. So we form the equation
$$ K=c_2 I^2+c_1 I+c_0+\frac{m}{I^2},$$ which is explicitly written as
$$\sigma x f'-\sigma f+\frac{1}{2}f^2+2\sigma \mu x^{r+1}+\frac{3}{8}\sigma^2=-8 c_2 x^2-4 \sqrt{\sigma } c_1 x^{3/2} -2 \sigma c_0 x  -\frac{m \sigma ^2}{2}.$$
With the definition of the constants
$$A=-4\sqrt{\sigma}c_1, \quad B=-8 c_2, \quad C=-2\sigma c_0, \quad D=-\frac{\sigma^2}{8}(4m+3)$$
it has the form
\begin{equation}\label{Riccati4}
\sigma x f'-\sigma f+\frac{1}{2}f^2+2\sigma \mu x^{r+1}=Ax^{3/2}+B x^2+Cx+D.
\end{equation}
Solutions  of this Riccati equation will produce  a nontrivial symmetry group of dimension $d\in\curl{4,6}$ depending on whether $m\ne 0$ or $m=0$.
We have  $\lie_6$ symmetry algebra if $m=0$ ($3\sigma^2+8D=0$), otherwise $\lie_4$ one when $c_1=0$.
The particular choices  $c_1=0$ and $c_1=c_2=0$ lead to the Riccati equations
\begin{subequations}
\begin{equation}\label{drift-Riccati-1}
  \sigma x f'-\sigma f+\frac{1}{2}f^2+2\sigma \mu x^{r+1}=Ax^{2}+B x+C,
\end{equation}
and
\begin{equation}\label{drift-Riccati-2}
  \sigma x f'-\sigma f+\frac{1}{2}f^2+2\sigma \mu x^{r+1}=Ax+B.
\end{equation}
\end{subequations}
In the first case, $d=\dim \lie=4, 6$ depending on whether $m=-\frac{1}{4\sigma^2}(3\sigma^2+8C)\ne 0$, or  $3\sigma^2+8C=0$, respectively. In the second case, the splitting of the dimension is dictated by the condition $3\sigma^2+8B\ne 0$ or $3\sigma^2+8B=0$.
The vector fields of the $\lie_4$ and $\lie_6$ algebras can be directly obtained from the formulas of Subsections \eqref{Subsection4-dim} and \eqref{Subsection6-dim} depending on whether $c_2=-B/8$ or $c_2=0$ in the respective cases.

Note that the Cox-Ingersoll-Ross (CIR) PDE is included in the class \eqref{drift-PDE} where
$f(x)=a+bx$, $\mu=0$. The drift $f(x)$ satisfies \eqref{drift-Riccati-1} with
$$A=\frac{b^2}{2}, \quad B=ab, \quad C=\frac{a^2-2\sigma a}{2}, \quad b\ne 0.$$ Its symmetry algebra turns out  to be six-dimensional if $3\sigma^2-8a \sigma+4a^2=0$, otherwise four-dimensional. The first possibility implies that an invertible point transformation mapping the CIR model to the heat equation exists.  The  formula \eqref{ToHeatTr2} is used to construct such a transformation.

The more general form of \eqref{drift-PDE}, where
\begin{equation*}\label{fg}
  a(x)=\sigma x^{\gamma},  \quad b(x)=f(x), \quad c(x)=-g(x),  \quad x>0
\end{equation*}
is also very easily handled by forming the nontrivial symmetry  conditions \eqref{L4}-\eqref{L6} and using
$$I(x)=\frac{2}{\sqrt{\sigma}(2-\gamma)}x^{(2-\gamma)/2}, \quad J(x)=\frac{x^{(\gamma-2)/2} (\gamma  \sigma -2 h(x))}{2 \sqrt{\sigma }}, \quad h(x)=x^{1-\gamma}f(x).$$ A six-dimensional symmetry algebra occurs when
$$\sigma x h'-\sigma h+\frac{1}{2}h^2+2\sigma x^{2-\gamma}g=Ax^{\frac{3}{2}(2-\gamma)}+Bx^{4-2\gamma}+Cx^{2-\gamma}+D,$$
where
$$A=\frac{4\sqrt{\sigma}c_1}{\gamma-2},  \quad B=-\frac{8c_2}{(\gamma-2)^2},  \quad  C=-2\sigma c_0, \quad D=\frac{\gamma(\gamma-4)}{8}\sigma^2, \quad \gamma\ne 2.$$
For the special choice $\gamma=1$,  $g(x)=\mu x^r$, it is reduced to \eqref{Riccati4} with $m=0$.

The case $\gamma=2$ should be analysed separately. Using
$$I(x)=\frac{\ln x}{\sqrt{\sigma}},  \quad J(x)=\frac{x \sigma -f(x)}{\sqrt{\sigma }x}$$ we find the Riccati equation
$$\sigma x^2 f'+\frac{1}{2}f^2-2\sigma xf+2\sigma x^2g=Ax^2+Bx^2 \ln x+C (x\ln x)^2,$$ where
$$A=-2\sigma(c_0+\sigma),  \quad B=-2\sqrt{\sigma}c_1,  \quad C=-2c_2$$ as the condition for the existence of heat algebra ($\lie_6$ symmetry algebra).

The corresponding bases are immediately at our disposal  from the results of Subsection \eqref{Subsection6-dim}.

Following the same procedure, the Lie point symmetries of all equations studied in Refs. \cite{CraddockDooley2001, CraddockPlaten2004, CraddockLennox2009,  CraddockKonstandatosLennox2009, BaldeauxPlaten2013} can also be recovered with minimal computational effort.
\end{exa}

A special case where $f$ and $g$ of \eqref{fg} are some power functions was studied in \cite{SinkalaLeachOHar2008}.

\begin{exa}[\cite{SinkalaLeachOHar2008}]
$$u_t=-\frac{1}{2}\rho^2 x^{2\gamma}u_{xx}+[\alpha+\beta x-\lambda \rho x^{\gamma}]u_x-xu,  \quad \gamma\ne 1, \; \rho\ne 0.$$
Here
$$I(x)=\frac{\sqrt{2} x^{1-\gamma }}{(1-\gamma ) \rho }, \quad J(x)=\frac{\sqrt{2} x^{-\gamma }}{\rho } \left(-\alpha -x \beta +x^{\gamma } \lambda  \rho +\frac{1}{2} x^{2 \gamma-1 } \gamma  \rho ^2\right).$$ Now we can set up our condition for the existence of nontrivial symmetry
\begin{equation}\label{symcond}
\begin{split}
    K(x)&=x^{-1}\left[-x^2+\alpha  \gamma + \left(-\frac{\beta }{2}+\beta  \gamma -\frac{\lambda ^2}{2}\right)x+\frac{\alpha  \lambda x^{1-\gamma } }{\rho }-\frac{\alpha ^2x^{1-2 \gamma } }{2 \rho ^2}-\frac{\alpha  \beta x^{2-2 \gamma }  }{\rho ^2}-\right.\\
   &\left.-\frac{\beta ^2 x^{3-2 \gamma } }{2 \rho ^2}+\frac{\beta  \lambda  x^{2-\gamma } }{\rho }-\frac{1}{2}\gamma  \lambda  \rho x^{\gamma }  + \left(-\frac{\gamma  \rho ^2}{4}+\frac{\gamma ^2 \rho ^2}{8}\right)x^{-1+2 \gamma }\right] \\
    & =\frac{2 c_2 x^{2-2 \gamma }}{(1-\gamma )^2 \rho ^2}+\frac{\sqrt{2} c_1 x^{1-\gamma }}{(1-\gamma ) \rho }+\frac{\mu \rho^2}{2}(\gamma-1)^2x^{-2+2 \gamma }+c_0.
\end{split}
\end{equation}
Matching  different powers of $x$ reveals that $\gamma$ can take values in the set $\curl{0, 1/2, 3/2, 2}$. A study of symmetries for all possible $\gamma$ was given in \cite{SinkalaLeachOHar2008}. We shall only consider the case $\gamma=2$ to illustrate how the results of \cite{SinkalaLeachOHar2008} can be recovered with our approach.   We put $\gamma=2$ in \eqref{symcond} and split with respect to  the powers of $x$  we find
$$\alpha=0, \quad \mu=0, \quad 1+\lambda \rho=0, \quad 4 c_2+\beta ^2=0, \quad \beta =\sqrt{2} c_1 \rho , \quad (-8 c_0+12 \beta ) \rho ^2=4.$$
So the last three equations can be solved for $c_2$, $c_1$ and $c_0$ in terms of $\beta$, $\rho$ with the condition $\lambda \rho=-1$ as
$$c_2=-\frac{\beta^2}{4},  \quad c_1=\frac{\beta}{\sqrt{2}\rho},  \quad c_0=\frac{1}{2\rho^2}(3\beta \rho^2-1).$$
In summary, we have shown that for the special choice $\gamma=2$, $\alpha=0$, $\lambda \rho=-1$ the symmetry algebra is isomorphic to $\lie^{(6)}$ of \eqref{Heat-Symm}. The formula \eqref{6-dim-c2n} is at our disposal for the vector fields of the symmetry algebra. We shall compute only two of them for illustration. Using the following quantities
$$I=-\frac{\sqrt{2}}{\rho}x^{-1}, \quad J=\frac{\sqrt{2}}{\rho}x^{-1}\left(\rho ^2 x^2 -x-\beta \right), \quad IJ=\frac{2 \beta }{\rho ^2 x^2 }+\frac{2}{\rho ^2 x }-2,\quad \kappa=\frac{\beta}{2},$$ we immediately find (after performing time reflection $t\to -t$)
\begin{eqnarray*}
  v_4 &=& \frac{\rho}{\sqrt{2}}e^{-\beta t}[x^2 \gen x+x u\gen u], \\
  v_5 &=& \frac{\rho}{\sqrt{2}}e^{\beta t}\left[x^2 \gen x-\left(\frac{2 \beta }{\rho ^2 x }+\frac{2}{\rho ^2  }-x\right)u\gen u\right],
\end{eqnarray*}
which, up to a factor of $\rho/\sqrt{2}$, are exactly $G_3$, $G_4$ of \cite{SinkalaLeachOHar2008}. Of course, there is no need to transform the generators to construct the point transformation to the heat equation as was done in \cite{SinkalaLeachOHar2008}. It suffices to use the general transformation formula \eqref{ToHeatTr2} with $(q_2, q_1, q_0)=(c_2, c_1, c_0)$.
Finally, we take $\gamma=3/2$. Eq. \eqref{symcond} implies that
$$\alpha=0, \quad \lambda=0, \quad c_1=0, \quad \mu\ne 0$$ and $c_2$, $c_0$, $\mu$ can be solved
$$c_2=-\frac{\beta^2}{16}, \quad c_0=\beta,  \quad \mu=-\frac{1}{4\rho^2}(3\rho^2+32), \quad \kappa=\frac{\beta}{4}.$$
The fact that $c_1=0$, $\mu\ne 0$ implies that we have $\lie_4$ symmetry. Two additional symmetries $v_2$, $v_3$ are easily computed from \eqref{4-dim-c2n}  using the quantities
$$\sqrt{a}I=-2x,  \quad I=-\frac{2\sqrt{2}}{\rho}x^{-1/2}, \quad IJ=\frac{4 \beta }{\rho ^2 x }-3,$$ which lead to
\begin{eqnarray*}
  v_2 &=& -e^{-\beta t}[\gen t +\beta x \gen x], \\
  v_3 &=& e^{\beta t}[-\gen t+\beta x \gen x+2\beta\left(1-\frac{\beta }{x \rho ^2}\right)u\gen u],
\end{eqnarray*}
which coincides with $G_4, G_3$ of \cite{SinkalaLeachOHar2008}, up to a constant multiple.
Similarly, based on the expression \eqref{symcond} all other cases of $\gamma$ can be treated without any difficulty. It is clear that different values of $\gamma$ will impose different restrictions on the parameters $\alpha, \beta, \lambda, \rho$ of the equation.
\end{exa}

\begin{exa}[Ref. \cite{Lehnigk1989, Pesz2002}]
$$u_t+[(4x^2 \ln x)u_x+(Ax+B \ln x)u]_x=0,  \quad x\in[0,1].$$
Up to time reversal $t\to -t$, this is a Fokker-Planck equation with coefficients
$$p=(4x^2 \ln x), \quad q=Ax+B \ln x.$$ We have
$$\sqrt{p}I=2x{\ln x},  \quad J=-\left[\frac{d}{dx}(\sqrt{p})+\frac{q}{\sqrt{p}}\right],  \quad K=\frac{1}{2}J'-\frac{1}{4}J^2+q'.$$ Computation of $K$ gives
$$K=-\frac{1}{16}(A^2-4)\frac{1}{\ln x}-\frac{1}{16}(B-4)^2\ln x-\frac{1}{8}(A-4)(B-4).$$
For $A=\pm 2$, the equation has a $\lie_6$ symmetry and is equivalent to the heat equation. The transformation formula is easy to obtain from \eqref{ToHeatTr2}. Otherwise, the symmetry algebra would have to be of $\lie_4$ type. For instance, for the special case $A=B=0$, using the relations $IJ=-(1+2\ln x)$ and putting $c_2=-1$, $c_0=-2$ in \eqref{4-dim-c2n} we find
\begin{equation}
\begin{split}
&v_1=\gen t,\\
&v_2=e^{4t}(\gen t+4x \ln x \gen x),\\
&v_3=e^{-4t}[\gen t - 4x \ln x \gen x + 4(1+\ln x)u\gen u],\\
&v_4=u\gen u,
\end{split}
\end{equation}
as being the Lie symmetry algebra of $u_t+(4x^2\ln x u_x)_x=0$.

Using our criterion it can also be seen that
$$u_t=[b_1 x^2 u_x+(b_2 x+b_3 x \ln x)u]_x$$ has a $\lie_6$ symmetry algebra and hence there is a point transformation (formula \eqref{ToHeatTr2}) reducing this equation to the heat equation.
\end{exa}

\begin{rmk}
All equations of the type \eqref{mainLPE} arising in the literature can be analysed when the equation can have nontrivial symmetry algebra $\lie_4$ or $\lie_6$  by checking a simple invariant condition only. In the presence of $\lie_6$ symmetry the transformation formula to the heat equation is available in the general form. Furthermore, the corresponding  generators are given in some basis. See  Subsection \ref{structure}  for the structure of the algebras.
\end{rmk}

\subsection{Schr\"odinger equation}
All that has been said so far   about equivalence and symmetries of the parabolic equations Eq. \eqref{LPE} can be repeated for the complex linear parabolic equations of the form
\begin{equation}\label{complex-LPE}
  iu_t=a(x,t)u_{xx}+b(x,t)u_x+c(x,t)u, \quad x\in \Omega\subseteq\mathbb{R},\quad t>0,
\end{equation}
where $a$, $b$, $c$ and $u\in \mathbb{C}$. Indeed, any equation in the class with four- or six-dimensional symmetry algebra can be transformed to
\begin{equation}\label{2nd-cano-complex}
  iu_t=u_{xx}+\frac{\mu}{x^2}u,\quad \mu\ne 0,
\end{equation}
or to the free (with zero potential) Schr\"odinger equation
\begin{equation}\label{free-Schr-eq}
  iu_t=u_{xx}
\end{equation}
by an equivalence transformation of the form
\begin{equation}\label{EqTr-complex}
  \tilde{t}=T(t), \quad \tilde{x}=X(x,t),\quad u=Q(x,t)\tilde{u}(\tilde{x},\tilde{t}),
\end{equation}
where $Q$ is a complex-valued function.  The symmetry algebra of \eqref{free-Schr-eq} is known as the Schr\"odinger algebra. The linear heat and Schr\"odinger equations have isomorphic symmetry algebras. That is why, the symmetry algebra of the heat equation is sometimes called the Schr\"odinger algebra.
For a Schr\"odinger equation with a complex potential $V(x,t)=V_1(x,t)+iV_2(x,t)$ equivalence transformation is given by
\begin{equation}\label{EqTr-pot-Schr-eq}
  \tilde{t}=T(t), \quad \tilde{x}=\sqrt{\dot{T}}x+\rho(t), \quad u=R(t)\exp[i\omega(x,t)]\tilde{u}(\tilde{x},\tilde{t}),
\end{equation}
where
\begin{equation*}
\begin{split}
  &\omega(x,t)=-\frac{1}{2\sqrt{\dot{T}}}\left[\frac{d}{dt}\sqrt{\dot{T}}\;\frac{x^2}{2}+\dot{\rho}x+\chi(t)\right],\\
  & \tilde{V}=\frac{1}{\dot{T}}\left[V_1-\omega_t-\omega_x^2+i\left(V_2+\frac{\dot{R}}{R}+\omega_{xx}\right)\right]
\end{split}
\end{equation*}
with $\chi$ being an arbitrary function.

Results on the real line will also remain true for the multidimensional generalizations $iu_t=\Delta u$ of \eqref{complex-LPE}.
For completeness, we give the standard basis of the $n$-dimensional Schr\"odinger algebra
\begin{equation}\label{Schr}
  \schr(n)=\lr{T, \quad D, \quad C, \quad J_{kl}, \quad P_k \quad B_k, \quad M}, \quad k\ne l, \quad k,l=1,2,\ldots n,
\end{equation}
where
\begin{equation*}
  \begin{split}
      & T=\gen t,  \quad  D=2t\gen t+\sum_{a=1}^{n}x_{k} \gen{x_{k}}-\frac{n}{2}E,  \quad C=t^2\gen t+t \sum_{k=1}^{n}x_{k} \gen{x_{k}}-\frac{1}{4}(|x|^2M+2nt E), \\
       & J_{ab}=x_{k} \gen{x_{l}}-x_{l} \gen{x_{k}}, \quad P_k=\gen{x_{k}}, \quad B_k=t\gen{x_{k}}-\frac{x_k}{2}M, \quad M=i(u\gen u-u^{*}\partial_{u^{*}} ),        \end{split}
\end{equation*}
where $E=u\gen u+u^{*}\partial_{u^{*}}$. The algebra $\schr(3)$ was first obtained by Niederer \cite{Niederer1972}.

In case of Schr\"odinger equation with potential $V$, the Lie point symmetries are analogous to those \eqref{Heat-Symm-n}-\eqref{potential-eq-n} of the heat equation  and are generated by (see \cite{Niederer1974} and \cite{LeviTempestaWinternitz2004} for a sketchy proof)
$$\mathbf{v}=\tau(t)\gen t+\sum_{k=1}^{n}\xi_k(x,t)\gen{x_{k}}+i\phi(x,t)(u\gen u-u^{*}\gen {u^{*}}),$$
where
$$\xi_k=\frac{1}{2}\dot{\tau}x_k+\sum_{l=1}^{n}a_{kl}x_l+\rho_k(t), \quad
\phi(x,t)=-\frac{1}{8}\ddot{\tau}|x|^2-\frac{1}{2}\sum_{k=1}^{n}\dot{\rho}_k(t)x_k+
\sigma(t)+i\left[\frac{n\dot{\tau}}{4}+b\right].$$ Here $a_{kl}=-a_{lk}$ and $b$ are real constants. The real functions $\tau(t)$, $\rho_k(t)$, $\sigma(t)$ and the real constants $a_{kl}$ depend on the potential and satisfy
\begin{equation}\label{potential-Schr-n}
  \tau V_t+(\mathbf{\xi}\cdot{\mathbf{\nabla}_x})V+\dot{\tau}V+\frac{1}{8}\dddot{\tau}|x|^2
+\frac{1}{2}\sum_{k=1}^{n}\ddot{\rho}_k(t)x_k-\dot{\sigma}=0.
\end{equation}
The time-dependent Kepler potential of the form $V(r,t)=\tau(t)^{-1/2}r^{-1}$, $\tau(t)=\tau_2 t^2+\tau_1 t+\tau_0$ leads to the symmetry group that is the direct product of $\Or(n)$ with  one-dimensional subgroups of $\SL(2,\mathbb{R})$. For the static Kepler potential $V=\tau_0/r$, the point symmetry group is $G=\Or(n) \times \mathbb{R}$ (gauge symmetry is ignored).

Purely symmetric potentials
of the form $V(r)=A r^2+B+C/r^2$ allow nontrivial Lie point symmetries of dimension $(n^2-n+8)/2$ or $(n^2+3n+8)/2$ depending on whether $C\ne 0$ or $C=0$. For $A=B=0$, $C\ne 0$ we have the inverse square potential.  The fact that the symmetry group of the $n$-dimensional harmonic oscillator equation ($V(r)=A r^2$) is isomorphic to the  corresponding group of the free Schr\"odinger equation ($V=0$) was first established in \cite{Niederer1973}. A complete classification of time-independent potentials in dimensions $n=1,2,3$ was performed in \cite{Boyer1974}.

An attempt towards equivalence transformations for the  Schr\"odinger equation in $(n + 1)$-dimensions can be found in \cite{Schulze-Halberg2006}.

\section{Group-invariant and fundamental solutions}\label{Section4}
In order to be able to give a neat classification of solutions invariant under $\lie_6$ or $\lie_4$ of the Subsections \eqref{Subsection6-dim} and \eqref{Subsection4-dim} we need a classification of one-dimensional subalgebras under the adjoint transformations of the Lie  group of the symmetry algebras. We already know subalgebras of the heat equation symmetry algebra (up to isomorphism) $\Sl(2,\mathbb{R})\vartriangleright \nil(3)$ and  $\Sl(2,\mathbb{R})\oplus \mathbb{R}$ of the second canonical equation.

For the sake of completeness, we present a list of representatives of the subalgebras of the heat algebra given in \cite{Winternitz1989}:
\begin{equation}\label{subalgebras-lie-6}
  \begin{split}
     & \lr{D}, \quad  \lr{T}, \quad \lr{C+T}, \quad \lr{D+aM},    \\
     & \lr{T+B}, \quad  \lr{T+M}, \quad \lr{T- M},   \\
     & \lr{C+T+a M}, \quad \lr{P}, \quad \lr{M}, \quad a\in \mathbb{R}, \quad a\ne 0.
   \end{split}
\end{equation}
A similar classification can also be found in \cite{Olver1991}. We recall that the first attempt for a  subalgebra classification is due to Weisner \cite{Weisner1959}. Symmetry reductions and invariant solutions of the heat equation can be found in \cite{Olver1991}.
More specifically, the heat kernels for $u_t=\Delta u$ can be derived easily by either transforming the constant solution $u_0=1$ by nontrivial symmetries like dilational, Galilei, projective transformations  or by seeking the associated group-invariant solutions.

A list of  representatives of subalgebras of $\Sl(2,\mathbb{R})\oplus \mathbb{R}$ is given by
\begin{equation}\label{subalgebras-lie-4}
  \lr{T+a M}, \quad  \lr{D+a M}, \quad \lr{C+T+a M}, \quad \lr{M}, \quad a\in \mathbb{R}.
\end{equation}
We know that any equation of the form \eqref{mainLPE}  with a four-dimensional symmetry group can be transformed to its canonical form \eqref{canonical2}. It would be useful to discuss solutions invariant under the representative subalgebras \eqref{subalgebras-lie-4}. We recall that the group-invariant solutions of the heat equation satisfy parabolic cylinder equation (which can be transformed to Hermite or confluent hypergeometric equation) and Airy equation (which can be solved in terms of Bessel functions of index $1/3$), all of which belong to the generalized hypergeometric class.

\subsection{Group-invariant solutions of the second canonical equation}
We present symmetry reductions and invariant solutions of
\begin{equation}\label{2ndcanonical}
  u_t-u_{xx}=\frac{\mu}{x^2}u,  \quad \mu\ne 0
\end{equation}
based on the classification \eqref{subalgebras-lie-4}.
\begin{enumerate}
  \item Subalgebra $\lr{T+a M}$: The invariants are $x$ and $e^{-at}u$ so that we put $u=e^{at}F(x)$ to find invariant solutions. The reduced equation is
      \begin{equation}\label{reduced-1}
        x^2F''+(\mu-a x^2)F=0.
      \end{equation}
      The change of dependent variable $F=x H(x)$ transforms it to
      $$x^2H''+2xH'+(\mu-a x^2)H=0.$$ Setting $a=-\lambda^2$, its solution can be expressed in terms of Bessel functions $$H=x^{-1/2}Z_{\frac{\sqrt{1-4\mu}}{2}}(\lambda x),$$ where $Z$ stands for  Bessel functions depending on the sign of $\lambda$ and on whether the index $\sqrt{1-4\mu}/2$ being an integer or not. For the special choice $\mu=-n(n+1)$, $n\in \mathbb{Z}$ they are elementary. In this case, Eq. \eqref{reduced-1} is known to be Riccati-Bessel equation. Note that for $a=0$, it becomes Euler equation. Finally, we obtain the invariant solution
      $$u(x,t)=e^{-\lambda^2 t} x^{1/2}Z_{\frac{\sqrt{1-4\mu}}{2}}(\lambda x).$$
      The special case $\mu=-2$, $\lambda=1$ leads to the elementary solution
      $$u(x,t)=e^{-t}\sqrt{x}[c_1 J_{3/2}(\lambda x)+c_2 J_{-3/2}(\lambda x)],$$ which can be expressed as
      $$u=e^{-t}\left[c_1\left(\cos x-\frac{\sin x}{x}\right)+c_2 \left(\sin x+\frac{\cos x}{x}\right)\right].$$
  \item Subalgebra $\lr{D+a M}$: Using the invariants $z=xt^{-1/2}$, $t^{-a}u$ we set
  $u=t^a F(z)$ for the invariant solution which leads to the reduced equation
  $$F''+\frac{z}{2}F'+\left(\frac{\mu}{z^2}-a\right)F=0.$$ The change of independent variable $y=z^2/2$ reduces it   to
  $$2yF''(y)+(y+1)F'(y)+\left(\frac{\mu}{2y}-a\right)F(y)=0,$$
  which is a generalized hypergeometric equation and can be transformed to the confluent hypergeometric one
  $$2yH''+(2+\sqrt{1-4\mu}-y)H'-\frac{1}{4}(3+4a+\sqrt{1-4\mu})H=0$$
  by a linear change of dependent variable
  $F(y)=e^{-y/2}y^{1/4(1+\sqrt{1-4\mu})}H(y)$. The general solution of this equation  in terms of Kummer's confluent hypergeometric function $M$ is
  $$H(y)=c_1\; M(\alpha,\gamma,\frac{y}{2})+c_2\; y^{1-\gamma}M(\alpha-\gamma+1,2-\gamma,\frac{y}{2}),$$ where $\alpha=(3+4a+\sqrt{1-4\mu})/4$, $\gamma=1+(\sqrt{1-4\mu})/2$ and $\alpha,\gamma\ne 0, \pm1, \pm2,\ldots$ If $\gamma\in \mathbb{Z}$ both solutions coincide. Note that when $\alpha=-n$, $n\in \mathbb{N}$  the first solution becomes a generalized Laguerre polynomial $L_n^{\sqrt{1-4\mu}/2}(y)$ ($L_n^{\alpha}$ being defined by $L_n^{\alpha}(x)=\binom{n+\alpha}{n}M(-n,\alpha+1,x$). So the corresponding scale-invariant solutions become elementary. The simplest of them are obtained for $\alpha=0$ via the transformation
  $$F=e^{-y/2}y^{-(a+1/2)}c_0,$$ where $c_0$ is a constant and $\mu=-2(a+1)(2a+1)$. So we have obtained an elementary solution of the form
  \begin{equation}\label{elem-sol-1}
    u=c_0 x^{-(2a+1)}t^{2a+1/2}\exp[-\frac{x^2}{4t}].
  \end{equation}

  For $\mu=-2$, $a=-3/2$ it becomes $$u=c_0 x^2 t^{-5/2}\exp[-\frac{x^2}{4t}].$$

  Reduction to a confluent equation is not unique. We can also apply the change of variable $F=e^{-y/2}y^{1/4(1-\sqrt{1-4\mu})}H(y)$. Now we have $\alpha=(3+4a-\sqrt{1-4\mu})/4$. Another elementary solution is
  $$u=c_0 x^{-(2a+1)}t^{1/2}\exp[-\frac{x^2}{4t}]$$ with $\mu=-2(a+1)(2a+1)$ ($\alpha=0$).

  \item Subalgebra $\lr{C+T+a M}$: We look for an invariant solution of the form
  $$u=(1+t^2)^{-1/4}\exp\left[-\frac{tx^2}{4(1+t^2)}+\mu \arctan t\right]F(z), \quad z=x^2(1+t^2)^{-1}.$$ The reduced equation is a generalized hypergeometric equation
  $$F''+\frac{2}{4z}F'+\frac{z^2-4az+4\mu}{(4z)^2}F=0.$$  A further transformation can be applied to transform it to a confluent hypergeometric equation. The transformation  $$F=e^{-i z/4}\eta^{1/4(1-i\sqrt{4\mu-1})}H(z)$$
  reduces it to
  $$8z H''+4i(\sqrt{4\mu-1}-2i-z)H'+[\sqrt{4\mu-1}-2i-2a]H=0,$$ which can be taken to standard form by a scaling transformation $z=-2i s$. Invariant solutions are expressed in terms of imaginary arguments.

\end{enumerate}
The subalgebra $M$ does not give an invariant solution. It is obvious that the invariant solutions for the second canonical form can also be expressed in terms of solutions of confluent hypergeometric functions. They can include elementary solutions for the special choices of the parameters $\mu$ figuring in the equation, and $a$ in the subalgebras. Applying Lie point transformations from the full symmetry group will produce more general invariant solutions.

As an application we can consider the radial heat equation
\begin{equation}\label{radial-heat-k}
  u_t=u_{xx}+\frac{k}{x}u_x, \quad k=n-1, \quad k\ne 0,2.
\end{equation}
The transformation $u=x^{-k/2}\tilde{u}$ maps it to the second canonical form with $\mu=-k(k-2)/4$. So all invariant solutions of the initial equation can be obtained from the solutions discussed above. Let us remark that invariant solutions already appeared in \cite{Gungor2002}.  A basis for symmetry is
\begin{equation}\label{basis-radial-heat-k}
  T=\gen t, \quad D=2t\gen t+x\gen x, \quad C=t^2\gen t+xt \gen x-\frac{1}{4}[x^2+2(k+1)t]u\gen u, \quad M=u\gen u.
\end{equation}
In case $k=2$ it is reduced to the heat equation with  a larger symmetry algebra. Indeed, we have simply $K=0$, so $c_2=c_1=c_0=0$. We put $I=x$, $J=-2/x$ in \eqref{6-dim-c20} and find the basis
\begin{equation}\label{basis-radial-heat-k=2}
  \begin{split}
      & v_1=\gen t, \quad v_2=2t\gen t+x\gen x, \quad v_3=t^2\gen t+xt\gen x-\frac{1}{4}(x^2+6t)u\gen u,\\
       & v_4=t\gen x-\left(\frac{x}{2}+\frac{t}{x}\right)u\gen u,\quad
        v_5=\gen x-\frac{1}{x}u\gen u,\quad
        v_6=u\gen u.
   \end{split}
\end{equation}
We refer to \cite{PopovychKunzingerIvanova2008a} for a group classification of $u_t=u_{xx}+b(x)u_x$.

Putting $\mu=-k(k-2)/4$, $a=-(k+2)/4$ and $c_0=(4\pi)^{-(k+1)/2}$ in Eq. \eqref{elem-sol-1} gives the elementary solution (heat kernel) of the radial heat equation \eqref{radial-heat-k} as
\begin{equation}\label{elem-sol-2}
  u=(4\pi t)^{-\frac{1}{2}(k+1)}\exp[-\frac{x^2}{4t}],
\end{equation}
which is nothing else but the fundamental solution $K(x,t,0)$ satisfying $K(x,t,0)=\delta(x)$ as $t\to 0^{+}$. Using an argument introduced by Craddock and Dooly \cite{CraddockDooley2001}, $K(x,t,0)$ can be translated to $K(t,x,y)$ with $\lim_{t\to 0^+ }K(x,t,y)=\delta(x-y)$ by picking an appropriate translation group (at least up to a change of basis). For example, when $k=2$, the vector fields $v_1$ and $v_5$ commute and $v_5$ turns out to be an appropriate element for this purpose. Applying the translation  transformation (also changing $u$) generated by $v_5$ of \eqref{basis-radial-heat-k=2}
  $$\tilde{t}=t  ,\quad \tilde{x}=x-y,  \quad u=\left(1-\frac{y}{x}\right)\tilde{u}(\tilde{x},\tilde{t}),$$ where $y$ is the group parameter, to $\tilde{K}(x,t)\triangleq K(x,t,0)$ we find
  $$K(x,t,y)=\left(1-\frac{y}{x}\right)\tilde{K}(x-y,t)=(4\pi t)^{-3/2}\left(1-\frac{y}{x}\right)\exp[-\frac{(x-y)^2}{4t}].$$
We remark that the  solution \eqref{elem-sol-2} can also be obtained mapping a constant solution $u=c_0$ of the equation by means of nontrivial Lie point symmetries, for example, by   $C$ of \eqref{basis-radial-heat-k}.

\subsection{Heat polynomials}\label{subsection-heat-poly}
Consider the Cauchy problem for the heat equation with the initial data $u(x,0)=x^n$ on $x\in \mathbb{R}$, a homogeneous polynomial of degree $n$. The solutions of this problem on $\mathbb{R}\times (0,\infty)$ can be expressed as a power series in $t$ as
\begin{equation}\label{heat-poly}
  u_n(x,t)=\sum_{j=0}^{\infty}a_j(x)t^j,
\end{equation}
where $a_0(x)=x^n$, $a_j(x)=j^{-1}\partial_x^{2j} x^n$, $j\geq 1$. They are called heat polynomials and can be formally represented  by $u_n=e^{t\partial_x^2}x^n$. They are explicitly expressed by the formula
$$u_n(x,t)=n!\sum_{j=0}^{[\frac{n}{2}]}\frac{x^{n-2j}}{(n-2j)!}\frac{t^j}{j!}.$$ The first five polynomials are given by
$$u_0=1, \quad u_1=x, \quad u_2=x^2+2t, \quad u_3=x^3+6xt, \quad u_4=x^4+12x^2t+12t^2.$$
The heat polynomials are  parabolically-homogenous of degree $n$ in the sense
\begin{equation}\label{homogeneity}
  u_n(\lambda x,\lambda^2 t)=\lambda^n u_n(x,t)
\end{equation}
for all $\lambda>0$.

The heat polynomials can also be generated from a result relating solutions of the heat equation $u_t=u_{xx}$ which was observed in \cite{FushchychShtelenSerovPopovych1992} in the context of Q-conditional symmetry of the heat equation. If $f=f(x,t)$ is a solution of the heat equation  then $Q=f\gen t-f_x \gen x$ is a Q-conditional symmetry of the equation. This leads to the fact that the solution $f$ is related to another solution $u(x,t)$ obtained from integrating the exact equation $f_x dt+f dx=0$ in the form $u(x,t)=C$, a constant. The simple solution $f=1$ generates the heat polynomials with the slightly different initial condition $u(x,0)=x^n/n!$ via a recursive process. A study of heat polynomials from the Lie point symmetry point of view was presented in \cite{Leach2006}.

The heat polynomials are closely related to the Hermite polynomials by $H_n(x)=u_n(x,-1)$ (an Appell sequence). They can also  be recovered in the symmetry context. The above Cauchy problem is left invariant by the dilation generator $D_n=x \gen x+2t\gen t+nu\gen u$, $n\in \mathbb{N}_0$, in other words scale-invariant solutions as being solutions of the PDE
$$x \frac{\partial u_n}{\partial x}+2t\frac{\partial u_n}{\partial t}=nu_n,$$ in other words the solutions of the functional equation \eqref{homogeneity} should produce the heat polynomials. They are of the form
$$u_n=t^{n/2}F(\eta), \quad \eta=\frac{x}{\sqrt{4t}}.$$ When  substituted into the backward heat equation $u_t+u_{xx}=0$, $F$ satisfies a Hermite polynomial equation
$$F''-2\eta F'+2n F=0.$$  We can switch to the forward heat equation by the time reversal $t\to -t$ and obtain the heat polynomials in terms of Hermite polynomials
$$u_n=(-t)^{n/2}{H_n(\eta}),  \quad \eta=\frac{x}{\sqrt{-4t}}.$$
An alternative way to defining heat polynomials is done through the solution
$$U_{z}(x,t)=e^{zx+z^2t}=e^{t\partial_x^2}e^{zx},  \quad U_{z}(x,0)=e^{zx}$$ obtained from the Galilei action $e^{-2z B}$ on the constant solution $u_0=1$, where $z$ is the group parameter. The heat polynomials are also defined by the coefficients of $z^n/n!$ in the expansion of
$$U_{z}(x,t)=\sum_{n=0}^{\infty}\frac{z^n}{n!}u_n(x,t).$$
With this definition, the connection with the Hermite polynomials could also be revealed by picking $\xi=z\sqrt{-t}$, $y=x/(2\sqrt{-t})$ in the generating function formula for the Hermite polynomials
$$e^{2\xi y-\xi^2}=\sum_{n=0}^{\infty}\frac{\xi^n}{n!}H_n(y).$$
The integral representation of $u_n$ is given by
$$u_n(x,t)=\int_\mathbb{R}K(x-y,t)y^ndy.$$
Using the Appell transformation \eqref{Appell-heat} we can produce another set of solutions to the heat equation (called the set associated with the set of heat polynomials \cite{RosenbloomWidder1959})
$$v_n(x,t)=K(x,t)u_n(\frac{x}{t},-\frac{1}{t}),  \quad n\in \mathbb{N}_0,\quad t\in(0,\infty),$$ where $K(x,t)$ is the heat kernel at the origin. By the homogeneity of $u_n$, namely by \eqref{homogeneity} we have
$$v_n=K(x,t)u_n(x,-t)t^{-n}.$$ This set also can be defined as the coefficient of $z^n/n!$ in the expansion
$$K(x-2z,t)=\sum_{n=0}^{\infty}\frac{z^n}{n!}v_n(x,t).$$ There is an analogue formula for $v_n$ in terms of Hermite polynomials
$$v_n=t^{-n/2}K(x,t)H_n(\eta),  \quad \eta=\frac{x}{\sqrt{4t}}.$$
We recall that the sets $\curl{K(x,t)u_n(x,-t)}_{n=0}^{\infty}$  and $\curl{v_n(x,t)}_{n=0}^{\infty}$ are complete in $L(\mathbb{R})$ or $L^2(\mathbb{R})$ for $t>0$.

Another interesting aspect of these two sets of functions is that they are biorthogonal on $\mathbb{R}$
$$\int_{\mathbb{R}}u_m(x,-t)v_n(x,t)dx=2^n n! \delta_{mn},$$ which readily follows from the orthogonality property of the Hermite polynomials
$$\int_{\mathbb{R}} e^{-x^2}H_m(x)H_n(x)dx=c_n \delta_{mn}, \quad c_n=2^n n! \sqrt{\pi}.$$

For $n$-dimensional version of the heat equation, we can think of generalized heat polynomials as quasi-homogeneous polynomials of degree $k$  defined by
$$ u_k(x,t)=e^{t \Delta} P(x)=\sum_{j=0}^{[\frac{k}{2}]}\frac{\Delta^j P(x)}{j!}t^j,  \quad (x,t)\in \mathbb{R}^n\times (0,\infty),  $$
as $C^2$ solutions to the heat equation with the initial condition $u(x,0)=P(x)$, where $P$ is a homogeneous polynomial of degree $k$. The classical multidimensional heat polynomials are obtained by restricting the initial data to the monomial $P(x)=x_1^{k_1}x_2^{k_2}\ldots x_n^{k_n}$, $k_1+k_2\ldots +k_n=k$.

\subsection{Fundamental solutions and applications to initial-boundary value problems}
As we have already encountered before, Lie group theory combined with equivalence transformations can be effectively used to solve boundary-value problems, in particular initial-value  problems like constructing heat kernels (also known as Gaussian kernel, fundamental or source solution, propagator of the diffusion,  or diffusion kernel and even Green's function) for general parabolic (evolution) type equations. We have seen examples of group-invariant solutions which are also fundamental solutions. Craddock and his coworkers (for example, see \cite{CraddockDooley2001, CraddockPlaten2004, CraddockLennox2007, Craddock2009} for scalar parabolic equations  and \cite{CraddockLennox2012, KangQu2012} for parabolic systems) have developed new techniques as an ingenious synthesis of Lie point symmetries and the theory of classical integral transforms (Laplace, Fourier, Mellin, Hankel and others) and have successfully applied them to a number of problems.
Here we intend to discuss some basic ideas on applying symmetry group methods. A distribution $K$ is called a fundamental solution of a linear PDE
if it solves the associated Cauchy problem with the initial condition $K(x,0,y)=\delta(x-y)$, where $\delta$ is the Dirac distribution. If $K(x,t,y)$ is a heat kernel, then the solution of
the Cauchy problem  for the parabolic equations
\begin{equation}\label{TILPE}
\begin{split}
    u_t &=Hu=a(x)u_{xx}+b(x)u_x+c(x)u, \quad a\ne 0,\quad  \quad x\in \Omega \subseteq\mathbb{R},\quad t>0\\
 u(x,0) &=\phi(x)
\end{split}
\end{equation}
is given by the integral formula
\begin{equation}\label{intrep}
  u(x,t)=\int_{\Omega}K(x,t,y)\phi(y)dy,
\end{equation}
provided that the integral converges. For constant coefficient equations the convolution integral
\begin{equation}\label{convol}
  u(x,t)=\int_{\Omega}K(x-y,t,0)\phi(y)dy
\end{equation}
converges to a solution of \eqref{TILPE}, for example if $\phi\in L^2(\Omega)$.

For a backward version of equation \eqref{TILPE} where $t\in [0,T)$ we replace the initial condition for the fundamental solution by a terminal condition $K(x,T,y)=\delta(x-y)$. The transition from the backward to forward form is made possible by the simple change of variable $s=T-t$.
\paragraph{Invariance of boundary and integral conditions:}
One method to solve the Cauchy problem \eqref{TILPE} is to find a group-invariant solution using a subgroup of the full symmetry group that also leave invariant  the boundary (or initial) conditions. Recall that a general element of the nontrivial symmetry algebra of \eqref{TILPE} can be represented by
\begin{equation}\label{gen}
  \mathbf{v}=\tau(t)\gen t+\xi(x,t)\gen x+\phi(x,t)u\gen u,
\end{equation}
where
$\xi$, $\phi$ are defined by \eqref{xi}, \eqref{phi} and $\tau$, $\rho$, $\sigma$ are solutions of either \eqref{splitcomp4} or \eqref{splitcomp6}. So $\mathbf{v}$ can linearly depend on 4 or 6 arbitrary constants
\begin{equation}\label{gen-element}
  \mathbf{v}=\sum_{i=1}^n c_i v_i,  \quad n\in\curl{4,6}.
\end{equation}
\begin{exa}
Invariance of the initial condition $u(x,0)=\delta(x-x_0)$, $x\in \mathbb{R}$, where $\delta$ is the Dirac measure weighted at $x_0$.

First of all, invariance of the boundaries $t=0$ and $x=x_0$   implies that infinitesimally we should have $\mathbf{v}(t)=0$ when $t=0$ and $\mathbf{v}(x-x_0)=0$ when $x=x_0$, which, in terms of the coefficients of   $\mathbf{v}$, are
\begin{equation}\label{boundary-tau-xi}
  \tau(0)=0,  \quad \xi(x_0,0)=0.
\end{equation}
Invariance of the initial condition $u(x,0)=\delta(x-x_0)$ should be interpreted in the distribution sense. We require the relation
$$ \int_{\mathbb{R}}u(x,0)\varphi(x)dx=\varphi(x_0),$$ for every test function $\varphi\in \mathcal{D}(\mathbb{R})=C_0^{\infty}(\mathbb{R})$, to be preserved by the group action of $\mathbf{v}$.  Infinitesimally, this amounts to
$$\left[\mathbf{v}(F)+F\xi_x\right]\Big|_{x=x_0, t=0}=0,  \quad F=u(x,t)\varphi(x).$$ From this and the second condition of  \eqref{boundary-tau-xi}  a further condition
\begin{equation}\label{boundary-phi}
  \phi(x_0,0)=-\xi_x(x_0,0)
\end{equation}
follows. The subgroups satisfying the above three conditions \eqref{boundary-tau-xi}-\eqref{boundary-phi} can be applied to obtain  special group-invariant solutions which are supposed to produce heat kernels. In the special case when the symmetry algebra is six-dimensional ($n=6$ in \eqref{gen-element}), these conditions impose the following
\begin{equation}\label{sub-conds}
  \tau(0)=0, \quad \rho(0)=-\frac{1}{2}\dot{\tau}(0)I_0,  \quad \sigma(0)=\frac{1}{8}\ddot{\tau}(0)I_0^2+\frac{1}{2}\dot{\rho}(0)I_0-\frac{1}{2}\dot{\tau}(0),
\end{equation}
where $I_0\triangleq I(x_0)$. This means that the subalgebra that will produce heat kernel can be at most three dimensional. If the symmetry algebra is four-dimensional, then the above conditions boil down to
\begin{equation}\label{sub-conds-2}
  \tau(0)=\dot{\tau}(0)=0,  \quad \sigma(0)=\frac{1}{8}\ddot{\tau}(0)I_0^2.
\end{equation}
The corresponding subalgebra is one-dimensional. For a class of equations in potential form in two space dimensions, a systematic study of fundamentals solutions based on symmetry appeared in \cite{LaurenceWang2005}.  Very recently, the same procedure has been applied to a 2-dimensional ultra-parabolic Fokker-Planck-Kolmogorov equation in \cite{KovalenkoStogniyTertychnyi2014}.
\end{exa}
\begin{exa}The subalgebra $\lr{J_{kl},B_k}$ of \eqref{Lie-n} leaves the
$n$-dimensional heat equation and the condition  $u(x,0)=\delta(x)$, $x\in \mathbb{R}^{n}$. The solution invariant under this subalgebra should have the form
$$u(x,t)=e^{-\frac{|x|^2}{4t}}F(t),  \quad t>0.$$ On substituting into the equation we find the reduced ODE as
$2tF'+nF=0$ with the general solution $F(t)=c_0t^{-n/2}$. The requirement $\int_{\mathbb{R}^{n}} u(x,t)dx=1$ or the limit \eqref{distrib-limit} leads to the heat kernel $u=K(x,t)=(4\pi t)^{-n/2}\exp(-|x|^2/(4t))$ with source at the origin.

Similarly, let us require invariance under the subalgebra $\lr{J_{kl}, C}$ of \eqref{Lie-n}. Invariants of the rotation group are $|x|$, $t$ and $u$. The invariant solution is found by solving the characteristic equation associated to $C$. It is given by
$$u=t^{-n/2}e^{-\frac{|x|^2}{4t}}F(\eta),  \quad \eta=\frac{|x|}{t},$$ where $F$ is a linear function in $\eta$. Again, $K(x,t)$ is recovered by the special choice of the arbitrary constants in $F$. We leave it to the reader to see how it can also be obtained by using the subalgebra $ \lr{J_{kl}, D+aM}$.
\end{exa}

On the other hand, given that the equation is equivalent to the heat equation, equivalence group can be used to obtain the fundamental solution from \eqref{fundsol}.
An alternative method for heat kernels  was introduced by  Craddock and his collaborators. The idea is simply to relate a  nontrivial  solution obtained from  a stationary solution by symmetry transformation to the integral representation \eqref{intrep} of the solution. We give here a brief description of the idea. Let the action of a symmetry vector field $\mathbf{v}$ on solutions be given in the form
$$\tilde{u}(x,t)=e^{\varepsilon \mathbf{v}}u(x,t)=\mu(x,t,\varepsilon)u(a_1(x,t,\varepsilon),a_2(x,t,\varepsilon))$$   for some known functions $\mu$, $a_1$ and $a_2$. If a stationary solution $u_0(x)$ is applied to it, we get $$U_{\varepsilon}(x,t)=e^{\varepsilon \mathbf{v}}u_0(x)=\mu(x,t,\varepsilon)u_0(a_1(x,t,\varepsilon)).$$ We require the solution $U_{\varepsilon}(x,t)$ to satisfy
\begin{equation}\label{int-eq}
  U_{\varepsilon}(x,t)=\int_{\Omega}U_{\varepsilon}(y,0)K(x,t,y)dy, \quad \Omega\subseteq\mathbb{R}.
\end{equation}
Therefore, heat kernels then arise as a standard integral transform of the  solution $U_{\varepsilon}(x,t)$ with the integral equation kernel $U_{\varepsilon}(y,0)$. The group parameter $\varepsilon$  plays the role of the integral transform parameter.  Different choices of $u_0$ in general lead to different heat kernels. Observe that if $u_0=1$, then from \eqref{int-eq} it follows that $\int_{\Omega}K(x,t,y)dy=1$ since $U_{0}(x,t)=1$.   This method has been applied to many interesting diffusion processes in a series of papers \cite{CraddockDooley2001, CraddockPlaten2004, CraddockLennox2007, Craddock2009, CraddockKonstandatosLennox2009, CraddockLennox2009, BaldeauxPlaten2013}. Theoretical basis of this method and other related ones are found in these works.

We would like to conclude this subsection by driving heat kernels for some equations and compare the above-mentioned methods.
\begin{exa}
Calculation of the fundamental solution of the second canonical form \eqref{2ndcanonical} with $x\geq 0$.
\begin{itemize}
  \item Fundamental solution as group-invariant solution:
   Under the conditions \eqref{sub-conds-2} the original symmetry algebra \eqref{canonical-2-basis} is reduced to the one-dimensional subalgebra generated by
      $$\mathbf{v}=t^2\gen t+xt\gen x-\frac{1}{4}(x^2-y^2+2t)u\gen u.$$ The group invariant-solution should have the form
      $$u=t^{-1/2}e^{-\frac{1}{4t}(x^2+y^2)}F(z),  \quad z=\frac{x}{t},$$ and $F$ satisfies the ODE
      $$z^2F''+(\mu-\frac{y^2}{4}z^2)F=0.$$ The solution of the ODE is expressed in terms of the modified Bessel function of the first kind (the other independent  solution  is discarded  because the  Bessel function $I_{-\nu}$ is not integrable near zero for $\nu\geq 1$)
      $$F=c_0(y) z^{1/2}I_{\nu}\left(\frac{yz}{2}\right), \quad \nu=\frac{\sqrt{1-4\mu}}{2}.$$
      We replace $I_{\nu}$ by $K_{\nu}$ if $\nu$ is an integer.
      Finally we recover the fundamental solution
      $$K(x,t,y)=c_0\frac{\sqrt{x}}{t}e^{-\frac{1}{4t}(x^2+y^2)}I_{\nu}\left(\frac{xy}{2t}\right), \quad \nu=\frac{\sqrt{1-4\mu}}{2}$$
      up to the normalization constant $\displaystyle c_0(y)=\frac{\sqrt{y}}{2}$ that will come from the condition
      $\lim_{t\to 0^{+}}\int_{0}^{\infty}K(x,t,y)dx=1$.

      \item
We shall use the projective symmetry
$$C=t^2\gen t+xt\gen x-\frac{1}{4}(x^2+2t)u\gen u,$$
to construct the fundamental solution from the stationary (time independent) solution $$u_0(x)=x^{\varrho_1}, \quad \varrho_1=\frac{1}{2}(1-\sqrt{1-4\mu}).$$ Observe that $u_0(x)=1$ when $\mu\to 0$. The other stationary solution $u_0(x)=x^{\varrho_2}$ for $\varrho_2=\frac{1}{2}(1+\sqrt{1-4\mu})$ becomes nonconstant when $\mu\to 0$. This solution is disposed of because the fundamental solution that  comes from this will not reduce to the necessary transition probability density with the property
$\int_0^{\infty}K(x,t,y)dy=1$.

Exponentiating $C$ gives the solution transformation formula
\begin{equation}
  \begin{split}
       \tilde{t}&=\frac{t}{1+\lambda t},  \quad \tilde{x}=\frac{x}{1+\lambda t}, \\
        \tilde{u}(x,t)&=(1+\lambda t)^{-1/2}\exp\curl{\frac{-\lambda x^2}{4(1+\lambda t)}} u(\tilde{x},\tilde{t}),
   \end{split}
\end{equation}
where $\lambda$ is the group parameter.
The stationary solution is mapped to the characteristic solution
$$U_{\lambda}(x,t)=(1+\lambda t)^{-1/2-\varrho_1} x^{\varrho_1}\exp\curl{\frac{-\lambda x^2}{4(1+\lambda t)}}$$ with $U_{\lambda}(x,0)=x^{\varrho_1}\exp\curl{-\lambda x^2/4}$. We rewrite it in the form
$$U_{\lambda}(x,t)=t^{-\alpha}(\lambda+t^{-1})^{-\alpha}x^{\varrho_1}\exp\curl{-\frac{x^2}{4t}}
\exp\curl{\frac{x^2}{4t^2(\lambda+t^{-1})}},$$
where $\alpha=1-\frac{\sqrt{1-4\mu}}{2}$.
We substitute it to the integral equation \eqref{int-eq}
$$\int_0^{\infty}y^{\rho_1}\exp[-\frac{\lambda}{4}y^2]K(x,t,y)dy=U_{\lambda}(x,t),$$ which is converted to
$$\int_0^{\infty}e^{-\lambda z}\hat{K}(x,t,2\sqrt{z})dz=U_{\lambda}(x,t)$$ by the substitution $z=y^2/4$. Here
$\hat{K}(x,t,2\sqrt{z})=2^{\rho_1}z^{(\rho_1-1)/2}K(x,t,2\sqrt{z})$.
$K$ is recovered by inverting $U_{\lambda}(x,t)$ from $\lambda$ to $z=y^2/4$
\begin{equation}\label{fundsol-can}
  K(x,t,y)=\left(\frac{\sqrt{xy}}{2t}\right)\exp\curl{-\frac{x^2+y^2}{4t}}I_{\frac{\sqrt{1-4\mu}}{2}}
\left(\frac{xy}{2t}\right)
\end{equation}
using the Laplace transform inversion formula from $\lambda$ to $z$ \cite{Sneddon1972}
\begin{equation}\label{Lap-tr}
  \mathscr{L}^{-1}\curl{(\lambda+\beta)^{-\alpha}\exp\left(\frac{a}{\lambda+\beta}\right)}=e^{-\beta z}\left(\frac{z}{a}\right)^{(\alpha-1)/2}I_{\alpha-1}(2\sqrt{a z}), \quad \alpha>0,
\end{equation}
where $I_{\alpha-1}$ denotes the modified Bessel function of  the first kind of order $\alpha-1$.
The fundamental solution becomes elementary when $\mu=-N/2$, $N\in \mathbb{Z}$. In the limit $\mu\to 0$ the fundamental solution of the heat equation is recovered when the linear combination $I_{1/2}(y)+I_{- 1/2}(y)=\sqrt{2/(\pi y)}e^y$ of the elementary functions $I_{\pm 1/2}$ is taken into account.

We can transform \eqref{fundsol-can} by ${K}_n(x,t,y)=(x/y)^{(1-n)/2}K(x,t,y)$, $n\ne 1,3$ to obtain the fundamental solution of the $n$-dimensional radial heat equation \eqref{radial-heat-k} in the form
\begin{equation}\label{fundsol-radial}
  \tilde{K}_n(x,t,y)=\frac{1}{2t}x^{1-\frac{n}{2}}y^{n/2}e^{-\frac{(x^2+y^2)}{4t}}I_{\frac{n}{2}-1}\left(\frac{xy}{2t}\right).
\end{equation}
\end{itemize}
\end{exa}

\begin{exa}
The forward Black-Scholes equation
\begin{equation}\label{F-Black-Scholes}
 u_t=\frac{1}{2}\sigma^2x^2u_{xx}+rxu_x-ru=0,  \quad \sigma\ne 0.
\end{equation}
We have discussed symmetries of the backward Black-Scholes equation (see Example \eqref{ex-BBS}). A basis is easily obtained   by putting $c_2=c_1=0$, $c_0=-\frac{1}{2\sigma^2}\ell-r$, $\ell=r-\sigma^2/2$, $I=(\sqrt{2}/\sigma)\ln x$, $J=-(\sqrt{2}/\sigma)\ell$ in \eqref{6-dim-c20}:
\begin{equation}\label{basis-FBS}
  \begin{split}
      & v_1=\gen t, \quad v_2=x\gen x, \quad v_3=2t\gen t+(\ln x-\ell t)x\gen x-2rtu\gen u, \\
      &  v_4=-\sigma^2 xt \gen x+(\ln x+\ell t)u\gen u, \\
      &  v_5=2\sigma^2 t^2\gen t+2\sigma^2 xt \ln x\gen x-[(\ln x+\ell t)^2+2\sigma^2rt^2+\sigma^2 t]u\gen u,\quad v_6=u\gen u.
   \end{split}
\end{equation}
Below we present three different methods for the derivation of the heat kernel:
\begin{itemize}
  \item Heat kernel as group invariant solution: The conditions \eqref{sub-conds} for $x_0=y$ reduce the algebra to the subalgebra spanned by
\begin{equation}\label{subalg}
  \begin{split}
      & X_1=2t\gen t+[-\ell t+\ln \frac{x}{y}]x\gen x+(2rt-1)u\gen u, \\
       & X_2=-\sigma^2 xt \gen x+[\ell t+\ln \frac{x}{y}]u\gen u, \\
       & X_3=2\sigma^2 t^2\gen t+2\sigma^2 xt\ln x\gen x-[(\ln x+\ell t)^2+2r\sigma^2t^2+\sigma^2 t-(\ln y)^2]u\gen u.
   \end{split}
\end{equation}
Solution invariant under the subalgebra generated by the subalgebra  $\lr{X_1, X_2, X_3}$ should have the form
\begin{equation}\label{group-inv-kernel}
  \begin{split}
      & K(x,t,y)=c_0(y)t^{-1/2}x^{-A(t,y)}e^{-B(x,t,y)}, \quad t>0, \\
       & A=\frac{\ell t-\ln y}{\sigma^2 t}, \quad B=\frac{(\ln x)^2+(\ln y)^2}{2\sigma^2 t}+\left(\frac{\ell^2}{2\sigma^2}+r\right)t,
   \end{split}
\end{equation}
where  $c_0(y)$ is to be determined from the initial condition $u(x,0)=\delta(x-y)$.
This is the heat kernel up to a nonzero multiplicative constant. In \cite{GazizovIbragimov1998}, $c_0$ was computed to be
$$c_0(y)=\frac{1}{\sigma y\sqrt{2\pi}}\exp[\frac{\ell}{\sigma^2}\ln y]$$ by means of some manipulations of distributional limits.
  \item For the sake of completeness we include the integral transform technique applied to \eqref{F-Black-Scholes} in Ref. \cite{CraddockPlaten2004}. The vector field  $v_4$ exponentiates to give the group transformation of the solution $u$
      $$\tilde{u}_{\varepsilon}(x,t)=x^{-\varepsilon}\exp\left[(\frac{\sigma^2}{2}\varepsilon^2-\mu \varepsilon)t\right]u(xe^{-\sigma^2\varepsilon t},t),$$ where $\varepsilon$ is the group parameter. We apply it to the stationary solution  $u_0=x$ and obtain a new solution
      $$U_{\varepsilon}(x,t)=\tilde{u}_{\varepsilon}=x^{1-\varepsilon}
      \exp{\curl{\frac{\varepsilon}{2}[(\varepsilon-1)\sigma^2-2r] t}},\quad U_{\varepsilon}(y,0)=y^{1-\varepsilon}.$$  Substituting $U_{\varepsilon}(x,t)$ into the integral equation \eqref{int-eq} and performing the change of parameter $\varepsilon=2-s$ gives
      $$\exp\curl{{\frac{1}{2}(s-2)[(s-1)\sigma^2+2r]t}}x^{s-1}=\int_{0}^{\infty}y^{s-1}K(x,t,y)dy,$$
     which is recognized as a Mellin integral equation.  $K$ is obtained  by an inverse Mellin transform of the left side from $s$ to $y$
     \begin{equation}\label{BS-kernel}
       K(x,t,y)=\frac{e^{-rt}}{\sigma y\sqrt{2\pi t}}\exp\curl{-\frac{[\ln \frac{x}{y}+\ell t]^2}{2\sigma^2 t}},
     \end{equation}
      which coincides with \eqref{group-inv-kernel}. The inversion can be performed using the connection of the Mellin transform with the Fourier transform
     $$\mathscr{M}\curl{f(x)}(s)=\sqrt{2\pi}\mathscr{F}\curl{f(e^{-x})}(is).$$ We have taken the Fourier transform $\mathscr{F}(f(x))=\frac{1}{\sqrt{2\pi}}\int_{-\infty}^{\infty}f(x)e^{iyx}dy$.
  \item The last method is to transform the heat kernel of the standard heat equation by means of the Black-Scoles transformation \eqref{BS-transform} discussed before for the backward Black-Scholes equation. The standard heat kernel
      $$\tilde{K}(\tilde{x},t,0)=\frac{1}{\sqrt{4\pi t}}\exp[-\frac{\tilde{x}^2}{4t}]$$ transforms into \eqref{BS-kernel} by the transformation \eqref{BS-transform} with the choice  $\omega_1=0$, $\omega_0=-\frac{\sqrt{2}}{\sigma}\ln y$, $\mathcal{D}=\ell$ and
      $$\tilde{u}(\tilde{x},\tilde{t})=\tilde{K}(\tilde{x},\tilde{t},0)=\tilde{K}(\frac{\sqrt{2}}{\sigma}\ln\frac{x}{y},t,0), \quad \nu_0=\sqrt{2}(y\sigma)^{-1}y^{l/\sigma^2}.$$
\end{itemize}
\end{exa}

\begin{exa}
The Ornstein-Uhlenbeck process (Fokker-Planck version).
\begin{equation}\label{OU-eq}
  u_t=\frac{\sigma^2}{2}u_{xx}+(bxu)_x, \quad b>0, \quad \sigma>0.
\end{equation}
From example \eqref{ex-arb-drift} we already know that this equation is equivalent to the heat equation. Its symmetry algebra is obtained by taking $c_2=-b^2/4$, ($\kappa=b/2$) $c_1=0$, $c_0=b/2$ in \eqref{6-dim-c2n}
\begin{equation}\label{basis-FBS-2}
  \begin{split}
      & v_1=\gen t, \quad v_2=e^{2bt}\gen t+bxe^{2bt}\gen x-\frac{2b^2}{\sigma^2}e^{2bt}x^2u\gen u,\\
      & v_3=e^{-2bt}\gen t-be^{-2bt}x\gen x+be^{-2bt}u\gen u, \quad v_4=\sigma e^{bt}\gen x-\frac{2b}{\sigma}e^{bt}xu\gen u,\\
      & v_5=e^{-bt}\gen x,  \quad v_6=u\gen u.
\end{split}
\end{equation}
The well-known heat kernel of Eq. \eqref{OU-eq} is constructed by Fourier transform technique. Here we apply Lie symmetry methods.
\begin{itemize}
  \item Heat kernel as group invariant solution: We use the subalgebra obtained from the conditions \eqref{sub-conds}
\begin{equation}\label{subalg-2}
  \begin{split}
      & X_1=(1-e^{-2bt})\gen t+be^{-2bt}(x-e^{bt}y)\gen x-be^{-2bt} u\gen u, \\
       & X_2=(e^{2bt}-e^{-2bt})\gen t+be^{-2bt}[(1+e^{4bt})x-2e^{-2bt}y]\gen x+\\
       & -b[1+e^{-2bt}+\frac{2b}{\sigma^2}e^{2bt}x^2-\frac{2b}{\sigma^2}y^2]u\gen u, \\
       & X_3=(e^{bt}-e^{-bt})\gen x+\frac{2b}{\sigma^2}(e^{bt}x-y)u\gen u.
   \end{split}
\end{equation}
There is a single invariant $I(x,t,y)$ of the subalgebra $\lr{X_1,X_2,X_3}$ ($n-r=3-2=1$). It is found from solving the first order system of PDEs $X_1(I)=0$, $X_2(I)=0$, $X_3(I)=0$. From the last equation we find $I=F(t,\omega)$, where
$$\omega=u\exp\left[\frac{b}{\sigma^2}\frac{(x-ye^{-bt})^2}{1-e^{-2bt}}\right].$$ The first equation gives $I=F(\zeta)$, $\zeta=\omega(1-e^{-2bt})^{1/2}$ and the second one is automatically  satisfied. The invariant solution should have the form
$$u=M(1-e^{-2bt})^{-1/2}\exp\left[-\frac{b}{\sigma^2}\frac{(x-ye^{-bt})^2}{1-e^{-2bt}}\right],$$  where $M$ is a constant to be determined from the condition $\int u(x,t,y)dx=1$ and is given by $M=\displaystyle (\frac{b}{\pi \sigma^2})^{1/2}$. The corresponding solution is the fundamental solution.
\item Eq. \eqref{OU-eq} admits the Gaussian distribution solution with mean zero and variance $\sigma^2/2b$
$$u_0(x)=\sqrt{\frac{b}{\pi \sigma^2}}e^{-\frac{b x^2}{\sigma^2}}.$$   We look at the transformation of  a solution $u(x,t)$ under the group generated by $v_3$. It is given by
$$U_{\varepsilon}(x,t)=(1-2b \varepsilon e^{-2bt})^{-1/2}u\left(\frac{x}{\sqrt{1-2b \varepsilon e^{-2bt}}},\frac{1}{2b}\ln(e^{2bt}-2b \varepsilon)\right).$$ $U_{\varepsilon}(x,t)$ is a solution whenever $u(x,t)$ is. The action  on the stationary solution with the choice $2b \varepsilon=1$ induces the fundamental solution $K(x,t,0)$ at $y=0$
$$K(x,t,0)=\sqrt{\frac{b}{\pi \sigma^2}}(1- e^{-2bt})^{-1/2}\exp\left[-\frac{b}{\sigma^2}\frac{x^2}{1-e^{-2bt}}\right].$$ We can use the translational symmetry $x\to x-e^{bt}y$ generated by $v_5$ to recover the full fundamental solution  $$K(x,t,y)=K(x-e^{bt}y,t,0)=\sqrt{\frac{b}{\pi \sigma^2}}(1- e^{-2bt})^{-1/2}\exp\left[-\frac{b}{\sigma^2}\frac{(x-e^{bt}y)^2}{1-e^{-2bt}}\right],$$
which is the transition probability density (see \cite{Pavliotis2014}). Note that in the limit $t\to \infty$, $K(x,t,y)$ tends to the stationary solution $u_0(x)$ (equilibrium density).
\item Transformation to the heat equation:

On using formula \eqref{ToHeatTr2}, we find the transformation
$$\tilde{t}=\frac{1}{2b}(e^{2bt}-1), \quad \tilde{x}=\frac{\sqrt{2}}{\sigma}(e^{bt}x-y), \quad u=c_0 e^{bt}\tilde{u}(\tilde{x},\tilde{t})$$
mapping \eqref{OU-eq} to the heat equation for $\tilde{u}$.
The standard heat kernel
$$\tilde{u}=\tilde{K}(\tilde{x},\tilde{t},0)=\frac{1}{\sqrt{4\pi \tilde{t}}}e^{-\frac{\tilde{x}^2}{4\tilde{t}}}$$
is transformed to the heat kernel $K(x,t,y)=c_0e^{bt}\tilde{K}(\tilde{x},\tilde{t},0)$ for \eqref{OU-eq}. From the initial condition
$$K(x,0,y)=c_0\tilde{K}(\tilde{x}(x,0),0,0)=c_0\delta\left(\frac{\sqrt{2}}{\sigma}(x-y)\right)=
c_0\frac{\sigma}{\sqrt{2}}\delta(x-y),$$ we find $\displaystyle c_0=\sqrt{\frac{2}{\sigma^2}}$. We have used the property $\delta(\lambda x)=\lambda^{-1}\delta(x)$, $\lambda>0$.
\end{itemize}
\end{exa}

\begin{exa}
Two-dimensional time-dependent heat equation
\begin{equation}\label{vcheat}
  u_t=u_{xx}+t^{-2}u_{yy}, \quad t>0.
\end{equation}
A basis for the symmetry algebra of \eqref{vcheat} is given by
\begin{equation}\label{Xi}
\begin{split}
  &X_1=\gen t-\frac{y}{t}\gen y-(\frac{y^2}{4}-\frac{1}{2t})u\gen u, \\
  &X_2=2t\gen t+x\gen x-y\gen y, \\
  &X_3=t^2\gen t+xt\gen x-(\frac{x^2}{4}+\frac{t}{2})u\gen u,\\
  &X_4=ty \gen x-\frac{x}{t}\gen y-\frac{xy}{2}u\gen u,\\
  &X_5=t\gen x-\frac{x}{2}u\gen u,\\
  &X_6=\frac{1}{t} \gen y+\frac{y}{2}u\gen u,\\
  &X_7=\gen x,\quad X_8=\gen y,  \quad X_9=u\gen u.
\end{split}
\end{equation}
The Lie symmetry algebra $\lie$ is identified as a 9-dimensional algebra with the structure
\begin{equation}\label{symalg}
\lie=  \Sl(2,\mathbb{R}) \oplus \So(2) \vartriangleright \nil(5)\sim \langle X_1, X_2, X_3\rangle \oplus \langle X_4\rangle
\vartriangleright \langle X_5, X_6, X_7, X_8, X_9 \rangle,
\end{equation}
where $\nil(5)$ is the 5-dimensional Heisenberg algebra with center $X_9$.
We see that the Lie symmetry algebra of \eqref{vcheat} is isomorphic to that of the standard heat equation
\begin{equation}\label{standardpsd}
  u_t=u_{xx}+u_{yy}.
\end{equation}
This suggests that there should be a local point transformation relating these two equations.

We shall make use of the symmetries of \eqref{vcheat} to construct the heat kernel   of \eqref{vcheat}. The heat kernel $K(x,y,t\,;x_0,y_0,t_0)$ at the point $(x_0,y_0,t_0)$ is a distribution function in the whole $(x,y)$ plane satisfying the initial condition
\begin{equation}\label{initial}
\lim_{t\to t_0} K(x,y,t\,;x_0,y_0,t_0)=\delta(x-x_0)\delta(y-y_0).
\end{equation}
We look for a solution invariant under the symmetry algebra leaving invariant the above initial condition.
We consider the general element of the symmetry algebra
$$X=\tau\gen t+\xi\gen x+\eta \gen y+\phi u\gen u=\sum_{i=1}^n a_i X_i,$$ where $a_i$ are constants. The invariance requirement of \eqref{initial} amounts to the following four conditions on the coefficients of the infinitesimal symmetry generator
$$\tau(t_0)=0, \quad \xi(x_0, y_0,t_0)=0, \quad \eta(x_0, y_0,t_0)=0$$ and
$$\tau'(t_0)+\xi_x(x_0, y_0,t_0)+\eta_y(x_0, y_0,t_0)+\phi(x_0, y_0,t_0)=0.$$
These conditions will reduce the dimension of the symmetry algebra from nine to five.
Applying them to our equation provides the following relations
\begin{equation}\label{conds}
\begin{split}
& a_1+2 a_2 t_0+a_3 t_0^2=0, \quad a_7 + a_5 t_0 + a_2 x_0 + a_3 t_0 x_0 + a_4 t_0 y_0=0, \\
& a_8 -\frac{ a_6}{t_0} - \frac{a_4 x_0}{t_0} - a_2 y_0 - \frac{a_1 y_0}{t_0}=0,  \\
&  a_9 - \frac{a_1}{2 t_0} + \frac{a_3 t_0}{2} - \frac{1}{2} a_5 x_0 - \frac{1}{4}a_3 x_0^2 - \frac{1}{2} a_6 y_0 - \frac{1}{2} a_4 x_0 y_0 -\frac{1}{4} a_1 y_0^2=0.
\end{split}
\end{equation}
Solving this system for the coefficients $\{a_2, a_7, a_8, a_9\}$ in terms of the remaining coefficients  $\{a_1, a_3, a_4, a_5, a_6\}$ and substituting in $X$ we find a 5-dimensional subalgebra spanned by the operators
\begin{eqnarray*}
  Y_1&=&-(t-t_0)\gen t-\frac{x-x_0}{2}\gen x+ \frac{t(y+y_0)-2t_0y}{2t} \gen y+\frac{1}{2}\left(\frac{t_0}{t}+1-2t_0(y^2-y_0^2)\right)u\gen u, \\
   Y_2&=& t(t-t_0)\gen t+\left[xt-\frac{1}{2}(x+x_0)t_0\right]\gen x+\frac{t_0(y-y_0)}{2}\gen y-\frac{1}{2}\left[(t+t_0)+\frac{1}{2}(x^2+x_0^2)\right]u\gen u,\\
   Y_3&=&(ty-t_0 y_0)\gen x-\left(\frac{x}{t}-\frac{x_0}{t_0}\right)\gen y-\frac{1}{2}(xy-x_0y_0)u\gen u,  \\
   Y_4&=&(t-t_0)\gen t-\frac{1}{2}(x-x_0)u\gen u, \\
   Y_5&=&  \left(-\frac{1}{t}+\frac{1}{t_0}\right)\gen y-\frac{1}{2}(y-y_0)u\gen u.
\end{eqnarray*}

Invariants of the subalgebra  $\langle Y_1, Y_2, Y_3, Y_4, Y_5\rangle$ are found by solving the system of first order PDEs
$$Y_i I(x,y,t,u)=0,  \quad i=1,2,3,4,5.$$
We find that there is a single invariant
$$
I=\frac{t-t_0}{\sqrt{t}}\exp\left[\frac{(x-x_0)^2}{4(t-t_0)}+\frac{tt_0(y-y_0)^2}{4(t-t_0)}\right]u.$$
The invariant heat kernel will be obtained from $I=C$, where $C$ is a constant yet to be determined.
This gives the invariant solution
\begin{equation}
\tilde{K}(x,y,t\,;x_0,y_0,t_0)=C \frac{\sqrt{t}}{t-t_0}\exp\Bigl[-\frac{(x-x_0)^2}{t-t_0}-\frac{t_0 t (y-y_0)^2}{t-t_0}\Bigr].
\end{equation}
From the initial condition \eqref{initial} it follows that $C=\sqrt{t_0}/(4\pi)$. This implies that we have found the
kernel of  \eqref{vcheat}
\begin{equation}\label{heatkernel}
 K(x,y,t\,;x_0,y_0,t_0)=\frac{1}{4\pi} \frac{\sqrt{t_0t}}{t-t_0}\exp\Bigl[-\frac{(x-x_0)^2}{4(t-t_0)}-
\frac{t_0 t (y-y_0)^2}{4(t-t_0)}\Bigr]
\end{equation}
with the property
\begin{equation}
\iint_{\mathbb{R}^2}K(x,y,t\,;x_0,y_0,t_0) dxdy=1.
\end{equation}

Another formulation of the kernel  can be obtained using the transformation
$$u= \sqrt{t}e^{\frac{ty^2}{4}}v(x,z,t), \quad z=ty$$ taking \eqref{vcheat} to $v_t=v_{xx}+u_{zz}$, which is known to have the heat kernel
\begin{equation}
K_0(x,z,t\,;x_0,z_0,t_0)=\frac{1}{4\pi (t-t_0)}\exp\left[-\frac{(x-x_0)^2+(z-z_0)^2}{4(t-t_0)}\right].
\end{equation}
Hence the  kernel  will have the form
$$K=K_0\frac{\sqrt{t}}{4\pi(t-t_0)}e^{\frac{ty^2}{4}}\exp\left[-\frac{(x-x_0)^2+(z-z_0)^2}{4(t-t_0)}\right].$$
Now observing the relation
$$\exp[\frac{t y^2}{4} ]\exp[\frac{-(yt-y_0t_0)^2}{4(t-t_0)}]=\exp[\frac{t_0y_0^2}{4}]\exp\left[-\frac{t_0t(y-y_0)^2}{4(t-t_0)}\right]$$
and choosing the nonzero constant $K_0=\sqrt{t_0}\exp[-(t_0y_0^2)/4]$ we recover the heat kernel \eqref{heatkernel}.

\end{exa}

\section{Summary}\label{Section5}
The ubiquitous  linear parabolic partial differential equations of the form \eqref{LPE} are very significant both form mathematical  and physical point of view. There exists an enormous amount of literature devoted to their applications and solution methods. A brief overview of the existing literature is given from  equivalence and symmetry standpoint.
The main motivation of the present paper is to give a unified formulation of transformation and symmetry group properties of this general class.  Two  issues have been the main focus of this paper. One is to establish criteria for the equations under study to be transformable to one of the two canonical forms for which nontrivial symmetries (four or six dimensional other than superposition principle) exist and in particular to give a general transformation formula in case when they are transformable to the heat equation.  The other is  related to the first one: to know when the equations possess nontrivial Lie symmetry  algebras. Some attempts towards answering these questions can be found in the literature. We reconsider these issues in a new approach.  Two criteria based on the knowledge of invariant of the given class of equations are proposed  and applied to several examples.
Lie symmetry properties of all known special cases  that already appeared in the literature  can be immediately recovered by our approach. Of course, this is the case for any other equation  within the class \eqref{mainLPE}.  As part of applications of our results we also discuss methods for constructing fundamental solutions and illustrate with examples. The equivalence group is put to good use to derive the heat kernels for heat equations with linear and quadratic potential (Mehler's formula) in 1+1 and higher dimensions as well.

\subsection*{Acknowledgments}The author thanks P. Winternitz for useful discussions.
%

\end{document}